%
\documentclass[runningheads]{llncs}
\usepackage[T1]{fontenc}
%

%
%
\usepackage{amsfonts}
\usepackage{parskip}
\usepackage{enumerate}
\usepackage{amsmath}
\usepackage{latexsym}
\usepackage{amssymb}
\usepackage{mathrsfs}
\usepackage{cases}
\usepackage{subcaption} 
\usepackage{subfloat}
\usepackage{graphicx}
\usepackage[hidelinks,bookmarksopen=true]{hyperref}
\usepackage{bookmark}

\usepackage{floatrow}

\begin{document}
%
\title{Ring Signature from Bonsai Tree: How to Preserve the Long-Term Anonymity}
%
%

\author{Mingxing Hu \and Yunhong Zhou}

%

 \institute{Shanghai Jiao Tong University, Shanghai, China\\
 \email{\{mxhu2018,zhouyunhong\}@sjtu.edu.cn}}%

\maketitle              
\begin{abstract}
Signer-anonymity is the central feature of ring signatures, which enable a user to sign messages on behalf of an arbitrary set of users, called the ring, without revealing exactly which member of the ring actually generated the signature. Strong and long-term signer-anonymity is a reassuring guarantee for users who are hesitant to leak a secret, especially if the consequences of identification are dire in certain scenarios such as whistleblowing. The notion of \textit{unconditional anonymity}, which protects signer-anonymity even against an infinitely powerful adversary, is considered for ring signatures that aim to achieve long-term signer-anonymity. However, the existing lattice-based works that consider the unconditional anonymity notion did not strictly capture the security requirements imposed in practice, this leads to a realistic attack on signer-anonymity.

In this paper, we present a realistic attack on the unconditional anonymity of ring signatures, and formalize the unconditional anonymity model to strictly capture it. We then propose a lattice-based ring signature construction with unconditional anonymity by leveraging bonsai tree mechanism. Finally, we prove the security in the standard model and demonstrate the unconditional anonymity through both theoretical proof and practical experiments.

\keywords{Ring signatures  \and Unconditional anonymity \and Bonsai tree  \and Lattices.}
\end{abstract}
\section{Introduction}\label{Introduction}\pdfbookmark[1]{Introduction}{Introduction}
\textit{Ring signatures}, originally introduced by Rivest et al.~\cite{RST01}, enable a signer to hide in a \textit{ring} of potential signers, without requiring any central coordination, as the rings can be formed in a spontaneous manner. Ring signatures have many natural applications, such as the ability to leak secrets while staying anonymous within a certain set, i.e., whistleblowing~\cite{RST01}, and recently certain types could be used as a building block for cryptocurrencies~\cite{EZS+19,TKS+19,TSS+18}. Ring signatures have been extensively studied in various flavors, including RSA~\cite{DKN+04}, symmetric-key~\cite{DRS18,KKW18}, bilinear pairings~\cite{BDR15,CWL+06,MS17}, (non)interactive proof systems~\cite{BDH+19,BKM06}, and lattices~\cite{BLO18,CGH+21,EZS+19,LAZ19,MBB+13,PS19}. 


\textbf{The infinitely powerful adversary's ability was underestimated.} 
The notion of unconditional anonymity was initially presented by Rivest et al.~\cite{RST01}, which claimed that the signer-anonymity is preserved even when confronted with an \textit{infinitely powerful adversary}, i.e., an adversary with unlimited computational resources and time. Although this work lacks a formal model to capture the unconditional anonymity notion, an exact description was given: ``\textit{Even an infinitely powerful adversary with access to an unbounded number of chosen-message signatures produced by the same ring member still cannot guess his identity}''. There is one word in this description that needs to be highlighted, that is ``\emph{unbounded} number of chosen-message signatures produced by the \emph{same} ring member''. However, it was not given serious consideration in the subsequent works~\cite{LAZ19,TKS+19,TSS+18,TSS+20,WZZ18,ZHX+16}.

On the anonymity model, these works did not sufficiently capture the adversary's ability, i.e., allow the adversary to obtain an arbitrary number of chosen-message signatures produced by the \emph{same} ring member. On the construction, these works employed rejection sampling method \cite{Lyu09,Lyu12} or Gaussian sampling method \cite{GPV08,MP12} to generate signatures. These sampling algorithms have the property of producing signatures that are statistically close to uniform, meaning that even an unbounded adversary cannot distinguishing the real signer's identity from the signatures distribution, as the distinguishing advantage is only negligible. Based on this fact, these works claimed the unconditional anonymity is hold in their proofs. However, the point that the infinitely powerful adversary can exploit is the negligible distinguishing advantage on these signature tuples, which is \emph{all-but-one} (cf. Section \ref{OurMethods} for details). The all-but-one negligible distinguishing advantage on the signatures can accumulate to be non-negligible when the adversary collected a sufficient amounts of signature instances with respect to the same ring member. We confirmed this observation through a theoretical attack strategy and experimental results as shown in Section \ref{OurMethods}.





\textbf{Relying on random oracle heuristics isn't reassuring.} 
Almost all the ring signatures with unconditional anonymity prefer to rely on the ideal random distribution of a cryptographic hash function that is modeled as a random oracle. 
However, there is also a negligible distinguishing advantage with the uniform distribution when the random oracle is instantiated by a hash function such as SHA-3~\cite{NISTSHA3} in practice, and hence, it also suffers the threat from the infinitely powerful adversary. 

Prior works~\cite{CGH04,DOP05} have pointed out that the proof in the random oracle model could lead to insecure schemes when the random oracle is implemented in practical scenarios. Quantum random oracle~\cite{BDF+11} can be seen as a stronger notion of random oracle, but as shown in~\cite{ES20}, the security holds in the quantum random oracle does not imply security in the standard model. Chatterjee et al. \cite{CGH+21} formalized the security models for ring signatures in quantum setting, attempting to capture adversaries with quantum access to the signer, but as pointed by \cite{CCL+22}, it behaves differently from ordinary signatures, and it is unclear if their models are as strong as the standard security notion when restricted to the classical world. And recently, Branco et al. \cite{BDW22} presented a novel ring signature in the standard model, explained why their work cannot rely on random oracle, and introduced the ramifications of relying on random oracles in their construction. Therefore, proving unconditional anonymity in the standard model is more desirable and reassuring for preserving \emph{long-term} signer-anonymity.

\subsection{Our Contributions}\pdfbookmark[2]{Our Contributions}{Our Contributions}

Our contributions are mainly three-folds: A realistic attack on the ring signatures' unconditional anonymity, the unconditional anonymity model for ring signatures, and a new signer-unconditional-anonymous ring signature scheme.

\textbf{A simple yet effective realistic attack on the unconditional anonymity of ring signatures.} The strategy under our realistic attack is simple, that is accumulating the negligible distinguishing advantage to be non-negligible by employing a basic fact of ring signatures and the classic measurement tool -- \emph{statistical distance}. Furthermore, our attack algorithm is effective, we first demonstrate it by presenting a theoretical attack strategy, then the strategy is firmly affirmed by conducting a double experiments. Specifically, due to the statistical distance is defined on the concrete probability for each element of instance, so we first evaluate the average probability that our attack procedure successfully identifying the real signer for a ring signature scheme with a minimum ring size. Based on that, we conduct the second experiment to show the distinguishing advantage on the signature distributions of prior works can be accumulated to a non-negligible level. 



\textbf{Formalize the unconditional anonymity model.} 
We formalize the unconditional anonymity model for ring signatures, which sufficiently considers the ability of an infinitely powerful adversary, rather than simply change the adjective of the adversary from `PPT' to `any' as previous works. Specifically, we redefine the challenge phase so that the adversary is allowed to obtain an arbitrary number of signatures with respect to the same ring member. In this setting, our model captures the realistic attack presented above strictly since it provides a more comprehensive understanding of unconditional anonymity for ring signatures.

\textbf{A new signer-unconditional-anonymous ring signature scheme: New construction method, Standard model, Strong security notions, and Practical feasibility.}

    \begin{itemize}
        \item We introduce the bonsai tree mechanism \cite{CHKP10} to construct unconditional signer-anonymous ring signature scheme. By using insights on the bonsai tree, we are able to eliminate the all-but-one negligible distinguishing advantage for the signature distributions. As a result, the realistic attack mentioned earlier is immune as the attack result cannot be considered as evidence to identify the real signer. We elaborate it in Section \ref{OurMethods}.

        \item We prove the security (unforgeability and unconditional signer-anonymity) in the standard model, i.e., without resorting to any (quantum) random oracle. Due to the ideal distribution of random oracle heuristics cannot be guaranteed in practice, and given the significant power of an unbounded adversary, realizing the unconditional anonymity in the standard model is more reassuring for preserving \emph{long-term} signer-anonymity.

        \item The security notions of our work is strong. On the unforgeability,  the adversary is allowed to corrupt honest ring members within a ring and obtain their signing keys and even the randomness used to generate these keys. This is referred to as unforgeability w.r.t. insider corruption, which is the strongest one among the unforgeability notions presented by Bender et al. \cite{BKM06}. As for anonymity, the signer-anonymity w.r.t. full key exposure (it is also the strongest one among the anonymity notions presented by Bender et al. \cite{BKM06}) is implied by our model.

        \item Our ring signature scheme is tightly secure and has an asymptotically efficient signature size. Specifically, by leveraging the key homomorphic evaluation algorithm, the signature size is effectively reduced in our construction. And in the unforgeability proof, the Gaussian parameter $\sigma_{\textsc{Saml}}$, which dominates the concrete signature size in practice, is sublinear with the ring size rather than linear as the scheme theoretically shows. Additionally, we instantiate our work from compact lattices over rings, and we also provide the implementation results to prove its practical feasibility.

    \end{itemize}

\subsection{Related Work}\label{RelatedWork}\pdfbookmark[2]{Related Work}{Related Work}
We give further details with respect to the related works. Since the ring signatures have been presented, it has been studied extensively from various assumptions. The initial work \cite{RST01}, presented an elegant construction with the only assumption of classical RSA. This work is unconditional signer-anonymity but their proofs only hold in the random oracle model and without formal models.

The ring signature schemes in standard model are proposed concurrently by Chow et al. \cite{CWL+06} and Bender et al. \cite{BKM06}. Specifically, Chow et al. \cite{CWL+06} proposed a ring signature scheme in the standard model from pairings, namely, is not post-quantum secure. As aforementioned, Bender et al. \cite{BKM06} formalized a hierarchy of security models to model the realistic attacks in practice for ring signatures, based on that this work presented a generic ring signature scheme from ZAP proof system in the standard model, and the construction is asymptotically-efficient. Backes et al. \cite{BDH+19} presented an improved work from \cite{BKM06}, which preserved the merits of asymptotically-efficient and in standard model, but did not consider the unconditional anonymity. Bose et al. \cite{BDR15} presented a constant size ring signature scheme in the standard model, i.e., the signature size is independent with the ring size, but the based assumption is not quantum resistant. Malavolta and Schr{\"{o}}der \cite{MS17} proposed an efficient ring signature scheme from bilinear groups and knowledge of exponent assumption, which is the first almost practical ring signature in the standard model, but is not post-quantum secure.

The research on lattice-based ring signatures has drawn more attention from the community since cryptographic primitives when instantiated from lattice assumptions enjoy distinctive merits such as post-quantum secure. The work of Lu et al. \cite{LAZ19} presented a generic ring signature construction based on the framework of Rivest et al. \cite{RST01}, which is almost as efficient
as discrete-log or pairing-based counterparts, but the random oracle is required. Esgin et al. \cite{EZS+19} presented a scalable ring signature, as the building block for their cryptocurrency protocol, which has the shortest signature size at the time, but relied on random oracle. Chatterjee et al. \cite{CGH+21} proposed a compact ring signature scheme and without random oracle, which is taken as the state-of-the-art work, but it did not consider the unconditional anonymity. The work of Melchor et al. \cite{MBB+13} is adapted from a standard signature scheme, which is asymptotically-efficient, and its anonymity model is strong even compare with the classic full key exposure model \cite{BKM06}, but need to rely on random oracle heuristics. Recently, Park and Sealfon \cite{PS19} proposed ring signatures with novel security notions, among that, the ring signature with the security notion of unclaimability is instantiated from standard lattice assumption and in the standard model. However, this work did not consider the unconditional anonymity and not asymptotically-efficient since its signature size grows quadratically with the ring size.

There are also ring signatures which are post-quantum secure but not lattice-based. Katz et al. \cite{KKW18} instantiate the ``MPC-in-the-head'' paradigm by employing MPC methods, which induces a post-quantum secure scheme with asymptotically-faster and concretely efficient, but needs to rely on the random oracle. Derler et al. \cite{DRS18} proposed post-quantum secure accumulators from symmetric-key primitives, based on that, this work constructed an asymptotically-efficient ring signature scheme, but its signer-anonymity is not unconditional and also rely on the random oracle.

\section{Our Methods}\label{OurMethods}\pdfbookmark[1]{Our Methods}{OurMethods}

In this section, we will begin by demonstrating our attack strategy on the unconditional anonymity of existing works, as mentioned earlier. Then we describe how to employ bonsai tree to construct a ring signature scheme that provides unconditional anonymity.

\textbf{Attack strategy.} To describe the attack strategy, it is instructive to abstract a \emph{basic} construction from prior works~\cite{LAZ19,TKS+19,TSS+18,TSS+20,WZZ18,ZHX+16}. Let $\textsf{R}=\{\textsf{vk}^{(1)},\dots,\textsf{vk}^{(N)}\}$ be a ring with size $\lvert\textsf{R}\rvert=N$, each ring member has a pair of verification/signing key $(\textsf{vk}^{(i)} = a^{(i)}, \textsf{sk}^{(i)} = 
t^{(i)})$ for $i\in[N]$. Let $a^{(i)}(\cdot)$ be the hash function, that belongs to some hash family, defined by $a^{(i)}$. Let $H$ be a function for hashing the message $\mu$. The signing procedure is one of the $N$ members who samples ${e}^{(1)},{e}^{(2)},\dots,{e}^{(N)}\leftarrow  D$ from a specific distribution $ D$ such that the following \emph{ring equation} holds
\begin{equation}\label{equ:eq1}
a^{(1)}({e}^{(1)}) +a^{(2)}({e}^{(2)}) +\dots+ a^{(N)}({e}^{(N)}) = H(\mu)
\end{equation}More specifically, the ring equation (\ref{equ:eq1}) is generated in two steps. Assume the index of the signer is $s$ s.t. $s \in[N]$, the signer first selects preimage ${e}^{(i)}\xleftarrow{_\$} D$ randomly for ring members with index $i\in\{N\}\setminus s$, i.e., except the signer himself, then the signer using his signing key $t^{(s)}$ to sample the ${e}^{(s)}$ by specific sampling algorithm such that the ring equation (\ref{equ:eq1}) holds and ${e}^{(s)}$ is distributed statistically close to the distribution $D$. Finally, output $\mathbf{e}:=(e^{(1)},e^{(2)},\cdots,e^{(N)})$ as the signature. The verification procedure checks if the signature $\mathbf{e}$ is well-formed and if the ring equation (\ref{equ:eq1}) holds, accept, otherwise reject the signature.

The signer-anonymity property is preserved by the property of the based sampling algorithm such as the Gaussian sampling algorithm employed in works \cite{LAZ19,WZZ18,ZHX+16} or the rejection sampling algorithm used in works \cite{TKS+19,TSS+18,TSS+20}, which makes the distribution on the $e^{(s)}$ statistically close to $D$. It means the advantage to distinguish that is negligible even for an infinitely powerful adversary. However, the negligible distinguishing advantage is not zero, and the advantage is an \emph{all-but-one} negligible distinguishing advantage\footnote{The all-but-one negligible distinguishing advantage is that all the elements of an instance is randomly selected from uniform distribution (whose distinguishing advantage is zero) except one element is selected by specific algorithms (whose distinguishing advantage is negligible rather than zero). For example, the signature instances given in Figure \ref{fig1} are with all-but-one negligible distinguishing advantage since the elements in the dashed box are all sampled by specific algorithms while the remained are sampled randomly.}, which is still can be exploited by the infinitely powerful adversary. Recall the general definition of ring signatures, the ring is an \emph{ordered set}, it means the signer's index is corresponded to the signature component $e^{(i)}$ of the signature tuple $\mathbf{e}$. 

In this setting, the all-but-one negligible distinguishing advantage all locate on the same position, i.e., the position on index $s$ (cf. the dashed box depicted in Figure \ref{fig1}). Therefore, the adversary can \emph{accumulate} the negligible advantage to be non-negligible when obtained sufficient amounts of instances from the \emph{same} ring member (assume his index in the ring is $s$). As we know, for sufficiently large $n$ and every constant $c>0$, a function $f$ in $n$ is negligible if it vanishes faster than the inverse of any polynomial in $n$ i.e., $f(n)<n^{-c}$, and a function $f$ in $n$ is super-polynomial if $f(n)>n^{c}$. Super-polynomial and negligible functions are reciprocals of each other. Therefore, theoretically if the adversary obtains sufficient amounts of message-signature tuples w.r.t. a \emph{fix} ring member, then the advantage to distinguish the distribution on $e^{(s)}$ from $ D$ can accumulate to be non-negligible i.e., $\geq n^{-c}$.



\begin{figure}
\includegraphics[width=0.8\textwidth]{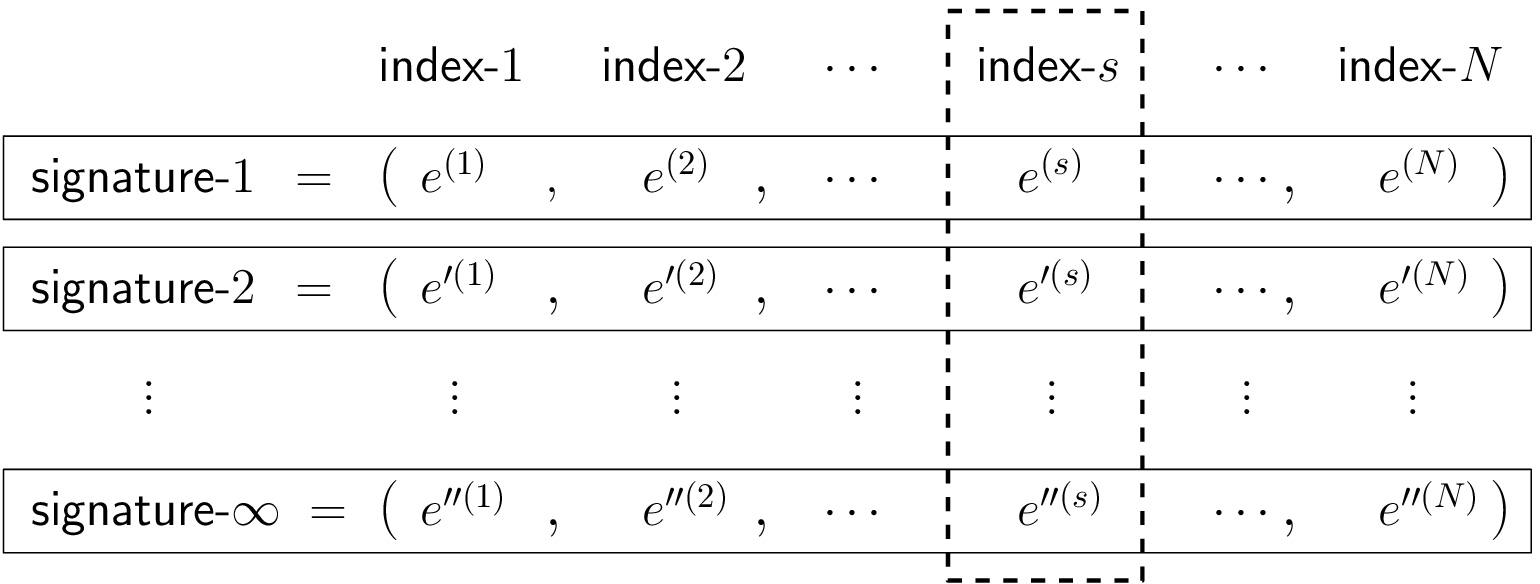}
\caption{Distinguishing advantage accumulation. An infinitely powerful adversary is able to accumulate the negligible distinguishing advantage to be non-negligible when the adversary obtains sufficient amounts of message-signature tuples w.r.t. a \emph{fix} ring member (e.g., the ring member with index $s$ in the dashed box). The symbol $\infty$ represents an uncertain numerical value whose magnitude depends on the number of samples required by the adversary to accumulate a non-negligible advantage.} 
\label{fig1}
\end{figure}

As the prior works~\cite{LAZ19,TKS+19,TSS+18,TSS+20,WZZ18,ZHX+16}, we also use the \emph{statistical distance} to measure how different are two probability distributions. The algorithmic form of our attack strategy is the computation of the statistical distance with respect to two distributions $X$ and $Y$, that is \begin{equation}\Delta(X,Y)=\frac{1}{2}\sum_{x\in S}\big\lvert \Pr[X=x] - \Pr[Y=x]\big\rvert\end{equation}where we assume $S$ as a finite set that $X,Y$ taking values. Below we show that when the size of the set $S$ exceeds certain amount, the non-negligible distinguishing advantage can be  accumulated from the evaluation on statistical distance.







\noindent\textbf{Implementation.} Below we examine the attack strategy in practice. Due to the precisely emulating the ability of an infinitely powerful adversary is impractical, we have to show the experiment with conservative parameters setting. Specifically, we set the $n=2^{32}$ sufficiently large so that the adversary's ability is captured sufficiently, set the constant $c=1$ and ring size $\lvert R\rvert=N=2$ so that the computational cost of the cryptographic operations involved is minimized. 


The rationale under our experiment is to exploit the statistical distance, i.e., probability differences between two different distributions. Therefore, we first evaluate the average probability of the attack procedure correctly pointing out the real signer with the varying instance number $2^{12}$ to $2^{32}$ (cf. Figure \ref{AttackStrategy:a}). The experimental result shows that as the number of instances increases, the average probability exhibits a clear increasing trend towards $1$, which confirms the effectiveness of our attack strategy. This is further supported by our subsequent experiment (cf. Figure \ref{AttackStrategy:b}), where we compare the distinguishing advantage evaluated on two different types of signature instances, the one is from uniform while the other is from the real scheme. Under our parameters setting, the experimental result shows that the distinguishing advantage can be effectively accumulated to non-negligible.

\begin{figure}[!htb]
    \ffigbox[\textwidth]
    {
        \begin{subfloatrow}[2]
        \ffigbox[\FBwidth]{
           \includegraphics[width=5.1cm,height=3.8cm]{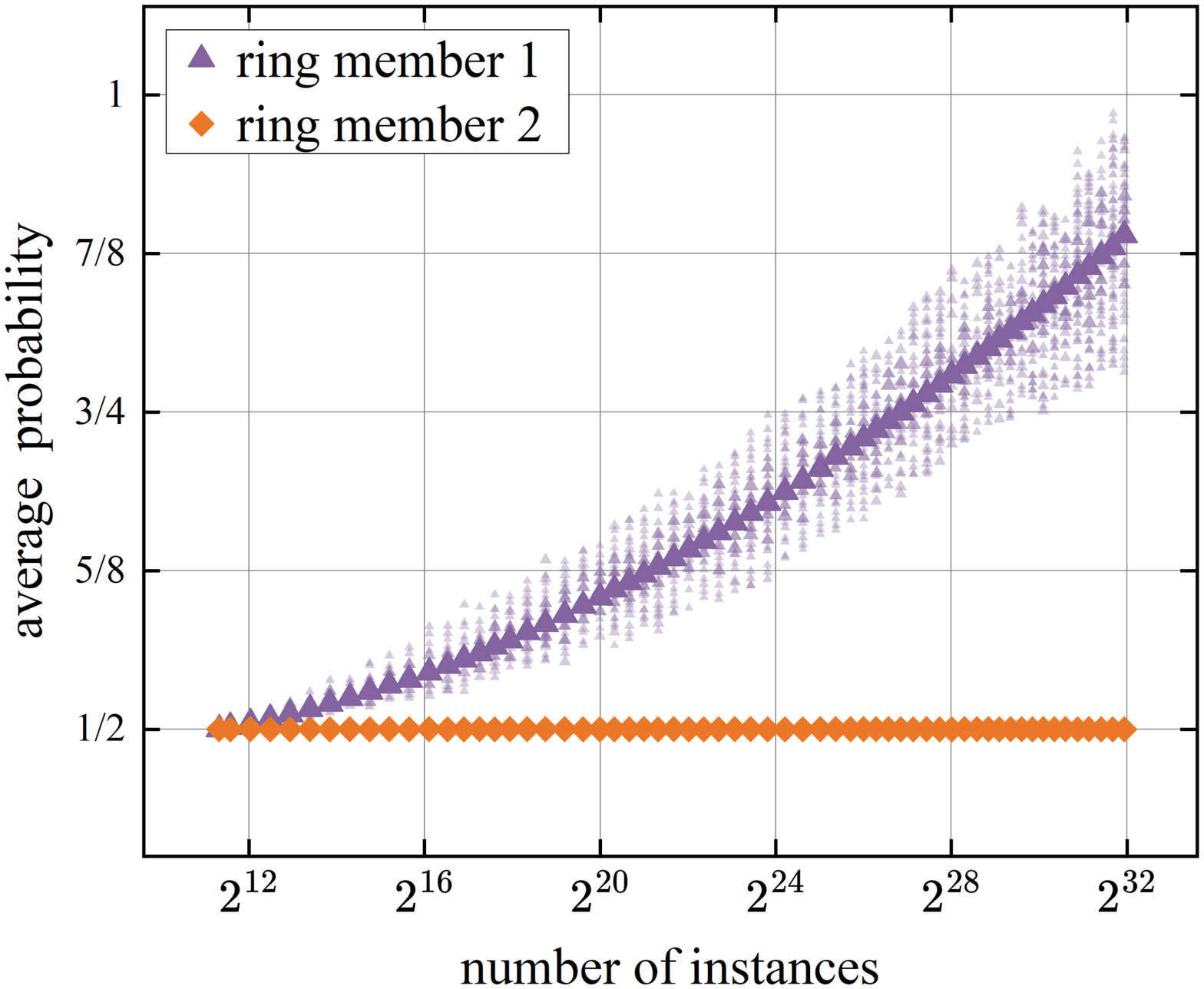}
        }{\caption{Evaluate the average probability of  correctly pointing out the real signer. The signature instances with respect to ring member-1 are truly generated while for the ring member-2 are sampled uniformly.}\label{AttackStrategy:a}}
        \ffigbox[\FBwidth]{
           \includegraphics[width=5.5cm,height=3.8cm]{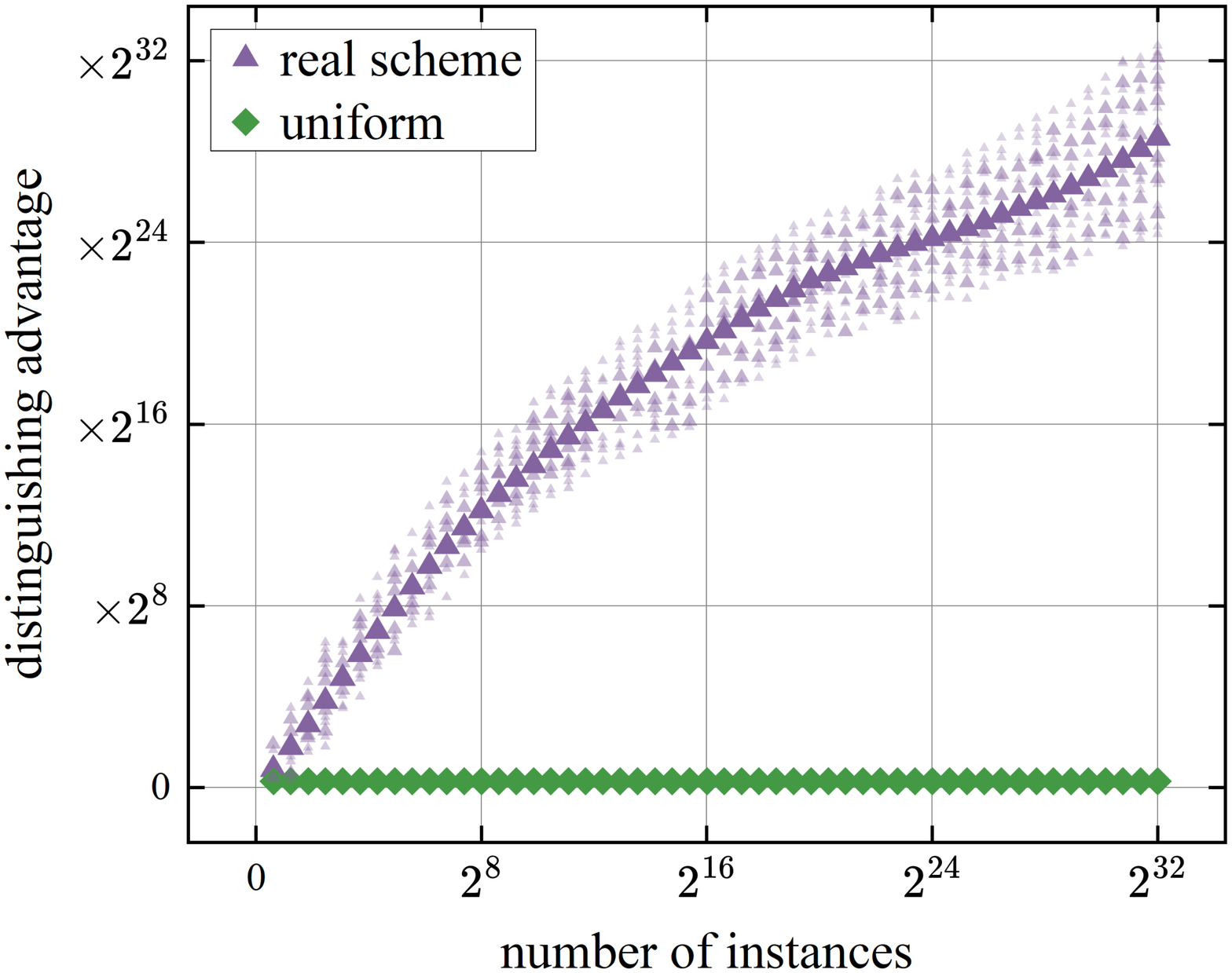}
        }{\caption{Evaluate the distinguishing advantage on two different distribution types of signature instances: from real ring signature scheme and uniform distribution. The result is based on the result of Figure \ref{AttackStrategy:a}.}\label{AttackStrategy:b}}
\end{subfloatrow}   
    }{\caption{Attack strategy implementation.}\label{AttackStrategy}}
    \end{figure}




\vspace{2mm}
\noindent\textbf{Achieving the unconditional anonymity.} We achieve the unconditional anonymity by using the \emph{bonsai tree} mechanism~\cite{CHKP10} which samples the elements of signature in \emph{one shot} rather than samples them in a sequence as in previous works. Specifically, we employ the Bonsai-Tree instantiated by Micciancio and Peikert's work \cite{MP12} and the associated trapdoor generation algorithm \textsf{TrapGen}, trapdoor delegation algorithm \textsf{TrapDel}, and the underlying Gaussian sampling algorithm \textsf{GauSample}. In this setting, the aforementioned verification/signing key pair $(a^{(i)},t^{(i)})$ is instantiated as the matrix $\mathbf {A}^{(i)}$ and its trapdoor $\mathbf T^{(i)}$ which are generated by \textsf{TrapGen}. In the signing phase, the signer (with index $s\in[N]$) concatenates all ring member's matrix $\mathbf {A}^{(i)}$ to an augmented matrix $\mathbf F=[\mathbf A^{(1)}\mid\mathbf A^{(2)}\mid\dots\mid\mathbf A^{(N)}]$, then obtains a delegated trapdoor $\mathbf T_{\mathbf F}$ of $\mathbf F$ by \textsf{TrapDel} with input the signer's trapdoor $\mathbf T^{(s)}$. Then the signer samples the signature $\mathbf{e}$ by \textsf{GauSample} with input the delegated trapdoor $\mathbf T_{\mathbf {F}}$.

We note that the process to sample the signature is in one-shot, i.e., sampling all the items that is $\mathbf{e}=[\mathbf e^{(1)}\mid\mathbf e^{(2)}\mid\dots\mid\mathbf e^{(N)}]$ in one time, instead of sampling in a sequence. In this way, each item $\mathbf e^{(i)}$ of $\mathbf{e}$ has a negligible statistical distance with the uniform distribution, rather than only the item $\mathbf e^{(s)}$ as prior works. Therefore, the negligible distinguishing advantage cannot be exploited as above since the all-but-one negligible distinguishing advantage is eliminated, namely, the accumulated non-negligible statistical distance \emph{cannot be taken as evidence} to determine the identity of the real signer. This is theoretically confirmed by our anonymity proof (cf. proof of Theorem \ref{Them:AnonymityProof}) and practically confirmed by our experimental result (cf. the implementation in Section \ref{Sec:Implementation}).


\section{Definitions}\label{Definitions}\pdfbookmark[1]{Definitions}{Definitions}
In this section, we present the definitions of algorithms and security models for our ring signature. 



\begin{definition}[Ring Signature]
A ring signature scheme consists of the following algorithms: \label{RS}
\begin{itemize}
 \item {\rm $\textsf{Setup}(1^n)\rightarrow \textsf{PP}$}. The system setup algorithm takes as input the security parameter $n$, and outputs the public parameters {\rm $\textsf{PP}$}.
    
    {\rm \textsf{The} $\textsf{PP}$ \textsf{are common parameters used by all ring members in the system, for example, the message space $\mathcal{M}$, the modulo, etc. To guarantee that the public has no concerns on the existing of trapdoors for PP, the randomness used in Setup can be included in $\textsf{PP}$.} \\$\textsf{In the following,}$ $\textsf{PP}$ $\textsf{is an implicit input parameter to every algorithm.}$}
    
    \item {\rm $\textsf{KeyGen}()\rightarrow (\textsf{vk, sk})$}. The key generation algorithm outputs a verification key {\rm $\textsf{vk}$} and a signing key {\rm $\textsf{sk}$}.
    
   {\rm \textsf{Any ring member can run the \textsf{KeyGen} algorithm to generate a pair of verification key and signing key.}}
    
    \item {\rm $\textsf{Sign}(\textsf{sk},\mu,\textsf{R})\rightarrow \Sigma$}. The signing algorithm takes as input a signing key {\rm \textsf{sk}}, a message $\mu\in \mathcal{M}$, and a ring of verification keys {\rm $\textsf{R}=(\textsf{vk}^{(1)},\dots,\textsf{vk}^{(N)})$}. Assume that (1) the {\rm \textsf{sk}} and the corresponding verification key {\rm \textsf{vk}} is a valid key pair output by {\rm\textsf{KeyGen}} and {\rm \textsf{vk}} $ \in $ {\rm\textsf{R}}, (2) the ring size {\rm$\lvert\textsf{R}\rvert\geq 2$}, (3) each verification key in ring {\rm$\textsf{R}$} is distinct. The algorithm outputs a signature $\Sigma$.
    
    
    \item {\rm $\textsf{Ver}(\mu,\textsf{R},\Sigma)\rightarrow 1/0$}. The verification algorithm takes as input a message $\mu$, a ring {\rm\textsf{R}}, and a signature $\Sigma$, the algorithm outputs 1 if the signature is valid, or 0 if the signature is invalid.

\end{itemize}
\end{definition}

\noindent$\textbf{Correctness}$. A ring signature scheme is correct if, for any $n\in \mathbb{N}$, any $N=\mathrm{poly}(n)$, any $i\in[N]$, any messages $\mu\in \mathcal{M}$, any {\rm $\textsf{PP}\leftarrow \textsf{Setup}(1^n)$} as an implicit input parameter to every algorithm, any $N$ verification/signing key pairs {\rm $(\textsf{vk}^{(1)},\textsf{sk}^{(1)}),$ $\dots,(\textsf{vk}^{(N)},\textsf{sk}^{(N)})\leftarrow \textsf{KeyGen}()$}, and any {\rm $\Sigma\leftarrow\textsf{Sign}(\textsf{sk}^{(i)},\mu,\textsf{R})$} where {\rm $\textsf{R}=(\textsf{vk}^{(1)},$ $\dots,\textsf{vk}^{(N)})$}, it holds that
\[
    {\rm \mathrm{Pr}\big[\textsf{Ver}(\mu,\textsf{R},}\Sigma{\rm )=1\big]=1-\mathrm{negl}(n)}
\]where the probability is taken over all the random coins in the experiment.

\subsection{Unforgeability Model}\label{Unforgeability Model}\pdfbookmark[2]{Unforgeability Model}{Unforgeability Model}
    
We use the unforgeability w.r.t. insider corruption model, which is the strongest among the hierarchy of unforgeability models proposed by Bender et al.~\cite{BKM06}. This model is designed to capture realistic attacks, where the adversary can adaptively corrupt honest participants within a ring and obtain their signing keys and even the randomness used to generate these keys.

\noindent$\textbf{Unforgeability}$. A ring signature scheme is considered unforgeable w.r.t. insider corruption if, for any PPT adversary $\mathcal{A}$, the advantage of $\mathcal{A}$ in the following experiment is at most negligible.

\begin{itemize}
    \item {\rm $\textbf{Setup.}$} The experiment generates {\rm $\textsf{PP}\leftarrow \textsf{Setup}(1^n;\rho_\textsc{st})$} and $(\textsf{vk}^{(i)},\textsf{sk}^{(i)})\leftarrow \textsf{KeyGen}(\rho_\textsc{kg}^{(i)})$ for $i\in[N]$, where $(\rho_\textsc{st},\rho_\textsc{kg}^{(i)})$ are the randomnesses used in {\rm $\textsf{Setup}$} and {\rm $\textsf{KeyGen}$}, respectively. The experiment sets {\rm $\textsf{S}=(\textsf{vk}^{(1)},\dots,\textsf{vk}^{(N)})$} and initializes two empty sets {\rm\textsf{L}} and {\rm\textsf{C}}. Finally, the experiment sends (\textsf{PP}, $\textsf{S}$, $\rho_\textsc{st}$) to $\mathcal{A}$.

   \textsf{Note that we give $\mathcal{A}$ the randomness $\rho_{\textsc{st}}$ used for the Setup algorithm, which implies that the algorithm is public and does not rely on a trusted setup that may incur concerns on the existence of trapdoors hidden in the output parameters.}

    \item {\rm $\textbf{Probing Phase.}$} $\mathcal{A}$ can adaptively query the following oracles:
    \begin{itemize}\setlength{\itemsep}{2pt} \setlength{\parsep}{0pt} \setlength{\parskip}{2pt}
    	\item Signing oracle {\rm $\textsf{OSign}(\cdot,\cdot,\cdot)$}:\\
	On input a message $\mu\in\mathcal{M}$, a ring {\rm \textsf{R}}, and an index $s\in[N]$ such that {\rm $\textsf{vk}^{(s)}\in \textsf{R}\cap \textsf{S}$}, this oracle returns the signature {\rm $\Sigma\leftarrow\textsf{Sign}(\textsf{sk}^{(s)},\mu,\textsf{R})$} and adds the tuple {\rm$(\mu,\textsf{R},\Sigma)$} to {\rm\textsf{L}}.
	
	\item Corrupting oracle {\rm \textsf{OCorrupt}}$(\cdot)$:\\
	 On input an index {$s\in[N]$} such that {\rm $\textsf{vk}^{(s)}\in \textsf{S}$}, this oracle returns {\rm $\rho_\textsc{kg}^{(s)}$} and adds {\rm$\textsf{vk}^{(s)}$} to {\rm\textsf{C}}.
	 
	 
	 
	\end{itemize}
	
    \item {\rm $\textbf{Forge.}$} $\mathcal{A}$ outputs a forgery {\rm $(\mu^*,\textsf{R}^*,\Sigma^*)$} and succeeds if {\rm  (1) $\textsf{Ver}(\mu^*,\textsf{R}^*,\Sigma^*)=1$}, {\rm (2) $\textsf{R}^*\subseteq \textsf{S}\setminus \textsf{C}$}, and {\rm (3) $(\mu^*,\textsf{R}^*,\Sigma^*)\notin \textsf{L}$}.
   
\end{itemize}


\subsection{Anonymity Model}\label{PrivacyModel}\pdfbookmark[2]{Anonymity Model}{Anonymity Model}

In our anonymity model, the adversary is allowed to obtain all the ring members' randomness for key generation, as well as the randomness of the \textsf{Setup} algorithm. Moreover, to capture the realistic attack described in Section \ref{OurMethods}, in the challenge phase, the adversary is allowed to obtain $Q$ signatures for any positive integer $Q$ with respect to the same ring member.





\noindent$\textbf{Anonymity}$. A ring signature scheme satisfies the unconditional anonymity, if for any adversary $\mathcal{A}$, it holds that $\mathcal{A}$ has at most negligible advantage in the following experiment.
\begin{itemize}

\item $\textbf{Setup}$. The experiment generates {\rm $\textsf{PP}\leftarrow \textsf{Setup}(1^n;\gamma_{\textsc{st}})$} and {\rm $(\textsf{vk}^{(i)},\textsf{sk}^{(i)})\leftarrow \textsf{KeyGen}(\gamma_{\textsc{kg}}^{(i)})$} for all $i\in[N]$, where $N=\mathrm{poly}(n)$ and {\rm $(\gamma_{\textsc{st}}, \{\gamma_{\textsc{kg}}^{(i)}\}_{i\in [N]})$} are randomness used in {\rm $\textsf{Setup}$} and {\rm $\textsf{KeyGen}$}, respectively. Finally, the experiment sets {\rm $\textsf{S}=\{\textsf{vk}^{(i)}\}_{i\in[N]}$} and sends {\rm $(\textsf{PP}$, $\textsf{S},\gamma_{\textsc{st}},\{\gamma_{\textsc{kg}}^{(i)}\}_{i\in [N]})$} to $\mathcal{A}$.

\textsf{Note that we not only give $\mathcal{A}$ the randomness $\gamma_{\textsc{st}}$ used in \textsf{Setup}, but also the randomness $\gamma_{\textsc{kg}}$ used in \textsf{KeyGen}. This means that $\mathcal{A}$ can issue signatures arbitrarily.}
    
\item $\textbf{Challenge}$. The challenge phase has two sub-phases:
\begin{itemize}
    \item $\mathcal{A}$ provides a message $\mu^*$ and two indexes $(s_0^*,s_1^*)$ such that $s_0^*\neq s_1^*$ and $s_0^*,s_1^*\in[N]$. The experiment chooses $b\xleftarrow{_\$}\{0,1\}$.
    
    \item For $l\in[Q]$, $\mathcal{A}$ provides a ring {\rm $\textsf{R}_l^*$} such that {\rm $\textsf{vk}^{(s_0^*)},\textsf{vk}^{(s_1^*)} \in \textsf{S}\cap \textsf{R}_l^*$}, the experiment computes $\Sigma_l^*\leftarrow\textsf{Sign}\big(\textsf{sk}^{(s_b^*)},\mu^*, \textsf{R}_l^*\big)$ and sends $\Sigma_l^*$ to $\mathcal{A}$.
\end{itemize}
    
\item $\textbf{Guess}$. $\mathcal{A}$ outputs a bit $b'$. If $b'=b$, the experiment outputs 1, otherwise 0.
    
\end{itemize}

\section{Lattice Backgrounds}\label{LatticeBackgrounds}\pdfbookmark[1]{Lattice Backgrounds}{Lattice Backgrounds}

In this section, we review some lattice backgrounds: the lattices basics and hardness assumption in Section \ref{sec:LatticesBasics}, the concept of bonsai tree and its associated algorithms in Section \ref{sec:BonsaiTree}, as well as the key-homomorphic evaluation algorithm in Section \ref{sec:KeyHomo}.

\subsection{Lattices and Hardness Assumption}\label{sec:LatticesBasics}\pdfbookmark[2]{Lattices and Hardness Assumption}{Lattices and Hardness Assumption}

We consider lattice problems restricted to compact lattices defined over polynomial rings \cite{LPR10}. These are rings of the form $ R=\mathbb{Z}[X]/(\mathrm{\Phi}_{2n}(X))$ or $ R_q= R/qR$ where $n$ is a power of $2$, $q$ is an integer, and $\mathrm{\Phi}_{2n}(X)=X^n+1$ is the cyclotomic polynomial of degree $n$. The $n$-dimensional Gaussian function $\rho_s: \mathbb{R}^n\rightarrow (0,1]$ is defined as $\rho_s(\mathbf{e})=\exp{(-\pi\cdot \lVert \mathbf{e}/s\rVert^2)}$ for any standard deviation $s>0$ and vector $\mathbf{e}\in\mathbb{R}^{n}$. For any countable set $E\subseteq \mathbb{R}^n$, let $\rho_s(E)=\sum_{\mathbf{e}\in E}\rho_s(\mathbf{e})$. The discrete gaussian distribution $D_{\mathrm{\Lambda},s}$ over a lattice $\mathrm{\Lambda}$ is defined as $D_{\mathrm{\Lambda},s}(\mathbf{e})=\rho_s(\mathbf{e})/\rho_s(\mathrm{\Lambda})$.

\begin{definition}[Ring-SIS Assumption~{\rm\cite{GPV08}}]
Short Integer Solution over Rings problem {\rm {RingSIS}}$_{q,n,m,\beta}$ is, given a row vector $\mathbf{A}\in R_q^{1\times m}$, to find a nonzero vector $\mathbf{x}\in R^{m}$ such that $\mathbf{Ax}=\mathbf{0}\pmod q$ and $\lVert \mathbf{x}\rVert\leq \beta$.
\label{DefofSIS}
\end{definition}

\subsection{Bonsai Tree}\label{sec:BonsaiTree}\pdfbookmark[2]{Bonsai Tree}{Bonsai Tree}





The concept of the bonsai tree is a lattice-based cryptographic structure was introduced by Cash et al. \cite{CHKP10}. In this work, we employ the bonsai tree instantiated by Micciancio and Peikert's work \cite{MP12}, which has the following associated algorithms and properties.


\begin{lemma}[{\rm \textsf{TrapGen}} Algorithm~{\rm\cite{MP12}}]
For a positive integer $k$, a modulus $q=3^k$, and integer dimension $n$. There is a PPT trapdoor generation algorithm {\rm \textsf{TrapGen}$(q,n,\sigma_{\textsc{tg}})$} that on input $q$, $n$, and a parameter {\rm$\sigma_{\textsc{tg}}>\omega(\sqrt{\ln{nw}})$}, outputs a trapdoor $\mathbf{T}\in R^{w\times k}$ for  $\mathbf{A}\in R_q^{1\times (k+w)}$ such that {\rm$s_1(\mathbf{T})\leq\sigma_{\textsc{tg}}\cdot O(\sqrt{w}+\sqrt{k}+\omega(\sqrt{\log n}))$}. Moreover, if $w\geq 2(\lceil \log q\rceil + 1)$ then with overwhelming probability the distribution of $\mathbf{A}$ is statistically close to uniform.
\label{TrapGen}
\end{lemma}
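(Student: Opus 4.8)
The statement to prove is the standard \textsf{TrapGen} lemma of Micciancio--Peikert \cite{MP12}, instantiated over the ring $R_q$ with $q = 3^k$. My plan is to reconstruct it from the gadget-based trapdoor framework rather than to reprove everything from scratch.

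\textbf{Overall approach.} The plan is to build $\mathbf{A}$ in the "gadget plus random" form $\mathbf{A} = [\bar{\mathbf{A}} \mid \mathbf{g}^{\!\top} - \bar{\mathbf{A}}\mathbf{T}]$, where $\bar{\mathbf{A}} \in R_q^{1 \times w}$ is chosen uniformly at random, $\mathbf{g}^{\!\top} = (1, 3, 3^2, \dots, 3^{k-1}) \in R_q^{1\times k}$ is the base-$3$ gadget row vector, and $\mathbf{T} \in R^{w \times k}$ has entries drawn from a discrete Gaussian of width $\sigma_{\textsc{tg}}$ over $R$. Here the total width of $\mathbf{A}$ is $m = k + w$, matching the statement. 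The matrix $\mathbf{T}$ (together with the publicly known gadget trapdoor of $\mathbf{g}^{\!\top}$) is the claimed trapdoor: one checks directly that $[\bar{\mathbf{A}} \mid \mathbf{g}^{\!\top} - \bar{\mathbf{A}}\mathbf{T}] \cdot \bigl[\begin{smallmatrix}\mathbf{T}\\ \mathbf{I}\end{smallmatrix}\bigr] = \mathbf{g}^{\!\top}$, so knowledge of $\mathbf{T}$ reduces any SIS/preimage problem for $\mathbf{A}$ to the corresponding (easy) problem for the gadget $\mathbf{g}^{\!\top}$.

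\textbf{Key steps, in order.} First, I would fix the ring parameters and the gadget vector, and recall the base-$3$ decomposition so that $q = 3^k$ makes $\mathbf{g}^{\!\top}$ a clean gadget with small-norm preimages. Second, I would define the sampling procedure for $\mathbf{T}$: sample each of the $w \cdot k$ ring entries independently from $D_{R,\sigma_{\textsc{tg}}}$, which is well-defined and efficiently sampleable precisely because $\sigma_{\textsc{tg}} > \omega(\sqrt{\ln nw})$ exceeds the smoothing parameter of $R$ (viewed as $\mathbb{Z}^n$). Third, I would bound the spectral norm $s_1(\mathbf{T})$: a subgaussian tail bound on the coefficient embedding of a $w \times k$ matrix of independent $\sigma_{\textsc{tg}}$-width discrete Gaussians over a degree-$n$ ring gives, with overwhelming probability, $s_1(\mathbf{T}) \le \sigma_{\textsc{tg}} \cdot O(\sqrt{w} + \sqrt{k} + \omega(\sqrt{\log n}))$ — the $\sqrt{n}$ factors from the ring embedding are absorbed into the $O(\cdot)$ since $\sigma_{\textsc{tg}}$ is measured in the appropriate norm, exactly as in \cite{MP12}. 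Fourth, for the near-uniformity of $\mathbf{A}$: since $\mathbf{g}^{\!\top} - \bar{\mathbf{A}}\mathbf{T}$ is a deterministic shift, it suffices that $(\bar{\mathbf{A}}, \bar{\mathbf{A}}\mathbf{T})$ is statistically close to $(\bar{\mathbf{A}}, \mathbf{U})$ for uniform $\mathbf{U} \in R_q^{1\times k}$; this is a regularity/leftover-hash argument over $R_q$ — the map $\mathbf{t} \mapsto \bar{\mathbf{A}}\mathbf{t}$ applied to a discrete Gaussian $\mathbf{t}$ is close to uniform once $w \ge 2(\lceil \log q \rceil + 1)$, which is where that hypothesis is used, and one then takes a union bound over the $k$ columns.

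\textbf{Main obstacle.} The routine parts are the trapdoor-correctness identity and the spectral-norm bound; the delicate part is the regularity statement over the ring $R_q$ with $q = 3^k$ a prime power (not prime), since $R_q$ is then not a field and has nontrivial ideal structure, so the generalized leftover hash lemma does not apply verbatim. I would handle this by invoking the ring-regularity lemma in the form already established for power-of-two cyclotomics with prime-power modulus (as used in \cite{MP12,LPR10}), which controls the statistical distance in terms of the smoothing parameter of the relevant sublattices; the condition $w \ge 2(\lceil\log q\rceil + 1)$ is exactly what guarantees this distance is negligible. If a self-contained argument were needed, the fallback is to note $3$ is odd and coprime to $2$ so $X^n + 1$ factors controllably modulo $3$, letting one track the "bad" low-norm vectors explicitly; but for the purposes of this paper citing the MP12 instantiation suffices, and I would present the lemma's proof as an application of that framework with the parameter bookkeeping made explicit.
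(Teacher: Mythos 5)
The paper gives no proof of this lemma; it is stated as a black-box ring instantiation of Micciancio--Peikert \cite{MP12}, so there is no internal argument in the paper to compare your reconstruction against. Your reconstruction is the standard MP12 statistical-instantiation argument, and its skeleton is correct: gadget form $\mathbf{A}=[\bar{\mathbf{A}}\mid\mathbf{g}^{\!\top}-\bar{\mathbf{A}}\mathbf{T}]$ with $\mathbf{T}$ sampled entrywise from $D_{R,\sigma_{\textsc{tg}}}$, the trapdoor identity $\mathbf{A}\cdot(\mathbf{T}^{\!\top},\mathbf{I})^{\!\top}=\mathbf{g}^{\!\top}$, a subgaussian singular-value bound for $s_1(\mathbf{T})$, and a ring regularity lemma for near-uniformity once $w\geq 2(\lceil\log q\rceil+1)$. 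One step deserves more care: a $w\times k$ Gaussian matrix over $R$ lifts to an $nw\times nk$ integer (or canonical-embedding) matrix, so the singular-value estimate naturally produces $\sqrt{n}$ factors; these are not ``absorbed into the $O(\cdot)$'' in the asymptotic sense but rather disappear only under the particular normalization of $s_1(\cdot)$ and $D_{R,\sigma_{\textsc{tg}}}$ used in \cite{MP12,DM14}, and a self-contained proof should state that convention explicitly and carry the factor until it cancels. The obstacle you flag about the leftover-hash step over $R_q$ with $q=3^k$ not prime is genuine, and deferring to the ring regularity lemma as established in \cite{MP12,LPR10} is the appropriate resolution given that the paper itself treats this lemma as imported.
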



\begin{lemma}[{\rm \textsf{TrapDel}} Algorithm {\rm\cite{MP12}}] There is a PPT trapdoor delegation algorithm {\rm $\textsf{TrapDel}(\mathbf{T}, \mathbf A'=[\mathbf{A}\mid \mathbf{A}_1], \sigma_{\textsc{td}})$} that on input a concatenate matrix $\mathbf A'=[\mathbf{A}\mid \mathbf{A}_1]$ where $\mathbf{A}\in R_q^{1\times (k+w)}$ and $\mathbf{A}_1$ is an arbitrary matrix in $ R_q^{1\times w}$, a trapdoor $\mathbf{T}\in R^{w\times k}$ of $\mathbf{A}$,  and a parameter {\rm$\sigma_{\textsc{td}}\geq s_1(\mathbf{T})\cdot \omega(\sqrt{\log n})$}, outputs a delegated trapdoor $\mathbf{T}'\in R^{(k+w)\times k}$ of $\mathbf{A}'$ such that {\rm$s_1(\mathbf{T}')\leq \sigma_{\textsc{td}} \cdot O(\sqrt{k+w}+\sqrt{k}+\omega(\sqrt{\log n}))$} and $\mathbf{T}'$ is distributed statistically independent with $\mathbf{T}$. 
\label{TrapDel}
\end{lemma}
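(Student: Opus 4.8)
The plan is to implement \textsf{TrapDel} by \emph{Gaussian preimage sampling} driven by the input trapdoor $\mathbf T$, so that both the singular-value bound and the independence from $\mathbf T$ fall out of the standard properties of the Micciancio--Peikert sampler~\cite{MP12}. Recall that an MP12 trapdoor for $\mathbf A'=[\mathbf A\mid\mathbf A_1]$ is a small matrix $\mathbf T'$ that, stacked with a fixed low-norm $\{0,1\}$ selector block, ties $\mathbf A'$ to the gadget $\mathbf G$ through a linear relation; appending the arbitrary block $\mathbf A_1$ only shifts the right-hand side of that relation by the corresponding columns of $\mathbf A_1$. Since $\mathbf T$ is a trapdoor for $\mathbf A$, we may invoke \textsf{GauSample} with $\mathbf T$ to sample, for any target, a short Gaussian preimage under $\mathbf A$ of width $\sigma_{\textsc{td}}$; the hypothesis $\sigma_{\textsc{td}}\ge s_1(\mathbf T)\cdot\omega(\sqrt{\log n})$ is exactly what places us in the regime where the MP12 sampler works and outputs a distribution within negligible distance of a spherical discrete Gaussian.

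\textbf{Building the output.} Concretely, I would assemble $\mathbf T'$ column by column: for each of the $k$ gadget columns, call \textsf{GauSample}$(\mathbf T,\mathbf A,\cdot,\sigma_{\textsc{td}})$ on the gadget column corrected by the corresponding column of $\mathbf A_1$, obtaining a short preimage under $\mathbf A$, and collect the $k$ resulting vectors into $\mathbf T'\in R^{(k+w)\times k}$. By construction $\mathbf T'$ (stacked with the fixed selector block) satisfies the gadget relation for $\mathbf A'$, hence is a valid delegated trapdoor. Every column of $\mathbf T'$ is within negligible distance of a discrete Gaussian of parameter $\sigma_{\textsc{td}}$, so a union bound over Gaussian tails bounds each column norm by $\sigma_{\textsc{td}}\cdot O(\sqrt{n(k+w)})$ with overwhelming probability, and the standard operator-norm estimate for (sub)Gaussian matrices gives $s_1(\mathbf T')\le\sigma_{\textsc{td}}\cdot O(\sqrt{k+w}+\sqrt{k}+\omega(\sqrt{\log n}))$, matching the claim. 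All invoked subroutines run in probabilistic polynomial time, so \textsf{TrapDel} is PPT.

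\textbf{Independence from $\mathbf T$.} This is the step that needs care. It rests on the defining property of the GPV/MP12 preimage sampler: the distribution of its output is within negligible statistical distance of the \emph{fixed} distribution $D_{\mathbf v+\Lambda^{\perp}(\mathbf A'),\,\sigma_{\textsc{td}}}$ determined only by $(\mathbf A',\sigma_{\textsc{td}})$ and the target $\mathbf v$ --- in particular, independent of which trapdoor produced it. Consequently the joint distribution of $(\mathbf T,\mathbf T')$ is negligibly close to the product of the marginal of $\mathbf T$ and that fixed distribution, i.e.\ $\mathbf T'$ is statistically independent of $\mathbf T$. Alternatively one may take the bonsai ``extend-then-randomize'' route of~\cite{CHKP10}: first extend $\mathbf T$ to some trapdoor $\widehat{\mathbf T}$ of $\mathbf A'$, then re-randomize by Gaussian sampling; the re-randomization erases the dependence for the same reason.

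\textbf{Main obstacle.} The only genuinely delicate point is checking that $\sigma_{\textsc{td}}$ is large enough for the sampler's output to be this-close to the ideal Gaussian \emph{when the sampler is run from $\mathbf T$} --- that is, that $s_1(\mathbf T)\cdot\omega(\sqrt{\log n})$ comfortably exceeds $s_1(\mathbf T)$ times the smoothing parameter of the relevant $q$-ary lattice for the stated modulus $q=3^k$. Granting this (it is immediate from the standard smoothing-parameter bounds), the norm bookkeeping and the independence claim both reduce to citing~\cite{MP12}, and the lemma follows.
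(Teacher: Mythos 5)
The paper does not prove this lemma; it is stated as an imported background result from Micciancio and Peikert~\cite{MP12}, so there is no in-paper argument against which to compare yours. That said, your proposal correctly reconstructs the canonical delegation argument of [MP12]: you form $\mathbf T'$ column by column, each column a Gaussian preimage under $\mathbf A$ (sampled via the input trapdoor $\mathbf T$) of the corresponding gadget column shifted by the matching column of $\mathbf A_1$, so that the trapdoor relation for $\mathbf A'$ holds by construction. The hypothesis $\sigma_{\textsc{td}}\geq s_1(\mathbf T)\cdot\omega(\sqrt{\log n})$ is precisely the threshold placing the sampler in the regime where its output is negligibly far from the ideal coset Gaussian; the singular-value bound then follows from sub-Gaussian concentration for random matrices; and --- the point that actually matters for the scheme's anonymity --- the independence of $\mathbf T'$ from $\mathbf T$ is exactly the GPV/MP12 obliviousness you invoke: the sampler's output law is determined (up to negligible distance) by $(\mathbf A',\sigma_{\textsc{td}})$ and the target coset alone, not by which trapdoor drove it. Two small presentation points: your intermediate per-column norm estimate $\sigma_{\textsc{td}}\cdot O(\sqrt{n(k+w)})$ carries the ring dimension $n$ explicitly while the lemma's $s_1$ bound, written in ring-lattice units, does not --- harmless here since your final bound comes from the operator-norm estimate rather than from aggregating column norms, but worth keeping the two unit conventions straight; and the phrase ``low-norm $\{0,1\}$ selector block'' is an idiosyncratic name for what [MP12] calls simply the identity block appended to the trapdoor. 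In short, the argument is sound, and you are not taking a different route so much as supplying the standard proof the paper leaves to the reference.
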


\begin{lemma}[Trapdoor Indistinguishability of {\rm \textsf{TrapDel}}] Let $\mathbf{T}'_0, \mathbf{T}'_1$ be any two delegated trapdoors generated by {\rm $\textsf{TrapDel}(\mathbf{T}, \mathbf A'_0, \sigma_{\textsc{td}})$} and {\rm $\textsf{TrapDel}(\mathbf{T}, \mathbf A'_1, \sigma_{\textsc{td}})$}, respectively, i.e., $\mathbf{T}'_0, \mathbf{T}'_1$ are two trapdoors for the same $\mathbf A'$, the $\mathbf{T}'_0$ and $\mathbf{T}'_1$ are within {\rm negl$(n)$} statistical distance. Furthermore, it is also holds for the case $\mathbf A'_0 = \mathbf A'_1$.
\label{Lem:TrapDelProperty}
\end{lemma}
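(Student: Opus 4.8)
The plan is to reduce the claimed statistical closeness of two delegated trapdoors to the distributional properties of \textsf{TrapDel} already recorded in Lemma~\ref{TrapDel}, namely that the output $\mathbf{T}'$ of $\textsf{TrapDel}(\mathbf{T},\mathbf{A}',\sigma_{\textsc{td}})$ is distributed \emph{statistically independently} of the input trapdoor $\mathbf{T}$. First I would observe that this independence clause gives, for each fixed target matrix $\mathbf{A}'$, a well-defined distribution $\mathcal{D}_{\mathbf{A}',\sigma_{\textsc{td}}}$ on trapdoors of $\mathbf{A}'$ that does not depend on which trapdoor of $\mathbf{A}'$ was used as input to \textsf{TrapDel} — the randomness of the algorithm washes out the input. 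Hence in the case $\mathbf{A}'_0=\mathbf{A}'_1=\mathbf{A}'$, both $\mathbf{T}'_0$ and $\mathbf{T}'_1$ are drawn from the \emph{same} distribution $\mathcal{D}_{\mathbf{A}',\sigma_{\textsc{td}}}$ (conditioned on $\mathbf{T}$, and therefore unconditionally), so their statistical distance is $0$, which is in particular $\mathrm{negl}(n)$. This handles the ``furthermore'' sentence directly.

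For the main case, where $\mathbf{A}'_0$ and $\mathbf{A}'_1$ may differ, the subtlety is that the statement as phrased says $\mathbf{T}'_0,\mathbf{T}'_1$ are ``two trapdoors for the same $\mathbf{A}'$'' — so the intended reading is that the \emph{public} augmented matrix presented to a distinguisher is identical, while the two candidate trapdoors arise from delegating along two syntactically different concatenation paths (e.g.\ $\mathbf{A}'=[\mathbf{A}\mid\mathbf{A}_1]$ versus a differently-grouped decomposition of the same columns). I would make this precise, then argue that in each case $\textsf{TrapDel}$ produces a trapdoor whose conditional distribution given $\mathbf{T}$ is exactly a discrete Gaussian $D_{\Lambda^\perp(\mathbf{A}'),\sigma_{\textsc{td}}}$-type distribution over the coset lattice determined solely by $\mathbf{A}'$ (this is the content of the Micciancio--Peikert delegation analysis underlying Lemma~\ref{TrapDel}, combined with a standard smoothing/regularity argument showing the output Gaussian parameter and support depend only on $\mathbf{A}'$ and $\sigma_{\textsc{td}}$, not on the delegation path). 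Then $\mathbf{T}'_0$ and $\mathbf{T}'_1$ are both within $\mathrm{negl}(n)$ statistical distance of this canonical distribution, and the triangle inequality for statistical distance, $\Delta(\mathbf{T}'_0,\mathbf{T}'_1)\le \Delta(\mathbf{T}'_0,D)+\Delta(D,\mathbf{T}'_1)$, closes the argument.

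Concretely the key steps, in order, are: (i) extract from Lemma~\ref{TrapDel} the canonical-distribution claim — for any $\mathbf{A}'$ and parameter $\sigma_{\textsc{td}}$ above the stated bound, the law of $\textsf{TrapDel}(\mathbf{T},\mathbf{A}',\sigma_{\textsc{td}})$ is within $\mathrm{negl}(n)$ of a fixed distribution $\mathcal{D}_{\mathbf{A}',\sigma_{\textsc{td}}}$ independent of $\mathbf{T}$; (ii) note that the parameter constraint $\sigma_{\textsc{td}}\ge s_1(\mathbf{T})\cdot\omega(\sqrt{\log n})$ is met for both invocations since the \emph{same} $\mathbf{T}$ (hence the same $s_1(\mathbf{T})$) is used, so the negligible error bounds apply uniformly; (iii) invoke the triangle inequality to conclude $\Delta(\mathbf{T}'_0,\mathbf{T}'_1)\le 2\cdot\mathrm{negl}(n)=\mathrm{negl}(n)$; (iv) specialize to $\mathbf{A}'_0=\mathbf{A}'_1$ to get the ``furthermore'' clause (where in fact the distance is $0$ up to the same negligible sampling error, or exactly $0$ if the two runs are coupled to the same canonical law). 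The main obstacle I anticipate is step (i): pinning down precisely in what sense the delegation output is ``input-independent'' and path-independent requires unpacking the $\textsf{TrapDel}$ construction of \cite{MP12} — in particular arguing that the Gaussian-sampled delegated basis depends on the input trapdoor only through the randomness it consumes, and that smoothing makes the resulting coset-lattice Gaussian insensitive to how the columns of $\mathbf{A}'$ were partitioned; once that regularity statement is isolated, the rest is a two-line triangle-inequality argument.
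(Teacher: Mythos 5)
The paper does not actually prove this lemma: it is stated without proof, with only the one-sentence remark that ``This is achieved by the underlying Gaussian sampling algorithm \textsf{GauSample}.'' Your reconstruction supplies what is missing, and the core idea is the right one. Reading Lemma~\ref{TrapDel}'s phrase ``$\mathbf{T}'$ is distributed statistically independent with $\mathbf{T}$'' as the standard obliviousness property of Gaussian trapdoor delegation — namely, that for $\sigma_{\textsc{td}}$ above the stated threshold, the output law of $\textsf{TrapDel}(\mathbf{T},\mathbf{A}',\sigma_{\textsc{td}})$ is within $\mathrm{negl}(n)$ of a canonical distribution $\mathcal{D}_{\mathbf{A}',\sigma_{\textsc{td}}}$ depending only on $(\mathbf{A}',\sigma_{\textsc{td}})$ — and then applying the triangle inequality, is exactly the argument the authors appear to have in mind, and it is correct as far as it goes. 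You are also right to observe that the lemma's wording is internally inconsistent (it speaks of ``$\mathbf{A}'_0$ and $\mathbf{A}'_1$'' and ``the same $\mathbf{A}'$'' in the same breath), so some interpretive work is unavoidable.

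One caveat worth sharpening: the lemma as written fixes the input trapdoor $\mathbf{T}$ and varies the target, but the \emph{only} place it is invoked, the reduction in Theorem~\ref{Them:AnonymityProof}, does the opposite — it fixes the target $\mathbf{F}'_{\boldsymbol{\mu},1-d}$ and delegates from two different input trapdoors $\mathbf{T}^{(s_0^*)}$ and $\mathbf{T}^{(s_1^*)}$. Your step~(i) is what actually bridges that gap, since $\mathcal{D}_{\mathbf{A}',\sigma_{\textsc{td}}}$ is oblivious to the input trapdoor; but your step~(ii) justifies the parameter constraint by appealing to ``the \emph{same} $\mathbf{T}$ (hence the same $s_1(\mathbf{T})$) is used,'' which does not hold in that application. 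To cover the case the anonymity proof actually needs, step~(ii) should instead require $\sigma_{\textsc{td}}\geq\max_{b\in\{0,1\}} s_1(\mathbf{T}^{(s_b^*)})\cdot\omega(\sqrt{\log n})$ so that the threshold is met for both input trapdoors simultaneously. With that adjustment, the canonical-distribution plus triangle-inequality argument is sound and is the natural way to make the paper's unproved assertion rigorous.
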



We note that the Trapdoor Indistinguishability property of \textsf{TrapDel} will be useful in our anonymity proof (cf. the proof of Theorem \ref{Them:AnonymityProof}). This is achieved by the underlying Gaussian sampling algorithm \textsf{GauSample}. We also use the \textsf{GauSample} algorithm to sample signatures in the \textsf{Sign} algorithm.

\begin{lemma}[{\rm \textsf{GauSample}} Algorithm~{\rm\cite{DM14}}]
There is a PPT Gaussian sampling algorithm {\rm $\textsf{GauSample}(\mathbf A, \mathbf{T},\mathbf u,\sigma_{\textsc{Saml}})$} that on input a matrix $\mathbf{A}\in R_q^{1\times (k+w)}$, a trapdoor $\mathbf{T}\in R^{w\times k}$ of $\mathbf{A}$, a syndrome $\mathbf{u}\in R_q$, and a sufficiently large parameter {\rm$\sigma_{\textsc{Saml}}\geq s_1(\mathbf{T})\cdot\omega(\sqrt{\log n})$}, the algorithm samples a preimage $\mathbf{e}\in R^{k+w}$ which is distributed statistically independent with the trapdoor $\mathbf{T}$ and satisfies that {\rm$\mathbf{Ae} = \mathbf{u}\pmod q$ and $\lVert \mathbf{e}\rVert\leq \sigma_{\textsc{Saml}}\sqrt{k+w}$} holds with overwhelming probability.
\label{GauSample}
\end{lemma}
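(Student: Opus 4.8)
The statement is the perturbation-based trapdoor preimage sampler of~\cite{MP12}, specialized to the ring $R_q$ with a base-$3$ gadget, so the plan is to recall that construction and then verify its three asserted properties --- the syndrome equation $\mathbf{A}\mathbf{e}=\mathbf{u}$, the trapdoor-independent output distribution, and the length bound --- in turn. Recall that a trapdoor $\mathbf{T}\in R^{w\times k}$ for $\mathbf{A}\in R_q^{1\times(k+w)}$ as produced by \textsf{TrapGen} (Lemma~\ref{TrapGen}) satisfies the gadget identity $\mathbf{A}\cdot\bigl[\begin{smallmatrix}\mathbf{T}\\\mathbf{I}_k\end{smallmatrix}\bigr]=\mathbf{g}\pmod q$, where $\mathbf{g}=(1,3,9,\dots,3^{k-1})\in R_q^{1\times k}$ is the base-$3$ gadget vector (using $q=3^k$), and that there is an efficient routine sampling from the discrete Gaussian $D_{\Lambda^{\perp}_{v}(\mathbf{g}),\,s}$ supported on the coset $\{\mathbf{z}\in R^{k}:\mathbf{g}\mathbf{z}=v\bmod q\}$, for every target $v\in R_q$ and every width $s\geq\omega(\sqrt{\log n})$. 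Writing $\mathbf{B}=\bigl[\begin{smallmatrix}\mathbf{T}\\\mathbf{I}_k\end{smallmatrix}\bigr]$ and fixing an internal width $s=\omega(\sqrt{\log n})$, the sampler is: (i) draw a perturbation $\mathbf{p}$ from the (non-spherical) discrete Gaussian over $R^{k+w}$ with covariance $\Sigma_p=\sigma_{\textsc{Saml}}^{2}\mathbf{I}-s^{2}\,\mathbf{B}\mathbf{B}^{*}$, where $\mathbf{B}^{*}$ is the conjugate transpose of $\mathbf{B}$; (ii) compute the corrected syndrome $v=\mathbf{u}-\mathbf{A}\mathbf{p}\pmod q$; (iii) sample $\mathbf{z}\leftarrow D_{\Lambda^{\perp}_{v}(\mathbf{g}),\,s}$; (iv) output $\mathbf{e}=\mathbf{p}+\mathbf{B}\mathbf{z}$.

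First I would check the syndrome equation, which is immediate: $\mathbf{A}\mathbf{e}=\mathbf{A}\mathbf{p}+\mathbf{A}\mathbf{B}\mathbf{z}=\mathbf{A}\mathbf{p}+\mathbf{g}\mathbf{z}=\mathbf{A}\mathbf{p}+v=\mathbf{u}\pmod q$ by the gadget identity and the definition of $v$. For the output distribution I would invoke the discrete-Gaussian convolution lemma: provided the covariance $\Sigma_p$ dominates the squared smoothing parameter $\eta_{\epsilon}(\mathbb{Z}^{n})^{2}\cdot\mathbf{I}$ for some $\epsilon=\mathrm{negl}(n)$ --- which is exactly what the hypothesis $\sigma_{\textsc{Saml}}\geq s_1(\mathbf{T})\cdot\omega(\sqrt{\log n})$ guarantees, since then $\sigma_{\textsc{Saml}}^{2}-s^{2}s_1(\mathbf{T})^{2}$ stays above the poly-logarithmic smoothing parameter of $\mathbb{Z}^{n}$ and $\Sigma_p$ is positive definite above smoothing --- the output $\mathbf{e}$ is within $\mathrm{negl}(n)$ statistical distance of $D_{\Lambda^{\perp}_{\mathbf{u}}(\mathbf{A}),\,\sigma_{\textsc{Saml}}}$. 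The crucial observation for the rest of the paper is that this limiting distribution is a function of $(\mathbf{A},\mathbf{u},\sigma_{\textsc{Saml}})$ only and does not depend on which trapdoor $\mathbf{T}$ of $\mathbf{A}$ was used; that is precisely the ``distributed statistically independent of the trapdoor $\mathbf{T}$'' clause, and it is the property that underlies the Trapdoor Indistinguishability of Lemma~\ref{Lem:TrapDelProperty} and hence the anonymity argument relying on it.

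The length bound then follows from the standard Gaussian tail inequality (Banaszczyk): a sample from $D_{\Lambda,\sigma_{\textsc{Saml}}}$ on a rank-$(k+w)$ module lattice over $R$ has Euclidean norm at most $\sigma_{\textsc{Saml}}\sqrt{k+w}$ in the natural normalization except with probability $2^{-\Omega((k+w)n)}$, and transporting this estimate across the negligible statistical distance from the previous step yields $\lVert\mathbf{e}\rVert\leq\sigma_{\textsc{Saml}}\sqrt{k+w}$ with overwhelming probability.

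The only genuinely delicate part, and the one I expect to absorb most of the work, is the parameter bookkeeping around the perturbation covariance: one must pick the internal width $s$ of the $\mathbf{g}$-sampler and verify that $\Sigma_p=\sigma_{\textsc{Saml}}^{2}\mathbf{I}-s^{2}\mathbf{B}\mathbf{B}^{*}$ is both well-defined and above smoothing under the tight hypothesis $\sigma_{\textsc{Saml}}\geq s_1(\mathbf{T})\cdot\omega(\sqrt{\log n})$, which in turn requires the spectral bound on $\mathbf{T}$ furnished by Lemma~\ref{TrapGen}; and one must confirm that the spectral and smoothing estimates, all classically stated for integer lattices, carry over to module lattices over $R=\mathbb{Z}[X]/(\Phi_{2n}(X))$ through the coefficient embedding, which is where the $\sqrt{n}$ factors enter. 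The conceptual ingredients --- the efficient gadget sampler, the convolution lemma, and the Banaszczyk tail bound --- are then invoked as black boxes from the cited literature.
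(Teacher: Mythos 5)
The paper does not prove this lemma; it is stated as a citation to~\cite{DM14} (which builds on~\cite{MP12}), so there is no internal argument to compare against. Your reconstruction is the standard argument from those references: the perturbation sampler with covariance $\sigma_{\textsc{Saml}}^{2}\mathbf{I}-s^{2}\mathbf{B}\mathbf{B}^{*}$, the gadget-coset sampler for the remainder, the convolution lemma to show the output is within negligible distance of the spherical discrete Gaussian $D_{\Lambda^{\perp}_{\mathbf{u}}(\mathbf{A}),\,\sigma_{\textsc{Saml}}}$, and the Banaszczyk tail bound for the norm. Your reading of ``distributed statistically independent with the trapdoor $\mathbf{T}$'' as the output law being (up to negligible distance) a function of $(\mathbf{A},\mathbf{u},\sigma_{\textsc{Saml}})$ alone, with no dependence on which trapdoor was used, is the correct interpretation and is exactly the trapdoor-erasing property the paper later leans on in Lemma~\ref{Lem:TrapDelProperty} and Theorem~\ref{Them:AnonymityProof}.
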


\subsection{Key-Homomorphic Evaluation Algorithm}\label{sec:KeyHomo}\pdfbookmark[2]{Key-Homomorphic Evaluation Algorithm}{KeyHomo}

In our construction, the key-homomorphic evaluation algorithm $\textsf{Eval}(\cdot, \cdot)$ that developed from the works~\cite{BGG+14,BV14}, is employed in the \textsf{Sign} algorithm to effectively reduce the ring signature size. Additionally, in the simulation, we reconstruct the matrix $\mathbf A^{(i)}$ by the gadget matrix $\mathbf{G}$ and the random matrix $\mathbf R^{(i)}$, that is\[\mathbf A^{(i)}=\mathbf A\mathbf R^{(i)}+b^{(i)}\mathbf G\in R_q^{1\times m}\]\[\mathbf{G}=[1,3,9,\dots,3^{m-1}]\in R^{1\times m},\quad\ \mathbf R^{(i)}\xleftarrow{\$}\{1,-1\}^{m\times m}\subset R_q^{(m/n) \times (m/n)}\]where $m=k+w$. In this setting, we employ the key-homomorphic evaluation algorithm to implicitly 
evaluate a PRF function by taking these reconstructed matrices into a {\rm $\textsf{NAND}$} Boolean circuit. Below we review a fact of the key-homomorphic evaluation algorithm which shows the implicit evaluated matrix $\mathbf R_{C}$ has a low-norm, and also review the PRF definition.


\begin{lemma}[{\rm\cite{BGG+14,BV14}}]
Let $d=c \log \ell$ for some constant $c$. Define $C:\{0,1\}^\ell \rightarrow \{0,1\}$ be a {\rm $\textsf{NAND}$} boolean circuit with depth $d$. Let $\{\mathbf A^{(i)}=\mathbf A\mathbf R^{(i)}+b^{(i)}\mathbf G\}_{i\in[\ell]}$ be $\ell$ different matrices correspond to each input wire of ${C}$ where $\mathbf A\xleftarrow{_\$} R_q^{1\times m}$, $\mathbf R^{(i)}\xleftarrow{_\$} \{1,-1\}^{m\times m}$, $b^{(i)}$ is a bit over $R$. The algorithm {\rm $\textsf{Eval}(C, (\mathbf A^{(1)},\cdots, \mathbf A^{(\ell)}))$} runs in time $\mathrm{poly}(4^d,\ell,n,\log q)$, the inputs are $C$ and $\{\mathbf A^{(i)}\}_{i\in[\ell]}$, the output is 
\begin{center}
{\rm $\mathbf A_C\ =\ \mathbf A\mathbf R_{C}+{C}(b^{(1)},\dots,b^{(\ell)})\cdot\mathbf G\ =\ \textsf{Eval}(C, (\mathbf A^{(1)},\dots,\mathbf A^{(\ell)}))$}
\end{center}
 where ${C}(b^{(1)},\dots,b^{(\ell)})$ is the output bit of $C$ on the arguments $(b^{(1)},\dots,b^{(\ell)})$ and $\mathbf R_{C}\in R^{m\times m}$ is a low norm matrix has $\lVert\mathbf R_C\rVert \leq O(\ell\log m + \sqrt{m})$. 
\label{Lemma7}
\end{lemma}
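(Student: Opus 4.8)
The plan is to prove the statement by induction over the gates of $C$, maintaining the invariant that every wire $w$ of $C$ carries a matrix of the form $\mathbf A_w=\mathbf A\mathbf R_w+b_w\mathbf G$, where $b_w\in\{0,1\}$ is the bit that $w$ takes when $C$ is run on input $(b^{(1)},\dots,b^{(\ell)})$ and $\mathbf R_w\in R^{m\times m}$ is an error matrix of controlled norm; for the $\ell$ input wires this is exactly the hypothesis $\mathbf A^{(i)}=\mathbf A\mathbf R^{(i)}+b^{(i)}\mathbf G$, so the base case holds with $\mathbf R_w=\mathbf R^{(i)}$. The only algebraic tool needed is the gadget $\mathbf G$ together with its short digit-decomposition operator $\mathbf G^{-1}(\cdot)$, which maps any $\mathbf U\in R_q^{1\times m}$ to a short-entry matrix $\mathbf G^{-1}(\mathbf U)\in R^{m\times m}$ satisfying $\mathbf G\,\mathbf G^{-1}(\mathbf U)=\mathbf U\bmod q$.

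\textbf{The inductive step (one NAND gate).} For a NAND gate whose two inputs carry $\mathbf A_1=\mathbf A\mathbf R_1+b_1\mathbf G$ and $\mathbf A_2=\mathbf A\mathbf R_2+b_2\mathbf G$, I would set the output-wire matrix to $\mathbf A_{\mathrm{NAND}}:=\mathbf G-\mathbf A_1\cdot\mathbf G^{-1}(\mathbf A_2)$; substituting the two expressions and regrouping the $\mathbf A(\cdot)$ and $\mathbf G$ terms gives
\[
\mathbf A_{\mathrm{NAND}}=\mathbf A\bigl(-\mathbf R_1\,\mathbf G^{-1}(\mathbf A_2)-b_1\mathbf R_2\bigr)+(1-b_1b_2)\,\mathbf G ,
\]
and since $1-b_1b_2=\mathrm{NAND}(b_1,b_2)$ on $\{0,1\}$, the invariant propagates with $\mathbf R_{\mathrm{NAND}}=-\mathbf R_1\mathbf G^{-1}(\mathbf A_2)-b_1\mathbf R_2$. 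Evaluating the gates of $C$ in topological order (after unrolling $C$ to a formula, which has at most $\mathrm{poly}(4^d)$ gates since $d=c\log\ell$) and reading off the output wire then yields $\mathbf A_C=\mathbf A\mathbf R_C+C(b^{(1)},\dots,b^{(\ell)})\,\mathbf G$. Each gate costs one $R_q$-multiplication $\mathbf A_1\mathbf G^{-1}(\mathbf A_2)$ together with a digit decomposition, i.e.\ $\mathrm{poly}(n,m,\log q)$ time, for a total of $\mathrm{poly}(4^d,\ell,n,\log q)$.

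\textbf{The norm bound (the main obstacle).} The substantive part is bounding $\lVert\mathbf R_C\rVert$, and a naive gate ordering is fatal: the identity for $\mathbf R_{\mathrm{NAND}}$ multiplies $\mathbf R_1$ by $\mathbf G^{-1}(\cdot)$, so composing it along $d$ levels with both inputs ``large'' blows up like $(\mathrm{poly}(m))^{d}$. The fix, which I expect to be the crux of the whole proof, is to evaluate $C$ in a left-associative order, always pairing the running accumulator $(\mathbf A_{\mathrm{acc}},\mathbf R_{\mathrm{acc}})$ with a fresh low-norm wire $(\mathbf A_{\mathrm{fr}},\mathbf R_{\mathrm{fr}})$ as $\mathbf G-\mathbf A_{\mathrm{fr}}\,\mathbf G^{-1}(\mathbf A_{\mathrm{acc}})$, so that $\mathbf R_{\mathrm{out}}=-\mathbf R_{\mathrm{fr}}\,\mathbf G^{-1}(\mathbf A_{\mathrm{acc}})-b_{\mathrm{fr}}\mathbf R_{\mathrm{acc}}$ and hence
\[
\lVert\mathbf R_{\mathrm{out}}\rVert\;\le\;\lVert\mathbf R_{\mathrm{acc}}\rVert+\lVert\mathbf R_{\mathrm{fr}}\rVert\cdot\lVert\mathbf G^{-1}(\mathbf A_{\mathrm{acc}})\rVert ,
\]
i.e.\ the accumulator norm grows only \emph{additively} over the $O(\ell)$ combining steps. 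Plugging in the short-entry bound on $\mathbf G^{-1}(\cdot)$ and the fact that a uniform $\mathbf R^{(i)}\in\{1,-1\}^{m\times m}$ has operator norm $O(\sqrt m)$ with overwhelming probability (a standard subgaussian operator-norm tail bound — the one place where the ``overwhelming probability'' in the statement enters), and tracking the constants as in \cite{BGG+14,BV14} for the circuit used here, the sum of the additive increments is $O(\ell\log m+\sqrt m)$, which is the claimed bound. In short, I anticipate the homomorphism identity to be a direct computation and essentially all of the delicacy to lie in fixing this evaluation order and carrying the norm bookkeeping through it carefully.
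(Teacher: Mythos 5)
The paper does not supply a proof of Lemma~\ref{Lemma7}; it is imported verbatim from \cite{BGG+14,BV14}, so there is no in-paper argument to compare yours against. Judged as a reconstruction of the cited result, you have correctly recovered the two essential ingredients: the single-gate homomorphism $\mathbf A_{\mathrm{NAND}}=\mathbf G-\mathbf A_1\mathbf G^{-1}(\mathbf A_2)$, giving $\mathbf R_{\mathrm{NAND}}=-\mathbf R_1\mathbf G^{-1}(\mathbf A_2)-b_1\mathbf R_2$ with $1-b_1b_2=\mathrm{NAND}(b_1,b_2)$ on bits, and the observation that a balanced-tree evaluation multiplies the accumulated error by $\mathbf G^{-1}(\cdot)$ at every level, producing an unacceptable $m^{O(d)}$ blowup.

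There is, however, a genuine gap in how you get from an arbitrary depth-$d$ NAND circuit to an evaluation schedule in which one side of every product is always a fresh, low-norm input matrix. Your phrase ``evaluate $C$ in a left-associative order, always pairing the running accumulator with a fresh low-norm wire'' silently assumes a comb-shaped formula. A generic NAND formula is a tree, and at an internal node both children can carry accumulators of large norm, in which case the recursion $\lVert\mathbf R_{\mathrm{out}}\rVert\le\lVert\mathbf R_{\mathrm{acc}}\rVert+\lVert\mathbf R_{\mathrm{fr}}\rVert\cdot\lVert\mathbf G^{-1}(\mathbf A_{\mathrm{acc}})\rVert$ simply is not available. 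The device that makes the asymmetric evaluation apply to arbitrary $\mathrm{NC}^1$ circuits, and which you cite implicitly via the $\mathrm{poly}(4^d)$ running time without actually invoking, is Barrington's theorem: the depth-$d$ formula is first converted to a width-$5$ permutation branching program of length $4^d$, and it is this genuinely \emph{sequential} object that is then evaluated left-associatively, with the running state kept inside the $\mathbf G^{-1}(\cdot)$ and a fresh short ($\pm1$-derived) matrix applied from the outside at each step. With that conversion made explicit, the remaining bookkeeping you describe --- the subgaussian operator-norm bound $O(\sqrt m)$ for the $\mathbf R^{(i)}$'s and additive accumulation over the $4^d=\ell^{O(1)}$ branching-program steps --- follows the referenced works and yields the stated bound.
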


\begin{definition}[Pseudorandom Functions]\label{DefofPRF}
Let $\kappa$, $t$, and $c$ be polynomial in $n$. A PRF function {\rm $\textsf{PRF}:\{0,1\}^\kappa\times \{0,1\}^t\rightarrow \{0,1\}^c$} is a deterministic two-input function where the first input, denoted by $K$, is the key. Let $\Omega$ be the set of all functions whose domain is $t$ bits strings while the range is $c$ bits strings, then {\rm$$\big{\lvert} \mathrm{Pr}\big[\mathcal{A}^{\textsf{PRF}(K,\cdot)}(1^n)=1\big]-\mathrm{Pr}\big[\mathcal{A}^{F(\cdot)}(1^n)=1\big]\big{\rvert} \leq \textsf{negl}(n)$$}where the probability is taken over a uniform choice of key $K\xleftarrow{_\$}\{0,1\}^\kappa$, $F\xleftarrow{_\$} \Omega$, and the randomness of $\mathcal{A}$. 
\end{definition}

\section{The Ring Signature Scheme}\label{TheRingSignatureScheme}\pdfbookmark[1]{The Ring Signature Scheme}{The Ring Signature Scheme}

In this section, we present a ring signature scheme from the lattices over rings. The construction is presented in Section \ref{Construction}, the correctness and parameters setting are given in Section \ref{Correctness}, the implementation is given in Section \ref{Sec:Implementation}, and the unforgeability and anonymity is proven in Appendix \ref{Appen:Proofs}.


\subsection{Construction}\label{Construction}\pdfbookmark[2]{Construction}{Construction}

\noindent$\textsf{Setup}(1^n):$

\begin{enumerate}\setlength{\itemsep}{3pt} \setlength{\parsep}{0pt} \setlength{\parskip}{1pt}
	\item On input the security parameter $n$, sets the modulo $q$, lattice dimension $m$, PRF key length $\kappa$, message length $t$, and Gaussian parameters {\rm$\sigma_{\textsc{tg}}$}, {\rm$\sigma_{\textsc{td}}$}, and {\rm$\sigma_{\textsc{Saml}}$} as specified in Section \ref{Correctness}.	
	\item Select a secure $\textsf{PRF}:\{0,1\}^\kappa\times \{0,1\}^t \rightarrow \{0,1\}$, express it as a $\textsf{NAND}$ boolean circuit ${C_\textsf{PRF}}$.
	\item Output the public parameters {\rm$\textsf{PP}=(q,m,\kappa,t,\sigma_{\textsc{tg}},\sigma_{\textsc{td}},\sigma_{\textsc{Saml}},\textsf{PRF})$}.
\end{enumerate}
In the following, $\textsf{PP}$ are implicit input parameters to every algorithm.

\vspace{3mm}
\noindent$\textsf{KeyGen}():$

\begin{enumerate} \setlength{\itemsep}{2pt} \setlength{\parsep}{0pt} \setlength{\parskip}{1pt}
	
	\item Compute $(\mathbf A,\mathbf T)\leftarrow \textsf{TrapGen}(q,n,\sigma_{\textsc{tg}})$.

	\item Select $\mathbf A_0, \mathbf A_1, \mathbf C_0, \mathbf C_1 \xleftarrow{_\$}R_q^{1\times m}$ and $\mathbf u\xleftarrow{_\$}R_q$.
	
	\item Select a $\textsf{PRF}$ key $\mathbf k=(k_1,\dots,k_\kappa)\xleftarrow{_\$}\{0,1\}^\kappa$.

	\item For $j=1$ to $\kappa$, select $\mathbf B_{j}\xleftarrow{_\$} R_q^{1\times m}$.
	
	\item Output $\textsf{vk}=(\mathbf A, (\mathbf A_{0},\mathbf A_{1}), \{\mathbf B_{j}\}_{j\in[\kappa]}, (\mathbf C_{0},\mathbf C_{1}), \mathbf u)$ and $\textsf{sk}=(\mathbf T,\mathbf k)$.
\end{enumerate}

In the rest of the construction, for a ring $\textsf{R}=(\textsf{vk}^{(1)},\dots,\textsf{vk}^{(N)})$, we implicitly parse each verification key $\textsf{vk}^{(i)}=(\mathbf A^{(i)},(\mathbf A_{0}^{(i)},\mathbf A_{1}^{(i)}), \{\mathbf B_{j}^{(i)}\}_{j\in[\kappa]},(\mathbf C_{0}^{(i)}, \mathbf C_{1}^{(i)}), \mathbf u^{(i)})$, and the corresponding signing key $\textsf{sk}^{( i)}=(\mathbf T^{(i)}, \mathbf{k}^{(i)})$.

\vspace{2mm}
\noindent$\textsf{Sign}(\boldsymbol{\mu},\textsf{R},\textsf{sk}):$
\begin{enumerate} \setlength{\itemsep}{3pt} \setlength{\parsep}{0pt} \setlength{\parskip}{1pt}

	\item On input a message $\boldsymbol{\mu}=(\mu_1,\dots,\mu_t)\in\{0,1\}^t$, a ring of verification keys $\textsf{R}$, and a signing key $\textsf{sk}^{(s)}$ where $s \in [N]$ is the index of the signer in ring $\textsf{R}$.
	
		
	\item Compute $d=\textsf{PRF}(\mathbf k^{(s)}, \boldsymbol{\mu})$.
	
	\item For $i\in[N]$, compute $\mathbf A_{\boldsymbol{\mu}}^{(i)}=\textsf{Eval}(C_\textsf{PRF},(\{\mathbf B_{j}^{(i)}\}_{j\in[\kappa]},\mathbf C_{\mu_1}^{(i)},\mathbf C_{\mu_2}^{(i)},\dots,\mathbf C_{\mu_t}^{(i)}))$, then set $\mathbf F_{\boldsymbol{\mu},1-d}^{(i)}=\big[\mathbf A^{(i)}\mid \mathbf A_{1-d}^{(i)}-\mathbf A_{\boldsymbol{\mu}}^{(i)}\big]\in R_q^{1\times 2m}$.	
	
	\item Let $\mathbf F'_{\boldsymbol{\mu},1-d}=\big[\mathbf F_{\boldsymbol{\mu},1-d}^{(1)}\mid\dots\mid\mathbf F_{\boldsymbol{\mu},1-d}^{(N)}\big]\in R_q^{1\times 2Nm}$. Delegate the trapdoor $\mathbf{T}_{\mathbf F'_{\boldsymbol{\mu},1-d}}$ for $\mathbf F'_{\boldsymbol{\mu},1-d}$ by 
	\[
	    \mathbf{T}_{\mathbf F'_{\boldsymbol{\mu},1-d}}\leftarrow\textsf{TrapDel}\big(\mathbf{T}^{(s)}, \mathbf F'_{\boldsymbol{\mu},1-d},  \sigma_{\textsc{td}}\big).
	\]
	\item Let $i'$ be the index for which $\mathbf A^{(i')}$ is lexicographically first\footnote{The signer-anonymity is immediately broken if we use the $\mathbf{u}^{(s)}$ of the signer. To avoid that, we select the parameter $\mathbf{u}^{(i')}$ in the lexicographically first way as the works~\cite{BK10,PS19}.} in $\{\mathbf A^{(i)}\}_{i\in[N]}$.

	\item Sample $\mathbf e=(\mathbf e^{(1)},\dots,\mathbf e^{(N)})$ by $\textsf{GauSample}(\mathbf F'_{\boldsymbol{\mu},1-d},\mathbf T_{\mathbf F'_{\boldsymbol{\mu},1-d}}, \mathbf u^{(i')},\sigma_{\textsc{Saml}})$.
	
	\item Output the signature $\Sigma=\mathbf e$.

\end{enumerate}

\vspace{3mm}
\noindent$\textsf{Ver}(\boldsymbol{\mu},\textsf{R},\Sigma):$

\begin{enumerate}\setlength{\itemsep}{3pt} \setlength{\parsep}{0pt} \setlength{\parskip}{1pt}
	\item On input a message $\boldsymbol{\mu}$, a ring of verification keys $\textsf{R}$, and a signature $\Sigma=\mathbf{e}$, parse $\mathbf e=(\mathbf e^{(1)},\dots,\mathbf e^{(N)})$.
	
	\item For $i\in[N]$ and $d\in\{0,1\}$, set $\mathbf F_{\boldsymbol{\mu},d}^{(i)}=\big[\mathbf A^{(i)} \mid \mathbf A_{d}^{(i)}-\mathbf A_{\boldsymbol{\mu}}^{(i)}\big]$ where $\mathbf A_{\boldsymbol{\mu}}^{(i)}$ is computed as in \textsf{Sign} algorithm.
		
	\item Select $\mathbf u^{(i')}$ from $\{\mathbf u^{(i)}\}_{i\in[N]}$ for which the corresponding $\mathbf A^{(i')}$ is lexicographically first in $\{\mathbf A^{(i)}\}_{i\in[N]}$.

	\item Let $\mathbf F'_{\boldsymbol{\mu},d}=\big[\mathbf F_{\boldsymbol{\mu},d}^{(1)}\mid\dots\mid\mathbf F_{\boldsymbol{\mu},d}^{(N)}\big]$, check if each $\lVert \mathbf e^{(i)}\rVert\leq \sigma_{\textsc{Saml}}\sqrt{m}$ for $i\in[N]$, and $\mathbf F'_{\boldsymbol{\mu},d}\cdot \mathbf e=\mathbf u^{(i')}\pmod q$ holds for $d=0$ or $1$, accept the signature; otherwise, reject\footnote{The setting of `$d=0$ or $1$' and the difference that we use $\bold F_{\boldsymbol{\mu},1- d}^{(i)}$ in \textsf{Sign} algorithm while $\bold F_{\boldsymbol{\mu},d}^{(i)}$ in \textsf{Ver}, are for the unforgeability proof. (cf. proof of Theorem \ref{Them:UnforgeabilityProof} for details).}.

\end{enumerate}

\subsection{Correctness and Parameters}\label{Correctness}\pdfbookmark[2]{Correctness and Parameters}{Correctness and Parameters}
The correctness of the scheme is easily verified: By Lemma \ref{GauSample}, the vector $\mathbf{e}$ satisfies that $\mathbf F'_{\boldsymbol{\mu},d}\cdot \mathbf e=\mathbf u^{(i')}\pmod q$ holds for $d=0$ or $1$, and the length of each $\mathbf e^{(i)}$ is at most $\sigma_{\textsc{Saml}}\sqrt{m}$ with overwhelming probability. Therefore, the signature can be accepted by the \textsf{Ver} algorithm. We then explain the parameters choosing. 

We parameterized the ring signature scheme by the security parameter $n$ which we assume is a power of $2$, and a modulus $q=3^k$ which we assume to be a power of $3$. These parameters define the ring $R_q=\mathbb{Z}[X]/(\mathrm{\Phi}_{2n}(X),q)$ where $\mathrm{\Phi}_{2n}(X)=X^n+1$ is the cyclotomic polynomial of degree $n$. For the PRF, we instantiate it by the work \cite{BPR12} in which the $\mathrm{poly}(n)$-bounded modulus $q=\mathrm{poly}(n)$ and key length $\kappa=\mathrm{poly}(n)$ are allowed. Let $\ell=t+k$ be the input length of PRF where $t$ is the message length. The scheme also uses the following parameters:

$$\sigma_{\textsc{tg}}=\omega(\sqrt{\ln{nw}}),\quad~  w=2\lceil \log q\rceil +2\quad $$ $$\sigma_{\textsc{Saml}}=O(\sqrt{Nk\ell}\log m + \sqrt{Nkm})\cdot \omega({\log n})$$ $$\ \  \sigma_{\textsc{td}}= s_1(\mathbf{T})\cdot \omega(\sqrt{\log n}),\quad \beta=O(m^{3/2})\cdot \sigma_{\textsc{tg}}\sigma_{\textsc{Saml}}$$where the $(\sigma_{\textsc{tg}},w)$ and $\sigma_{\textsc{td}}$ is given by Lemma \ref{TrapGen} and \ref{TrapDel}, respectively. The $\sigma_{\textsc{Saml}}$ and $\beta$ is given by our unforgeability proof (cf. the proof of Theorem \ref{Them:UnforgeabilityProof}). Specifically, in order to guarantee the distribution on the output of $\textsf{GauSample}$ statistically indistinguishable between the real and simulated world, we need to set the Gaussian parameter $\sigma_{\textsc{Saml}}$ sufficiently large; To ensure the preimage vector $\mathbf{e}$ sampled by $\textsf{GauSample}$ are not trivial to find (otherwise, the unforgeability is broken), we need to set a norm bound, i.e., the parameter $\beta$ on the preimage vector. In the following implementation, we give a concrete parameters setting.




\subsection{Implementation}\label{Sec:Implementation}\pdfbookmark[2]{Implementation}{Implementation}


To generate our specific discrete Gaussian distributions, we employ two building blocks: the AES-based pseudorandom number generator from \cite{MN17} which is implemented using AES-NI instructions for x86 architectures, and the DM-sampler \cite{DM14} which can generate samples in constant time and these samples are distributed independently with the input trapdoor and the sampled preimages. On the concrete implementation, we implement our ring signature scheme using C++ language in the 3.4GHz Intel(R) Core(TM) i7-6600U CPU, 16GB Memory, and the operating system ubuntu 20.04 LTS, and based on the NFLlib library which is available at \url{https://github.com/quarkslab/NFLlib}.

\emph{Concrete Parameters}. Gaussian sampling is the most basic sub-algorithm of our bonsai tree mechanism, according to the employed DM-sampler \cite{DM14}, we use the computation instantiation of the sampler with $n=512$ and $k=30$. Therefore, we have $q=3^{k}=3^{30}$, by Lemma \ref{TrapGen}, $w=2\lceil \log q\rceil +2=98$, then $m=w+k=128$. To guarantee the hardness of the {\rm {RingSIS}}$_{q,n,m,\beta}$ instance, we follow the general framework of \cite{BFL+18}. Specifically, to ensure that the shortest vector outputted by BKZ is a solution of {\rm {RingSIS}}$_{q,n,m,\beta}$ instance with norm bound $\beta$, the root Hermite factor $\delta$ should satisfy that $\frac{\beta}{\sqrt{q}}=\delta^{2n}$.

To minimize the computational cost of the cryptographic operations involved in PRF, we set $N=2$, $\kappa=128$, $t=1$, and $\ell=\kappa+t=129$. According to the parameters relations given in Section \ref{Correctness}, we can calculate the parameters $\sigma_{\textsc{Saml}}$ and $\beta$. Table \ref{tab:ConcreteParametersSetting} shows the concrete parameters setting.

\begin{table}\renewcommand\arraystretch{1.5}
\centering
\caption{Concrete parameters of our ring signaure scheme.}
\label{tab:Tab1}
\begin{tabular}{c||c|c|c|c|c||c|c|c|c}
\hline
$\delta$  &  $n$ & $k$ & $w$ & $m$ &\ $q$ \ & $\sigma_{\textsc{tg}}$ & $\sigma_{\textsc{td}}$ & \begin{tabular}[c]{@{}l@{}} $\sigma_{\textsc{Saml}}$ \end{tabular} \ & \ \begin{tabular}[c]{@{}l@{}} $\beta$\end{tabular}\ \  \\ \hline
\hspace{0.6mm} $1.002985$\hspace{0.6mm} &\hspace{0.6mm} $512$ \hspace{0.6mm} & \hspace{0.6mm} $30$ \hspace{0.6mm} & \hspace{0.6mm} $98$ \hspace{0.6mm} & \hspace{0.6mm} $128$ \hspace{0.6mm} &\hspace{0.6mm}  $3^{30}$ \hspace{0.6mm} & \hspace{0.6mm} $3.3$ \hspace{0.6mm} & \hspace{0.6mm} $55.1$ \hspace{0.6mm} & \hspace{0.6mm} $63740.1$ \hspace{0.6mm} & \hspace{0.6mm}  $303,684,288.2$ \hspace{0.6mm}   \\ \hline
\end{tabular}
\label{tab:ConcreteParametersSetting}
\end{table}

\vspace{2mm}
\emph{Benchmark of performance}. We benchmark the concrete computation time of algorithms \textsf{KeyGen}, \textsf{Sign}, and \textsf{Ver}) and signature size with varying ring size (cf. Figure \ref{OurPerformance}). From Figure \ref{OurPerformance:b}, we can observe the experimental result is better since the signature size is sublinear with the ring size rather than linear as our scheme theoretically showed. The reason is the signature size is actually dominated by the parameter $\sigma_{\textsc{Saml}}$ in which the ring size is $\sqrt{N}$ rather than $N$.

\begin{figure}[!htb]
    \ffigbox[\textwidth]
    {
        \begin{subfloatrow}[2]
        \ffigbox[\FBwidth]{
            \includegraphics[width=5cm,height=3.8cm]{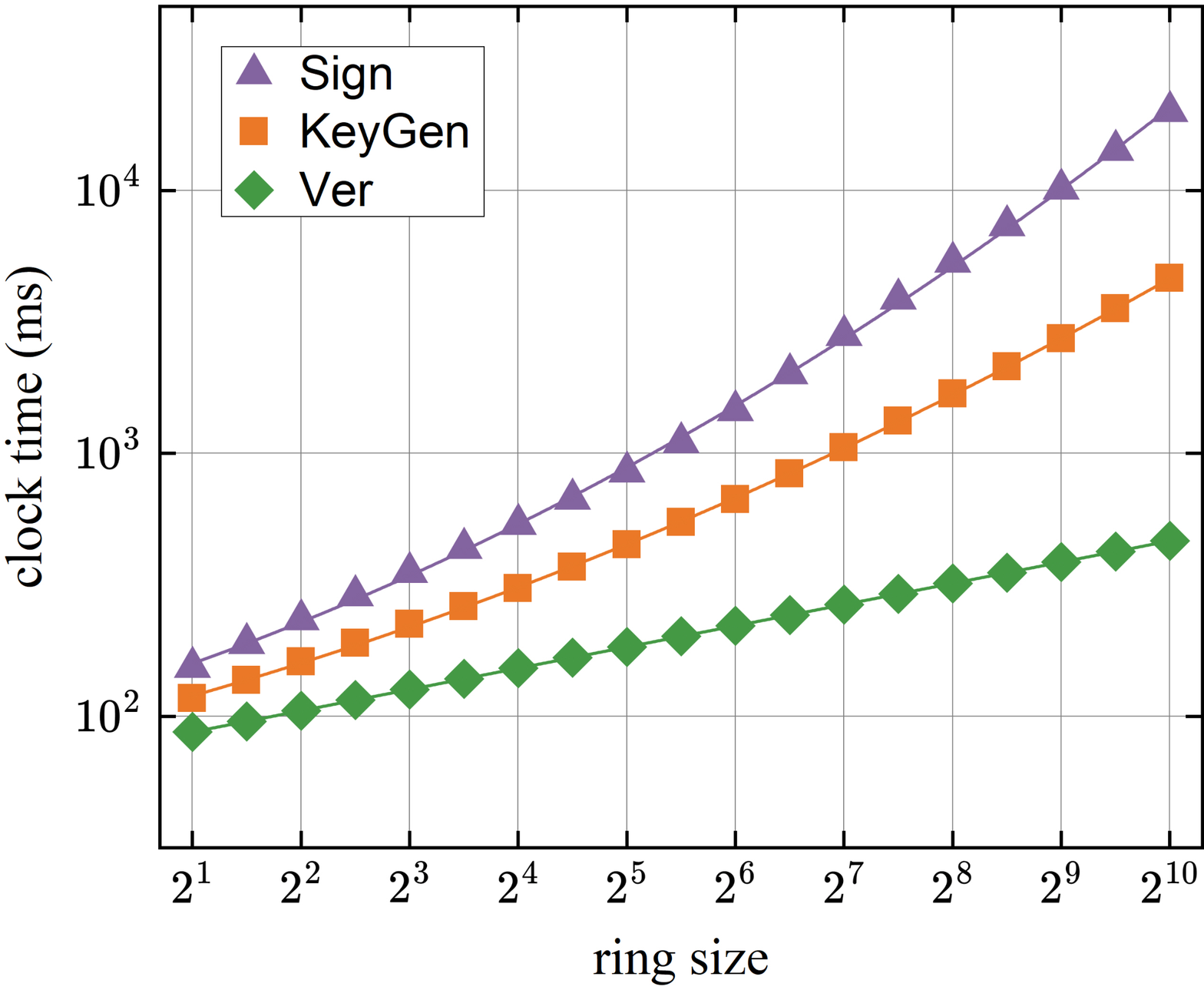}
        }{\caption{Clock time for \textsf{KeyGen}, \textsf{Sign}, and \textsf{Ver} with varying ring size.}\label{OurPerformance:a}}
        \ffigbox[\FBwidth]{
            \includegraphics[width=5cm,height=3.8cm]{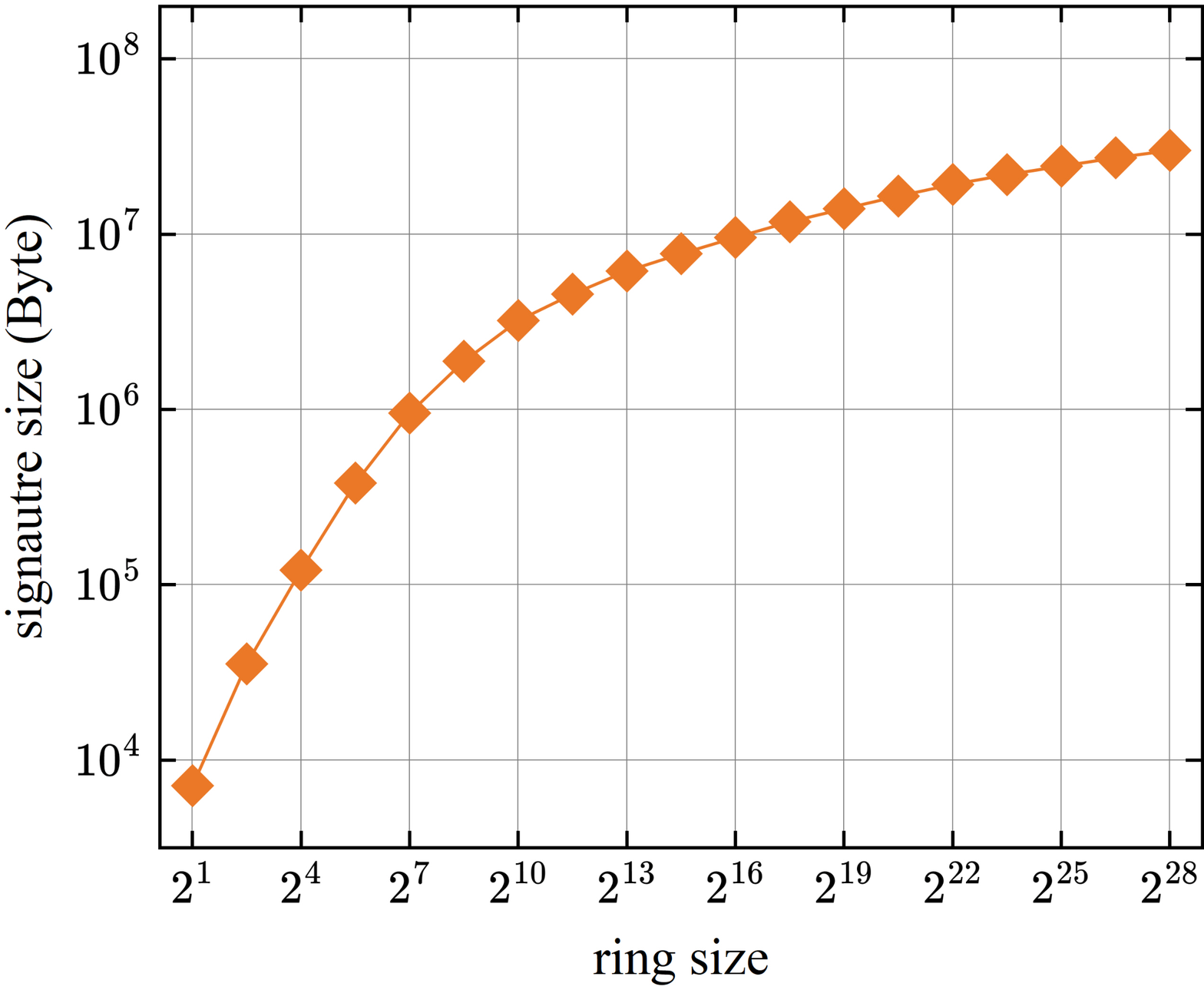}
        }{\caption{Concrete signature size with varying ring size.}\label{OurPerformance:b}}
        \end{subfloatrow}   
    }{\caption{Concrete computation time and signature size of our ring signature scheme.}\label{OurPerformance}}
    \end{figure}

\begin{figure}[!htb]
    \ffigbox[\textwidth]
    {
        \begin{subfloatrow}[2]
        \ffigbox[\FBwidth]{
            \includegraphics[width=5.1cm,height=3.8cm]{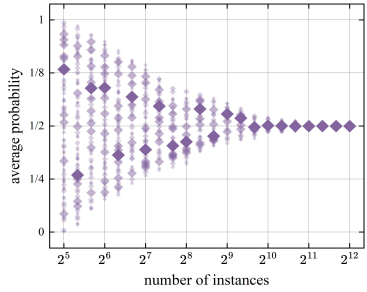}
        }{\caption{Evaluate the average probability for our ring signature with $|\textsf{R}|=2$.}\label{AttackOurRS:a}}
        \ffigbox[\FBwidth]{
            \includegraphics[width=5.1cm,height=3.8cm]{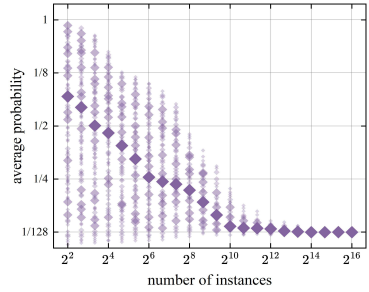}
        }{\caption{Evaluate the average probability for our ring signature with $|\textsf{R}|=128$.}\label{AttackOurRS:b}}
        \end{subfloatrow}    
        \par\nointerlineskip\vspace{13mm}
        \begin{subfloatrow}[2]
        \ffigbox[\FBwidth]{
            \includegraphics[width=5.1cm,height=3.8cm]{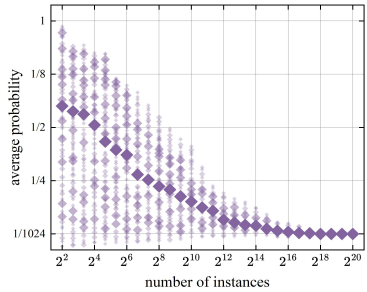}
        }{\caption{Evaluate the average probability for our ring signature with $|\textsf{R}|=1024$.}\label{AttackOurRS:c}}
        \ffigbox[\FBwidth]{
            \includegraphics[width=5.1cm,height=3.8cm]{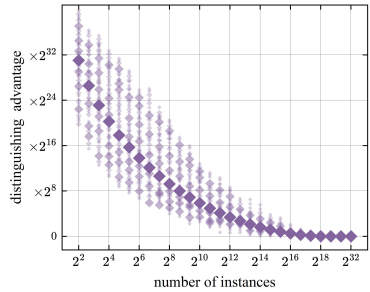}
        }{\caption{Evaluate the distinguishing disadvantage based on the results of Figures \ref{AttackOurRS:a},\ref{AttackOurRS:b},and \ref{AttackOurRS:c}.}\label{AttackOurRS:d}}
        \end{subfloatrow}
    }{\caption{Attack strategy implementation.}\label{AttackOurRS}}
    \end{figure}

\emph{Benchmark of our attack strategy}. 
In our anonymity proof (cf. proof of Theorem \ref{Them:AnonymityProof}), we theoretically proved that the signer-anonymity is hold even the adversary is unbounded, now we examine it in practice. By the attack strategy given in Section \ref{OurMethods}, there is also a non-negligible distinguishing advantage on our scheme, namely, the attack procedure will finally point out one ring member by outputting its index, but which cannot be taken as an evidence to determine the identity of the real signer since the outputted index is distributed uniform over the index distribution. We demonstrate that by the following example.
 
We fix one ring member to generate $Q$ signature sets $\textsf{Sig}_1,\dots,\textsf{Sig}_Q$ independently, let $l\in[Q]$, each $\textsf{Sig}_l$ contains enough signature instances for the attack procedure to output a result. The attack procedure takes as input one $\textsf{Sig}_l$ each time, then outputs an index $\textsf{i}^{(i)}$ for $i\in[N]$ (assume the ring size is $N$). Let $\textsf{Ind} = \{\textsf{i}^{(1)},\dots,\textsf{i}^{(N)}\}$ be the set of all output indexes. The goal is to prove the $\textsf{i}^{(i)}$ is distributed uniform over \textsf{Ind}. 

We define $\Delta(X,Y)$ by two variables $X,Y$ which takes values from the index set \textsf{Ind}, so the statistical distance between $X,Y$ is 

$$\Delta(X,Y)=\frac{1}{2}\sum_{\textsf{i}_i\in \textsf{Ind}}\big\lvert \Pr[X=\textsf{i}_i] - \Pr[Y=\textsf{i}_i]\big\rvert$$

As the statistical distance is computed from the concrete probability of each element, so we first evaluate the probability of the attack procedure correctly pointing out the real signer with the varying instance number for ring size $2$, $128$, $1024$, respectively (cf, Figure \ref{AttackOurRS:a}, \ref{AttackOurRS:b}, and \ref{AttackOurRS:c}). The experimental results  indicate that the evaluated probability ultimately converges to a uniform distribution. This is reaffirmed by the next experiment that we evaluate the distinguishing advantage based on these converged distributions, where the experimental result (cf. Figure \ref{AttackOurRS:d}) indicates the distinguishing advantage is finally converged to zero.




\subsection{Unforgeability}\label{Sec:Unforgeability}\pdfbookmark[1]{Unforgeability}{Unforgeability}

We prove the unforgeability w.r.t. insider corruption of our ring signature scheme by giving a reduction from the hardness of the RingSIS$_{q,n,m,\beta}$ assumption (cf. the Definition \ref{DefofSIS}). The barrier of the reduction is how to simulate the corrupting and signing oracles when embedding the RSIS challenge in the ring. At the beginning of the simulation, the simulator embeds the RSIS challenge $\mathbf A\in R_q^{1\times m}$ in the ring by picking a random index $i^\diamond\in[N]$ then set $\mathbf A^{(i^\diamond)}=\mathbf A$. In this setting, the simulator can generate the trapdoor $\mathbf T^{(i)}$ by trapdoor generation algorithm for each one except the $i^\diamond$-th member, and hence the simulator can response the queries with respect to all ring members except the $i^\diamond$-th member. But under this setting, the question is how to simulate the signing oracle with respect to the index $i^\diamond$. We resolve that by the public trapdoor $\mathbf{T_G}$ of the gadget matrix $\mathbf{G}$, i.e., using $\mathbf{T_G}$ to extract signatures for the queries on index $i^\diamond$. But the public basis $\mathbf T_{\mathbf G}$ also can be exploited by the adversary to trivially forge signatures. To guard that, we add a step in the forgery output phase. Moreover, the Gaussian parameter $\sigma_{\textsc{Saml}}$ should be compensated so that the signatures distributed statistically indistinguishable between the real and simulated world. Additionally, we employ the following simulation tools, the algorithms of \textsf{SampleR} and \textsf{TrapExdABB}, both of which can directly transfer to ring setting since the underlying algorithm is actually the Gaussian sampling algorithm \textsf{GauSample}.

\begin{lemma}[{\rm \textsf{SampleR}} Algorithm {\rm\cite{ABB10b}}]
There is a PPT preimage sampling algorithm {\rm $\textsf{SampleR}(\mathbf B, \mathbf {T_B},\mathbf A,\sigma_{\textsc{tg}})$} which takes as input $\mathbf B\in R_q^{1\times m}$ and its a trapdoor $\mathbf {T_B}$, a matrix $\mathbf A\in R_q^{1\times m}$, and a standard deviation {\rm$\sigma_{\textsc{tg}}$}. Then it outputs a matrix $\mathbf R\in R^{m\times m}$ and its distribution is statistically close to {\rm$D_{R^{m\times m}, \sigma_{\textsc{tg}}}$} such that $\mathbf B=\mathbf A\mathbf R^{-1}\pmod q$ and {\rm$\lVert\mathbf R\rVert \leq \sigma_{\textsc{tg}} \sqrt{m}$}. 
\label{SampleR}
\end{lemma}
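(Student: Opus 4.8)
The plan is to realize $\textsf{SampleR}$ as coordinate-wise preimage sampling under $\mathbf B$ using the trapdoor $\mathbf T_{\mathbf B}$ -- one call to the Gaussian sampler of Lemma~\ref{GauSample} per column of $\mathbf A$ -- and then to verify the three claimed properties (the linear relation $\mathbf B=\mathbf A\mathbf R^{-1}$, the norm bound, and the closeness of the output to a discrete Gaussian) one at a time, isolating the invertibility of $\mathbf R$ over $R_q$ as the only point requiring a dedicated argument. Since the underlying primitive is exactly $\textsf{GauSample}$, the whole statement transports from the integer setting of~\cite{ABB10b} to the ring setting with no essential change, just as its companion algorithm does.

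First I would describe the algorithm. Parse the target as $\mathbf A=(a_1,\dots,a_m)\in R_q^{1\times m}$. For each $j\in[m]$, run $\textsf{GauSample}(\mathbf B,\mathbf T_{\mathbf B},a_j,\sigma_{\textsc{tg}})$ to obtain $\mathbf r_j\in R^{m}$ with $\mathbf B\mathbf r_j=a_j\pmod q$; this is a legitimate call because $\mathbf T_{\mathbf B}$ is a trapdoor for $\mathbf B$ and, for the matrices $\mathbf B$ to which this lemma is applied in the unforgeability reduction, $\sigma_{\textsc{tg}}$ meets the precondition $\sigma_{\textsc{tg}}\ge s_1(\mathbf T_{\mathbf B})\cdot\omega(\sqrt{\log n})$ of Lemma~\ref{GauSample}. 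Set $\mathbf R=[\mathbf r_1\mid\dots\mid\mathbf r_m]\in R^{m\times m}$. By construction $\mathbf B\mathbf R=\mathbf A\pmod q$, which rearranges to $\mathbf B=\mathbf A\mathbf R^{-1}\pmod q$ as soon as $\mathbf R$ is invertible over $R_q$.

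The norm bound and the distribution then follow routinely from Lemma~\ref{GauSample}. Each column satisfies $\lVert\mathbf r_j\rVert\le\sigma_{\textsc{tg}}\sqrt m$ with overwhelming probability, so a union bound over $j\in[m]$ gives $\lVert\mathbf R\rVert\le\sigma_{\textsc{tg}}\sqrt m$ with overwhelming probability. For the distribution, the sampler of Lemma~\ref{GauSample} outputs a discrete Gaussian over the relevant coset that is statistically independent of $\mathbf T_{\mathbf B}$, so each $\mathbf r_j$ is within $\mathrm{negl}(n)$ of $D_{R^{m},\sigma_{\textsc{tg}}}$ conditioned on the syndrome $a_j$; since the columns are sampled independently, a hybrid over $j\in[m]$ shows $\mathbf R$ is within $m\cdot\mathrm{negl}(n)=\mathrm{negl}(n)$ of $D_{R^{m\times m},\sigma_{\textsc{tg}}}$ conditioned on $\mathbf B\mathbf R=\mathbf A\pmod q$ (equivalently on $\mathbf B=\mathbf A\mathbf R^{-1}$) and independent of the trapdoor, which is the intended reading of the claimed closeness to $D_{R^{m\times m},\sigma_{\textsc{tg}}}$.

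The main obstacle is showing $\mathbf R$ is invertible over $R_q$ with overwhelming probability, so that $\mathbf R^{-1}$ -- and hence the identity $\mathbf B=\mathbf A\mathbf R^{-1}$ -- is well defined. Because $q=3^k$, the matrix $\mathbf R$ is invertible over $R_q$ iff $\mathbf R\bmod 3$ is invertible over $R/3R\cong\mathbb F_3[X]/(X^n+1)$. Since $\sigma_{\textsc{tg}}=\omega(\sqrt{\ln nw})$ exceeds the smoothing parameter $\eta_\varepsilon(3\mathbb Z)$ for a negligible $\varepsilon$, the reduction modulo $3$ of each Gaussian coordinate is within $\mathrm{negl}(n)$ of uniform over $\mathbb Z_3$; a union bound over the $n m^2$ coordinates shows $\mathbf R\bmod 3$ is within $\mathrm{negl}(n)$ of a uniform matrix over $(\mathbb F_3[X]/(X^n+1))^{m\times m}$, and such a uniform matrix is invertible except with probability $\mathrm{negl}(n)$ because $R/3R$ is a product of $O(1)$ finite fields of size $3^{\Omega(n)}$ (for $n$ a power of $2$, $X^n+1$ splits over $\mathbb F_3$ into two irreducible factors of degree $n/2$). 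If one prefers to avoid this smoothing-parameter bookkeeping, one can instead append a rejection step that resamples $\mathbf R$ whenever it fails to be invertible: the rejection probability is $\mathrm{negl}(n)$, so the output distribution is perturbed by only $\mathrm{negl}(n)$ and all other claims are unaffected. Combining the linear relation, the norm bound, the distributional closeness, and invertibility yields the lemma.
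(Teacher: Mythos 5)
The paper never actually proves this lemma: it is imported from~\cite{ABB10b} with only the one-line remark that \textsf{SampleR} is the sub-routine of \textsf{SampleRwithBasis} that produces $\mathbf R$ once the pair $(\mathbf B,\mathbf T_{\mathbf B})$ has already been generated. Your reconstruction --- column-by-column preimage sampling under $\mathbf B$ via \textsf{GauSample}, then an invertibility check over $R/3R$ --- is exactly that sub-routine ported to the ring setting, so the route is the right one and there is no competing argument in the paper to compare against. Two small points worth tightening: the reduced columns $\bar{\mathbf r}_j \bmod 3$ are not near-uniform over $(R/3R)^m$ but near-uniform over the codimension-one slice $\{\bar{\mathbf B}\bar{\mathbf r}_j=\bar a_j\}$; the singularity probability is still negligible (and your rejection fallback handles it anyway), but the plain ``smoothing gives uniform'' shorthand is not literally correct. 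And, as you yourself flag, the claimed closeness to $D_{R^{m\times m},\sigma_{\textsc{tg}}}$ holds conditionally on the syndrome constraint, or marginally when $\mathbf B$ is itself near-uniform, not unconditionally.

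The step that does not hold as written is the precondition check for \textsf{GauSample}. You assert that ``$\sigma_{\textsc{tg}}$ meets the precondition $\sigma_{\textsc{tg}}\ge s_1(\mathbf T_{\mathbf B})\cdot\omega(\sqrt{\log n})$'' for the trapdoors to which this lemma is applied. But in the unforgeability reduction those trapdoors are produced by $\textsf{TrapGen}(q,n,\sigma_{\textsc{tg}})$, and by Lemma~\ref{TrapGen} they satisfy $s_1(\mathbf T_{\mathbf B})\le\sigma_{\textsc{tg}}\cdot O(\sqrt w+\sqrt k+\omega(\sqrt{\log n}))$ --- a quantity strictly larger than $\sigma_{\textsc{tg}}$ --- so the inequality $\sigma_{\textsc{tg}}\ge s_1(\mathbf T_{\mathbf B})\cdot\omega(\sqrt{\log n})$ cannot hold, and \textsf{GauSample} cannot legitimately be invoked at width $\sigma_{\textsc{tg}}$. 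In~\cite{ABB10b} the width used to sample $\mathbf R$ is $\sigma_R=(\text{basis quality of }\mathbf T_{\mathbf B})\cdot\omega(\sqrt{\log m})$, not the \textsf{TrapGen} width itself, and the norm bound that follows is $\lVert\mathbf R\rVert\le\sigma_R\sqrt m$ with $\sigma_R\gg\sigma_{\textsc{tg}}$. The paper's restatement conflates these two parameters; your proof inherits the error by asserting the precondition rather than checking it against Lemma~\ref{TrapGen}. To actually close the argument one must either sample $\mathbf R$ at the larger width and weaken both the norm bound and the stated Gaussian parameter accordingly (which then propagates into $\beta$ downstream), or exhibit unusually short trapdoors --- which \textsf{TrapGen} does not supply.
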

\vspace{-1mm}

The \textsf{SampleR} algorithm is actually a sub-algorithm of the \textsf{SampleRwithBasis} algorithm of~\cite{ABB10b}, which is used to build the relation between the RingSIS challenge $\mathbf A$ and the ring of verification keys.

\begin{lemma}[{\rm \textsf{TrapExdR}} Algorithm~{\rm\cite{ABB10a}}]
There is a PPT trapdoor extend algorithm {\rm $\textsf{TrapExdR}(\mathbf A,\mathbf {B},\mathbf R, \mathbf {T_B})$} which on input $\mathbf A\in R_q^{1\times m}$, $\mathbf B\in R_q^{1\times m}$ and its trapdoor $\mathbf {T_B}$, and $\mathbf R\in R^{m\times m}$, outputs an extended trapdoor $\mathbf {T_F}$ of $\mathbf F$ such that $s_1(\mathbf {T_F})<(\lVert {\mathbf {R}}\rVert+1)\cdot s_1(\mathbf {T_B})$ where $\mathbf F=[\mathbf A \mid \mathbf {AR}+\mathbf B]\in R_q^{1\times 2m}$.
\label{TrapExdR}
\end{lemma}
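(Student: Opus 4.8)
The plan is to obtain $\mathbf{T_F}$ by a unimodular change of basis over $R$. The key identity is
\[ \mathbf{F}\cdot\mathbf{W}\;=\;[\mathbf{A}\mid\mathbf{AR}+\mathbf{B}]\cdot\mathbf{W}\;=\;[\mathbf{A}\mid\mathbf{B}],\qquad \mathbf{W}:=\begin{pmatrix}\mathbf{I}_m & -\mathbf{R}\\ \mathbf{0} & \mathbf{I}_m\end{pmatrix}\in R^{2m\times 2m}, \]
where $\mathbf{W}$ is unimodular over $R$ (its inverse is obtained by replacing $-\mathbf{R}$ with $\mathbf{R}$), so that $\Lambda_q^{\perp}(\mathbf{F})=\mathbf{W}\cdot\Lambda_q^{\perp}([\mathbf{A}\mid\mathbf{B}])$. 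Thus it suffices to (i) build a trapdoor $\mathbf{T}$ of the concatenation $[\mathbf{A}\mid\mathbf{B}]$ from $\mathbf{T_B}$, and then (ii) output $\mathbf{T_F}:=\mathbf{W}\mathbf{T}$.

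For step (i) I would use the basis-extension idea of Cash et al.~\cite{CHKP10} (the \textsf{ExtBasis} technique), adapted to $R_q$: since $\mathbf{B}$ possesses the trapdoor $\mathbf{T_B}$, the map $\mathbf{B}\colon R_q^m\to R_q$ is surjective, so with $\mathbf{T_B}$ one can sample, for each $i$, a short $\mathbf{w}_i\in R^m$ with $\mathbf{B}\mathbf{w}_i=-\mathbf{A}\mathbf{e}_i\bmod q$. The columns of $\bigl(\begin{smallmatrix}\mathbf{0}\\ \mathbf{T_B}\end{smallmatrix}\bigr)$, which lie in $\Lambda_q^{\perp}([\mathbf{A}\mid\mathbf{B}])$ because $\mathbf{B}$ annihilates every column of $\mathbf{T_B}$, together with the vectors $\bigl(\begin{smallmatrix}\mathbf{e}_i\\ \mathbf{w}_i\end{smallmatrix}\bigr)$ generate $\Lambda_q^{\perp}([\mathbf{A}\mid\mathbf{B}])$ over $R$; one checks fullness exactly as for $\textsf{ExtBasis}$ and verifies that the resulting $\mathbf{T}$ has $s_1(\mathbf{T})$ no larger than $s_1(\mathbf{T_B})$ up to the contribution of the unit glue vectors.

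For step (ii) the norm accounting is immediate: writing $\mathbf{W}=\mathbf{I}_{2m}+\mathbf{N}$ with $\mathbf{N}$ the embedding of $-\mathbf{R}$ into the top-right block, we have $s_1(\mathbf{N})=s_1(\mathbf{R})=\lVert\mathbf{R}\rVert$, hence $s_1(\mathbf{W})\le 1+\lVert\mathbf{R}\rVert$ and $s_1(\mathbf{T_F})=s_1(\mathbf{W}\mathbf{T})\le s_1(\mathbf{W})\,s_1(\mathbf{T})<(\lVert\mathbf{R}\rVert+1)\,s_1(\mathbf{T_B})$; moreover $\mathbf{T_F}$ really is a trapdoor of $\mathbf{F}$ since $\mathbf{W}$ bijects the two perpendicular lattices. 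The main obstacle is the bookkeeping inside step (i): one must confirm that the extended set spans the whole lattice over $R$ (this is exactly where surjectivity of $\mathbf{B}$, guaranteed by the existence of $\mathbf{T_B}$, is needed) and that the spectral quality is preserved without the extra $O(\sqrt m)$ factor a naive re-randomization would introduce --- this is the precise statement we import from \cite{ABB10a}. The remaining ingredients, namely the ring-to-integer-lattice translation and the singular-value inequalities, are routine.
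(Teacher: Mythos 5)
The paper does not prove this lemma; it is cited from~\cite{ABB10a} (and adapted to the ring/MP setting), so there is no in-paper argument to compare against. Your unimodular change of basis $\mathbf{W}=\bigl(\begin{smallmatrix}\mathbf{I}&-\mathbf{R}\\\mathbf{0}&\mathbf{I}\end{smallmatrix}\bigr)$ with $\mathbf{F}\mathbf{W}=[\mathbf{A}\mid\mathbf{B}]$ and $\Lambda_q^\perp(\mathbf{F})=\mathbf{W}\Lambda_q^\perp([\mathbf{A}\mid\mathbf{B}])$ is indeed the right starting point and matches the $\textsf{SampleRight}$ idea of~\cite{ABB10a}.

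However, there are two genuine gaps. First, the crucial norm claim $s_1(\mathbf{T})\le s_1(\mathbf{T_B})$ for the basis $\mathbf{T}$ of $\Lambda_q^\perp([\mathbf{A}\mid\mathbf{B}])$ obtained by the $\textsf{ExtBasis}$-style extension is asserted but not established, and in fact does not hold for the spectral norm: $\textsf{ExtBasis}$ preserves the \emph{Gram--Schmidt} norm $\lVert\widetilde{\mathbf{T}}\rVert$, not $s_1(\mathbf{T})$, and the glue vectors $\bigl(\begin{smallmatrix}\mathbf{e}_i\\\mathbf{w}_i\end{smallmatrix}\bigr)$ you introduce (with $\mathbf{w}_i$ a preimage sampled via $\mathbf{T_B}$, hence of norm on the order of $\lVert\widetilde{\mathbf{T}_\mathbf{B}}\rVert\sqrt{m}\,\omega(\sqrt{\log n})$) can drive $s_1(\mathbf{T})$ far above $s_1(\mathbf{T_B})$. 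Writing ``this is the precise statement we import from~\cite{ABB10a}'' at exactly this point makes the argument circular, since this is the content of the lemma being proved. Second, the paper follows the Micciancio--Peikert gadget-trapdoor convention (Lemma~\ref{TrapGen} outputs $\mathbf{T}\in R^{w\times k}$ with $\mathbf{B}\bigl[\begin{smallmatrix}\mathbf{T}\\\mathbf{I}\end{smallmatrix}\bigr]=\mathbf{H}\mathbf{G}$), so $\mathbf{B}$ does \emph{not} annihilate the columns of $\mathbf{T_B}$ and the vectors $\bigl(\begin{smallmatrix}\mathbf{0}\\\mathbf{T_B}\end{smallmatrix}\bigr)$ do not lie in $\Lambda_q^\perp([\mathbf{A}\mid\mathbf{B}])$; the full-short-basis picture you use belongs to the GPV/CHKP framework. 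In the MP convention the construction is purely algebraic and bypasses any basis extension: one sets $\mathbf{T_F}=\bigl[\begin{smallmatrix}-\mathbf{R}\bigl[\begin{smallmatrix}\mathbf{T_B}\\\mathbf{I}_k\end{smallmatrix}\bigr]\\[2pt]\mathbf{T_B}\end{smallmatrix}\bigr]$, which verifies $\mathbf{F}\bigl[\begin{smallmatrix}\mathbf{T_F}\\\mathbf{I}_k\end{smallmatrix}\bigr]=\mathbf{B}\bigl[\begin{smallmatrix}\mathbf{T_B}\\\mathbf{I}_k\end{smallmatrix}\bigr]$ directly, and the spectral bound then follows from $s_1$ of a block column without invoking $\textsf{ExtBasis}$ at all. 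I would rework the argument in that framework and carry out the $s_1$ bookkeeping honestly rather than appealing to the cited source.
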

  \vspace{-1mm}

The \textsf{TrapExdR} algorithm is actually a sub-algorithm of the \textsf{SampleRight} algorithm of~\cite{ABB10a}, which is useful in the simulating of signing oracle. Specifically, as aforementioned, the simulator cannot simulate the signing queries with respect to the index $i^\diamond$ since there is no corresponding trapdoor, which can be resolved by using \textsf{TrapExdR} algorithm to generate trapdoor from the public trapdoor $\mathbf{T_G}$ of $\mathbf{G}$.

\begin{theorem}[Unforgeability]\label{Them:UnforgeabilityProof}
When set the parametes {\rm\begin{small}
$$\sigma_{\textsc{Saml}}=O(\sqrt{Nk\ell}\log m + \sqrt{Nkm})\cdot \omega({\log n}),\quad \beta=O(m^{3/2})\cdot \sigma_{\textsc{td}}\sigma_{\textsc{Saml}}$$\end{small}}the ring signature scheme presented in Section \ref{Construction} is unforgeable w.r.t. insider corruption in the standard model.
\end{theorem}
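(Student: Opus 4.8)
The plan is to reduce from the hardness of the $\textsf{RingSIS}_{q,n,m,\beta}$ problem: given a challenge row vector $\mathbf{A}\in R_q^{1\times m}$, the simulator $\mathcal{B}$ embeds $\mathbf{A}$ into the ring and uses a successful forger $\mathcal{A}$ to extract a short nonzero preimage $\mathbf{x}$ with $\mathbf{Ax}=\mathbf{0}$. First I would set up the simulation: $\mathcal{B}$ picks a random target index $i^\diamond\xleftarrow{_\$}[N]$ and sets $\mathbf{A}^{(i^\diamond)}=\mathbf{A}$, while for every $i\neq i^\diamond$ it runs $\textsf{TrapGen}$ honestly to obtain $(\mathbf{A}^{(i)},\mathbf{T}^{(i)})$. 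For the remaining verification-key components of member $i^\diamond$, I would reconstruct them via the gadget-matrix trick of Section \ref{sec:KeyHomo}: choose uniform low-norm matrices $\mathbf{R}^{(i^\diamond)}_\bullet\xleftarrow{_\$}\{1,-1\}^{m\times m}$ and bits $b^{(i^\diamond)}_\bullet$ and set each component to $\mathbf{A}\mathbf{R}_\bullet + b_\bullet\mathbf{G}$, with the bits chosen so that the key-homomorphic evaluation of $C_\textsf{PRF}$ on message $\boldsymbol{\mu}$ yields $\mathbf{A}_{\boldsymbol{\mu}}^{(i^\diamond)}=\mathbf{A}\mathbf{R}_{C,\boldsymbol{\mu}} + d\cdot\mathbf{G}$ where $d=\textsf{PRF}(\mathbf{k}^{(i^\diamond)},\boldsymbol{\mu})$ and $\lVert\mathbf{R}_{C,\boldsymbol{\mu}}\rVert$ is bounded by Lemma \ref{Lemma7}; crucially, then $\mathbf{A}_{1-d}^{(i^\diamond)}-\mathbf{A}_{\boldsymbol{\mu}}^{(i^\diamond)}$ picks up a nonzero $\mathbf{G}$ term (since $1-d\neq d$), so $\mathbf{F}_{\boldsymbol{\mu},1-d}^{(i^\diamond)}=[\mathbf{A}\mid\mathbf{A}\mathbf{R}'+\mathbf{G}]$ for a known low-norm $\mathbf{R}'$ and $\textsf{TrapExdR}$ (Lemma \ref{TrapExdR}) together with the public gadget trapdoor $\mathbf{T_G}$ yields a usable trapdoor. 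I would also use $\textsf{SampleR}$ (Lemma \ref{SampleR}) so that the reconstructed matrices are statistically close to uniform, matching the real distribution of $\textsf{KeyGen}$.

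Next I would show how to answer oracle queries. Corruption queries on any $i\neq i^\diamond$ are answered by handing over the stored randomness $\rho_\textsc{kg}^{(i)}$; a corruption query on $i^\diamond$ causes $\mathcal{B}$ to abort — but since $i^\diamond$ is information-theoretically hidden from $\mathcal{A}$ and the winning condition forces $\textsf{R}^*\subseteq\textsf{S}\setminus\textsf{C}$ with $\textsf{vk}^{(i^\diamond)}\in\textsf{R}^*$ in the non-abort case, $\mathcal{B}$ guesses correctly with probability $\geq 1/N$, incurring only a polynomial loss (this is where the claimed tightness is slightly delicate — I would argue that conditioning on $\mathcal{A}$'s forgery ring, $i^\diamond$ is still uniform over that ring, so the loss is $1/|\textsf{R}^*|\geq 1/N$, not $1/N$ per query). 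For signing queries $(\boldsymbol{\mu},\textsf{R},s)$: if $s\neq i^\diamond$, run $\textsf{Sign}$ honestly with $\mathbf{T}^{(s)}$; if $s=i^\diamond$, the simulator cannot delegate from $\mathbf{T}^{(i^\diamond)}$ (it has none), so it instead uses $\textsf{TrapExdR}$ on $\mathbf{F}_{\boldsymbol{\mu},1-d}^{(i^\diamond)}=[\mathbf{A}\mid\mathbf{A}\mathbf{R}'+\mathbf{G}]$ and the public $\mathbf{T_G}$ to build a trapdoor for this block, then extends it (by a standard zero-padding / $\textsf{TrapDel}$ argument) to a trapdoor for the full $\mathbf{F}'_{\boldsymbol{\mu},1-d}$ and calls $\textsf{GauSample}$. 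The resulting $s_1$-bound of this simulated trapdoor is $(\lVert\mathbf{R}'\rVert+1)\cdot s_1(\mathbf{T_G})=O(\ell\log m+\sqrt{m})\cdot O(\sqrt{m})$; to make $\textsf{GauSample}$'s output statistically independent of which trapdoor was used (real vs. simulated), Lemma \ref{GauSample} requires $\sigma_{\textsc{Saml}}\geq s_1(\cdot)\cdot\omega(\sqrt{\log n})$, which — after accounting for the $N$ concatenated blocks — is exactly the stated $\sigma_{\textsc{Saml}}=O(\sqrt{Nk\ell}\log m+\sqrt{Nkm})\cdot\omega(\log n)$, so real and simulated signing oracles are statistically indistinguishable.

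Finally, the extraction step. When $\mathcal{A}$ outputs a valid forgery $(\boldsymbol{\mu}^*,\textsf{R}^*,\Sigma^*=\mathbf{e}^*)$ with $i^\diamond\in\textsf{R}^*$, $\textsf{Ver}$ accepts means $\mathbf{F}'_{\boldsymbol{\mu}^*,d^*}\cdot\mathbf{e}^*=\mathbf{u}^{(i')}\pmod q$ for some $d^*\in\{0,1\}$. Here I would invoke the "$d=0$ or $1$ in $\textsf{Ver}$ but $1-d$ in $\textsf{Sign}$" asymmetry noted in the construction: the honest signatures live only in the $1-d$ branch, whereas verification accepts either branch, so with probability $\geq 1/2$ (over the random bit of $\textsf{PRF}$ on $\boldsymbol{\mu}^*$, which is pseudorandom hence close to uniform, and unpredictable because $\mathbf{k}^{(i^\diamond)}$ is hidden — this is where the PRF security of \cite{BPR12} enters) the forgery uses the $d^* = \textsf{PRF}(\mathbf{k}^{(i^\diamond)},\boldsymbol{\mu}^*)$ branch, in which the $i^\diamond$-block $\mathbf{F}_{\boldsymbol{\mu}^*,d^*}^{(i^\diamond)}=[\mathbf{A}\mid\mathbf{A}\mathbf{R}'']$ has \emph{no} $\mathbf{G}$ term, so $\mathbf{B}$ can fold $\mathbf{e}^*$ restricted to that block into a vector $\mathbf{x}=\mathbf{e}^{*(i^\diamond)}_{(1)}+\mathbf{R}''\mathbf{e}^{*(i^\diamond)}_{(2)}$ (plus, if $\mathbf{u}^{(i')}$ is not $\mathbf{A}^{(i^\diamond)}$'s own syndrome, moving the other blocks' contributions over, again reconstructed through $\mathbf{A}$) with $\mathbf{Ax}=\mathbf{0}\pmod q$ and $\lVert\mathbf{x}\rVert\leq O(m^{3/2})\cdot\sigma_{\textsc{td}}\sigma_{\textsc{Saml}}=\beta$. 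The nonzero-ness of $\mathbf{x}$ follows from condition (3) of the forgery — $(\boldsymbol{\mu}^*,\textsf{R}^*,\Sigma^*)\notin\textsf{L}$ — combined with a collision / many-preimages argument: because $\textsf{GauSample}$ produces a sufficiently wide distribution, for every transcript there are $\geq 2$ valid preimages with noticeable probability, and the hidden one would also have been returnable, so $\mathcal{A}$ cannot have enough information to always output the exact $\mathbf{x}$ the reduction's folding maps to zero.

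\textbf{Main obstacle.} I expect the hardest part to be the signing-oracle simulation for index $i^\diamond$ together with the matching $\sigma_{\textsc{Saml}}$ bookkeeping: one must simultaneously (i) produce a trapdoor for the \emph{entire} $2Nm$-wide concatenated matrix $\mathbf{F}'_{\boldsymbol{\mu},1-d}$ out of only a gadget trapdoor sitting inside the single $i^\diamond$-block, handling the other $N-1$ honestly-keyed blocks, (ii) verify that $\textsf{GauSample}$'s output is statistically independent of that trapdoor (hence indistinguishable from the real delegated-trapdoor signatures), which forces the precise $\sqrt{N}$-dependence in $\sigma_{\textsc{Saml}}$, and (iii) still keep $\beta$ small enough that the extracted $\mathbf{x}$ is a legitimate $\textsf{RingSIS}$ solution. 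A secondary subtlety is arguing the forgery lands in the PRF branch without the adversary being able to steer it — this needs a careful hybrid that first replaces $\textsf{PRF}(\mathbf{k}^{(i^\diamond)},\cdot)$ with a truly random function (losing the PRF advantage) and then observes the branch bit on the fresh message $\boldsymbol{\mu}^*$ is unbiased.
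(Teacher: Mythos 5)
Your proposal correctly identifies the overall shape of the reduction (RingSIS challenge embedded at a random index $i^\diamond$, gadget-matrix/$\textsf{TrapExdR}$ simulation of the signing oracle at $i^\diamond$, the PRF-branch asymmetry between $\textsf{Sign}$ and $\textsf{Ver}$, abort-on-corruption of $i^\diamond$), and your bookkeeping for $\sigma_{\textsc{Saml}}$ and $\beta$ tracks the paper's. However, the extraction step has a genuine gap that the paper's proof resolves with two ingredients you omit. First, the forgery's ring equation is $\sum_{i^*}[\mathbf A^{(i^*)}\mid\cdots]\cdot\mathbf e^{(i^*)}=\mathbf u^{(i')}$, i.e., it involves all $N^*$ blocks, not just the $i^\diamond$ block. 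You need every $\mathbf A^{(i^*)}$ for $i^*\neq i^\diamond$ to be expressible in terms of the RingSIS challenge $\mathbf A^{(i^\diamond)}$, which is exactly why the paper runs $\textsf{SampleR}(\mathbf A^{(i)},\mathbf T^{(i)},\mathbf A^{(i^\diamond)},\sigma_{\textsc{tg}})$ for every $i\neq i^\diamond$ in the setup, fixing a low-norm $\bar{\mathbf R}^{(i)}$ with $\mathbf A^{(i)}=\mathbf A^{(i^\diamond)}(\bar{\mathbf R}^{(i)})^{-1}$, and also builds the gadget-structured matrices $\mathbf A_d^{(i)},\mathbf B_j^{(i)},\mathbf C_d^{(i)}$ for \emph{all} $i\in[N]$, not only $i^\diamond$ as you propose. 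Without this, your phrase ``moving the other blocks' contributions over, again reconstructed through $\mathbf A$'' has nothing to fall back on: the other blocks' contributions are opaque to the reduction.

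Second, the right-hand side $\mathbf u^{(i')}$ of the ring equation is not zero, so you also need it to be a known short image of $\mathbf A^{(i^\diamond)}$. The paper arranges this by (i) forcing $\mathbf A^{(i^\diamond)}$ to be lexicographically first in $\textsf S$, which pins $\mathbf u^{(i')}=\mathbf u^{(i^\diamond)}$ in any winning forgery, and (ii) setting $\mathbf u^{(i^\diamond)}=\mathbf A^{(i^\diamond)}\bar{\mathbf e}$ for a fresh short Gaussian $\bar{\mathbf e}$ held by the simulator (and used again for the nonzeroness argument, since $\bar{\mathbf e}$ is statistically hidden from $\mathcal A$). You never choose $\mathbf u^{(i^\diamond)}$ in this structured way, so your folded preimage has no $-\bar{\mathbf e}$ term to produce a zero syndrome; and relatedly, the paper's extraction explicitly aborts if the forgery verifies in the $1-d^*$ branch, since that branch is exactly the one for which the public $\mathbf T_{\mathbf G}$ gives the adversary a trapdoor -- this is a filtering check, not merely a ``with probability $1/2$ we are in the right branch'' estimate as you frame it. These are the load-bearing steps you would need to add to make your extraction go through.
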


\begin{proof}

Consider the following security game between an adversary $\mathcal{A}$ and the reduction $\mathcal{S}$. Upon receiving a challenge $\mathbf A\in R_q^{1\times m}$, $\mathcal{S}$ simulates as follows. 

\noindent$\textbf{Setup Phase}$. $\mathcal{S}$ invokes $\textsf{PP}\leftarrow \textsf{Setup}(1^n;\rho_{\textsc{st}})$ and does as follows.

\begin{itemize}
	\item Select a random index $i^\diamond\in [N]$, set $\mathbf{A}^{(i^\diamond)}=\mathbf{A}$. For the remaining index $i\neq i^\diamond$ i.e., for the $i\in [N]\setminus i^\diamond$, it does as follows:
	\begin{enumerate}
		\item Compute $(\mathbf A^{(i)}, \mathbf T^{(i)})\leftarrow\textsf{TrapGen}(q,n,\sigma_{\textsc{tg}}; \rho_\textsc{kg}^{(i)})$. 
		
		\item If $\mathbf A^{(i)}$ is lexicographically behind the $\mathbf A^{(i^\diamond)}$, else goto Step 1.
  
        \textsf{This step is to ensure $\mathbf A^{(i^\diamond)}$ is lexicographically first in the set $\{\mathbf A^{(i)}\}_{i\in[N]}$.}
		
		\item Sample $\bar{\mathbf R}^{(i)}\leftarrow \textsf{SampleR}(\mathbf A^{(i)}, \mathbf T^{(i)}, \mathbf A^{(i^\diamond)}, \sigma_{\textsc{tg}})$. By Lemma \ref{SampleR}, it holds that $\mathbf A^{(i)} = \mathbf A^{(i^\diamond)}{(\bar{\mathbf R}^{(i)})^{-1}}$. 
	\end{enumerate}

	\item For $i\in[N]$ and $d\in \{0,1\}$: Choose $\mathbf R_{\mathbf A_{d}^{(i)}}, \mathbf R_{\mathbf C_{d}^{(i)}}\xleftarrow{_\$} \{1,-1\}^{m\times m}$. Construct the ``PRF output'' matrix $\mathbf A_{d}^{(i)}=\mathbf A^{(i)}\mathbf R_{\mathbf A_{d}^{(i)}}+d\mathbf G$ and the ``message matching'' matrix $\mathbf C_{d}^{(i)}=\mathbf A^{(i)}\mathbf R_{\mathbf C_{d}^{(i)}}+d\mathbf G$. 
    
	\item For $i\in[N]$, select a PRF key $\mathbf k^{(i)}=(k_1^{(i)},\dots,k_k^{(i)})\xleftarrow{_\$}\{0,1\}^\kappa$.
	
    \item For $i\in[N]$, $j\in[\kappa]$, choose $\mathbf R_{\mathbf B_{j}^{(i)}}\xleftarrow{_\$} \{1,-1\}^{m\times m}$ and construct the ``PRF key matching'' matrix $\mathbf B_{j}^{(i)}=\mathbf A^{(i)}\mathbf R_{\mathbf B_{j}^{(i)}}+k_j^{(i)}\mathbf G$. 
    
    \item For $i\in[N]\setminus i^\diamond$, select $\mathbf u^{(i)}\xleftarrow{_\$}R_q$.
    \item For $i=i^\diamond$, sample a random preimage $\bar{\mathbf e}\xleftarrow{_\$}D_{R^{m},\sigma_{\textsc{tg}}}$, set $\mathbf u^{(i^\diamond)}=\mathbf A^{(i^\diamond)}\bar{\mathbf e}$.

    \item Set $\textsf{S}=\{\textsf{vk}^{(i)}\}_{i\in[N]}$ where each $\textsf{vk}^{(i)}=(\mathbf A^{(i)},(\mathbf A_{0}^{(i)},\mathbf A_{1}^{(i)}), \{\mathbf B_{j}^{(i)}\}_{j\in[\kappa]},(\mathbf C_{0}^{(i)},$ $\mathbf C_{1}^{(i)}), \mathbf u^{(i)})$, then sends (\textsf{PP}, $\textsf{S}, \rho_{\textsc{st}}$) to $\mathcal{A}$.
    \end{itemize}

\noindent$\textbf{Simulating Signing Oracle}$. For the query tuple $(\boldsymbol{\mu},\textsf{R},s)$ provided by the adversary, $\mathcal{S}$ replies that as below.

\begin{itemize}\setlength{\itemsep}{2pt} \setlength{\parsep}{0pt} \setlength{\parskip}{1pt}
		
	\item Compute ${d}=\textsf{PRF}({\mathbf k}^{(s)}, \boldsymbol{\mu})$.
	\item Select $\mathbf u^{(i')}\leftarrow\{\mathbf u^{(i)}\}_{i\in[N]}$ for which the corresponding $\mathbf A^{(i')}$ is lexicographical first in the set $\{\mathbf A^{(i)}\}_{i\in[N]}$.

	\item For $i\in[N]$, compute $\mathbf A_{\boldsymbol{\mu}}^{(i)}=\textsf{Eval}(C_\textsf{PRF},(\{\mathbf B_{j}\}_{j\in[\kappa]}^{(i)},\mathbf C_{\mu_1}^{(i)},\dots,\mathbf C_{\mu_t}^{(i)}))$, that is\[\mathbf A_{\boldsymbol{\mu}}^{(i)}=\mathbf A^{(i)} \mathbf R_{\boldsymbol{\mu}}^{(i)}+d\mathbf G\]Note that the PRF function with inputs (message $\boldsymbol{\mu}$ and PRF key ${\mathbf k}^{(s)}$) is also evaluated since the message $\boldsymbol{\mu}$ was embed in the matrix $\mathbf C_{d}^{(i)}$ while the PRF key was in matrix $\mathbf B_{j}^{(i)}$.
	
	\item Let $\mathbf F_{\boldsymbol{\mu}, 1-{d}}^{(i)}=\big[\mathbf A^{(i)} \mid \mathbf A_{1-{d}}^{(i)}-\mathbf A_{\boldsymbol{\mu}}^{(i)}\big]$, we have
	\begin{align*}
		\mathbf F_{\boldsymbol{\mu}, 1-{d}}^{(i)}=\Big[\mathbf A^{(i)}\mid \mathbf A^{(i)} \big(\mathbf R_{\mathbf A_{1-{d}}^{(i)}}-\mathbf R_{\boldsymbol{\mu}}^{(i)}\big)+(1-2d)\mathbf G\Big].
	\end{align*}

	\item If $i\neq i^\diamond$, use $\mathbf T^{(i)}$ to sample the preimage $\mathbf e$ as \textsf{Sign} algorithm. 
	
	\item If $i= i^\diamond$, it first extends the public trapdoor $\mathbf T_{\mathbf G}$ to the augmented trapdoor $\mathbf T_{\mathbf F_{\boldsymbol{\mu}, 1-d}^{(i^\diamond)}}$ by the \textsf{TrapExdR} algorithm \begin{align*}\mathbf T_{\mathbf F_{\boldsymbol{\mu}, 1-d}^{(i^\diamond)}}\leftarrow\textsf{TrapExdR}\big(\mathbf A^{(i^\diamond)}, \mathbf{G}, \big(\mathbf R_{\mathbf A_{1-{d}}^{(i)}}-\mathbf R_{\boldsymbol{\mu}}^{(i)}\big), \mathbf{T_G}\big)\end{align*}Then let $\mathbf F'_{\boldsymbol{\mu},1-d}=\big[\mathbf F_{\boldsymbol{\mu},1-d}^{(1)}\mid\dots\mid\mathbf F_{\boldsymbol{\mu},1-d}^{(N)}\big]$, the delegated trapdoor is  \begin{align*}\mathbf T_{\mathbf F'_{\boldsymbol{\mu},1-d}}\leftarrow\textsf{TrapDel}\Big(\mathbf T_{\mathbf F_{\boldsymbol{\mu}, 1-d}^{(i^\diamond)}}, \mathbf F'_{\boldsymbol{\mu},1-d}, \sigma_{\textsc{td}}\Big)\end{align*}Now we can sample $\mathbf e$ by $\mathbf e\leftarrow\textsf{GauSample}(\mathbf F'_{\boldsymbol{\mu},1-d}, \mathbf T_{\mathbf F'_{\boldsymbol{\mu},1-d}}, \mathbf u^{(i')}, \sigma_{\textsc{Saml}})$. 
	
	
	\item Return the signature $\Sigma=\mathbf e$ for the tuple $(\boldsymbol{\mu},\textsf{R},s)$ to $\mathcal{A}$ and adds $(\boldsymbol{\mu},\textsf{R},\Sigma)$ to a list {\rm \textsf{L}} which $\mathcal{S}$ initialized in prior.
	
\end{itemize}

\noindent$\textbf{Simulating Corrupting Oracle}$. $\mathcal{A}$ queries the corrupting oracle $\textsf{OCorrupt}(\cdot)$ with index $i$, $\mathcal{S}$ returns the randomness $\rho_{\textsc{kg}}^{(i)}$ to $\mathcal{A}$ and adds {\rm $\textsf{vk}^{(i)}$} to a set {\rm \textsf{C}} which $\mathcal{S}$ initialized in prior, while if $i=i^\diamond$ then $\mathcal{S}$ aborts. 

\noindent $\textbf{Exploiting the Forgery}$. $\mathcal{A}$ outputs a forgery $(\boldsymbol{\mu}^*,\textsf{R}^*,\Sigma^*)$. Let $N^*=\lvert\textsf{R}^*\rvert$. Parse $\boldsymbol{\mu}^*=(\mu_1^*,\dots,\mu_t^*)$, $\textsf{R}^*=({\textsf{vk}}^{(1)},\dots,{\textsf{vk}}^{(N^*)})$, and $\Sigma^*=\mathbf e^*$. $\mathcal{S}$ does the following to exploit the forgery.

\begin{itemize}\setlength{\itemsep}{5pt} \setlength{\parsep}{0pt} \setlength{\parskip}{1pt}
	\item Check if $(\boldsymbol{\mu}^*,\textsf{R}^*,\Sigma^*) \in \textsf{L}$ or $\lVert {\mathbf e^{(i^*)}}\rVert> \sigma_{\textsc{Saml}}\sqrt{m}$ or $i^\diamond\notin \textsf{R}^*$ or exists $\textsf{vk}^{(i^*)}\in \textsf{C}$, $\mathcal{S}$ aborts. This step is used to guarantee that the forgery  $(\boldsymbol{\mu}^*,\textsf{R}^*,\Sigma^*)$ is well-formed, had not been queried to the signing oracle and corrupting oracle, and the desired index $i^\diamond$ is included in the ring $\textsf{R}^*$. Note that the event that the index $i^\diamond$, that the simulator previously selected in the Setup phase, is involved in $\textsf{R}^*$ happened less than $\frac{1}{N}$, so there is a reduction loss of $\frac{1}{N}$.
	
	\item Compute $d^*=\textsf{PRF}(\mathbf k^{(i^\diamond)},\boldsymbol{\mu}^*)$. Choose $\mathbf u^{(i')}$ from $\{\mathbf u^{(i^*)}\}_{i\in[N^*]}$ for which the corresponding $\mathbf A^{(i')}$ is lexicographically first in $\{\mathbf A^{(i^*)}\}_{i^*\in[N^*]}$. Due to our prior setting in \textbf{Setup} phase, it holds that $\mathbf u^{(i')}=\mathbf u^{(i^\diamond)}$.

	\item For $i^*\in[N^*]$, compute $\mathbf A_{\boldsymbol{\mu}^*}^{(i^*)}=\mathbf A^{(i^*)}\mathbf R_{\boldsymbol{\mu}^*}^{(i^*)}+\textsf{PRF}\big(\mathbf k^{(i^\diamond)},\boldsymbol{\mu}^*\big)\mathbf G$ by\\ \begin{center}$\mathbf A_{\boldsymbol{\mu}^*}^{(i^*)}=\textsf{Eval}\Big(C_\textsf{PRF},\big(\{\mathbf B_{j}\}_{j\in[k]}^{(i^*)},\mathbf C_{\mu_1^*}^{(i^*)},\mathbf C_{\mu_2^*}^{(i^*)},\dots,\mathbf C_{\mu_t^*}^{(i^*)}\big)\Big)$.\end{center}
	
	\item For $i^*\in[N^*]$, construct\\
	\begin{center}$\mathbf F_{\boldsymbol{\mu}^*, 1-d^*}^{(i^*)}=\big[\mathbf A^{(i^*)}\mid \mathbf A_{1-d^*}^{(i^*)}-\mathbf A_{\boldsymbol{\mu}^*}^{(i^*)}\big]$,\ $\mathbf F'_{\boldsymbol{\mu}^*,1-d^*}=\big[\mathbf F_{\boldsymbol{\mu}^*,1-d^*}^{(1)}\mid\dots\mid\mathbf F_{\boldsymbol{\mu}^*,1-d^*}^{(N^*)}\big]$.\end{center}\hspace{0.51mm}$\mathbf F_{\boldsymbol{\mu}^*, d^*}^{(i^*)}=\big[\mathbf A^{(i^*)}\mid\mathbf A_{d^*}^{(i^*)}-\mathbf A_{\boldsymbol{\mu}^*}^{(i^*)}\big]$,\quad \hspace{3.3mm}$\mathbf F'_{\boldsymbol{\mu}^*,d^*}=\big[\mathbf F_{\boldsymbol{\mu}^*,d^*}^{(1)}\mid\dots\mid\mathbf F_{\boldsymbol{\mu}^*,d^*}^{(N^*)}\big]$.

	\item Check if $\mathbf F'_{\boldsymbol{\mu}^*,1-d^*}\cdot \mathbf e^*=\mathbf u^{(i^\diamond)}\pmod q$ holds, $\mathcal{S}$ aborts. This step is used to check if the forgery is produced by employing the public basis $\mathbf{T_G}$. 
	\item Check if $\mathbf F'_{\boldsymbol{\mu}^*,d^*}\cdot \mathbf e^*=\mathbf u^{(i^\diamond)}\pmod q$ holds, otherwise $\mathcal{S}$ aborts. The step is used to verify if the ring equation is hold.
	\item Note that the equation $\mathbf F'_{\boldsymbol{\mu}^*,d^*}\cdot \mathbf e^*=\mathbf u^{(i^\diamond)}\pmod q$ can be transformed to the following
	\begin{small}\[
	\sum_{i^*\in[N^*]}\big[\mathbf A^{(i^*)}\mid\mathbf A_{d^*}^{(i^*)}-\mathbf A_{\boldsymbol{\mu}^*}^{(i^*)}\big]\cdot \mathbf e^{(i^*)}=\mathbf u^{(i^\diamond)}
	\]\end{small}Recall the ``PRF output matching'' matrix $\bold A_{d^*}^{(i)}=\bold A^{(i)}\bold R_{\bold A_{d^*}^{(i)}}+d^*\bold G$ constructed in Setup phase. It holds that
	\begin{small}\[
	\sum_{i^*\in[N^*]}\big[\mathbf A^{(i^*)}\mid\mathbf A^{(i^*)}\big(\mathbf R_{\mathbf A_{d^*}^{(i^*)}}-\mathbf R_{\boldsymbol{\mu}^*}^{(i^*)}\big)+(d^*-\textsf{PRF}(\mathbf k^{(i^\diamond)},\boldsymbol{\mu}^*))\mathbf{G}\big]\cdot \mathbf e^{(i^*)}=\mathbf u^{(i^\diamond)}
	\]\end{small}
	\begin{small}\[
	\sum_{i^*\in[N^*]}\big[\mathbf A^{(i^*)}\mid\mathbf A^{(i^*)}\big(\mathbf R_{\mathbf A_{d}^{(i^*)}}-\mathbf R_{\boldsymbol{\mu}^*}^{(i^*)}\big)\big]\cdot \mathbf e^{(i^*)}=\mathbf u^{(i^\diamond)}
	\]\end{small}
	
	\item To avoid the cumbersome notation, 
	
	let $\hat{\mathbf{R}}^{(i^*)}=\mathbf R_{\mathbf A_{d}^{(i^*)}}-\mathbf R_{\boldsymbol{\mu}^*}^{(i^*)}$ and $\hat{\mathbf{R}}^{(i^\diamond)}=\mathbf R_{\mathbf A_{d}^{(i^\diamond)}}-\mathbf R_{\boldsymbol{\mu}^*}^{(i^\diamond)}$, so we have
	\begin{small}\begin{equation*}
	\sum_{i^*\in[N^*]}\big[\mathbf A^{(i^*)}\mid\mathbf A^{(i^*)}\hat{\mathbf{R}}^{(i^*)}\big]\cdot \mathbf e^{(i^*)}={\mathbf A}^{(i^\diamond)}\bar{\mathbf e}
	\end{equation*}\end{small}
	\item By the construction on $\{\mathbf{A}^{i^*}\}_{i^*\in[N^*]\setminus i^\diamond}$ in the \textsf{Setup} phase, it holds that
	\begin{scriptsize}\begin{equation*}
	\big[\mathbf A^{(i^\diamond)}\mid\mathbf A^{(i^\diamond)}\hat{\mathbf{R}}^{(i^\diamond)}\big] \mathbf e^{(i^\diamond)}
	+
	\sum_{i^*\in[N^*]\setminus i^\diamond}\big[\mathbf A^{(i^\diamond)}{(\bar{\mathbf R}^{(i^*)})^{-1}}\mid\mathbf A^{(i^\diamond)}{(\bar{\mathbf R}^{(i^*)})^{-1}}\hat{\mathbf{R}}^{(i^*)}\big] \mathbf e^{(i^*)}=\mathbf u^{(i^\diamond)}
	\end{equation*}\end{scriptsize}
	
	\item Let $\mathbf e^{(i^*)}=(\mathbf e_0^{(i^*)}\mid\mathbf e_1^{(i^*)})$ and $\mathbf e^{(i^\diamond)}=(\mathbf e_0^{(i^\diamond)}\mid\mathbf e_1^{(i^\diamond)})$. Note that $\mathbf u^{(i^\diamond)}=\mathbf A^{(i^\diamond)}\bar{\mathbf e}$ that we set in Setup phase. Therefore, it holds that $\mathbf A^{(i^\diamond)} \hat{\mathbf e}= \mathbf 0 \pmod q$ where 
\begin{small}\[
	\hat{\mathbf e}=\bigg(\mathbf e_0^{(i^\diamond)}+\hat{\mathbf{R}}^{(i^\diamond)}\mathbf e_1^{(i^\diamond)}+\sum_{i^*\in[N^*]\setminus i^\diamond}(\bar{\mathbf R}^{(i^*)})^{-1}\mathbf e_0^{(i^*)} + (\bar{\mathbf R}^{(i^*)})^{-1}\hat{\mathbf{R}}^{(i^*)} \mathbf e_1^{(i^*)} \bigg)- \bar{\mathbf e}
	\]\end{small}

	\item Return $\hat{\mathbf e}$ as the RingSIS$_{q,n,m,\beta}$ solution.
	\end{itemize}
	
\end{proof}

\vspace{1mm}
\begin{claim}\label{Claim7}
The set of verification keys {\rm $\textsf{S}$} that simulated by {\rm $\mathcal{S}$} has the correct distribution.
\end{claim}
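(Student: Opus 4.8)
The plan is to show that the simulated verification key set $\textsf{S}$ is statistically indistinguishable from the one produced by the honest $\textsf{KeyGen}$ algorithm, by arguing component-by-component that each matrix in each $\textsf{vk}^{(i)}$ has (statistically close to) the same distribution as in the real scheme. First I would dispose of the index $i^\diamond$: in the real scheme $\mathbf A^{(i^\diamond)}$ is the output of $\textsf{TrapGen}$, which by Lemma~\ref{TrapGen} is statistically close to uniform over $R_q^{1\times m}$ whenever $w\geq 2(\lceil\log q\rceil+1)$; in the simulation $\mathbf A^{(i^\diamond)}=\mathbf A$ is the RingSIS challenge, which is uniform by Definition~\ref{DefofSIS}. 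Hence these two distributions are within $\mathrm{negl}(n)$. For the remaining indices $i\neq i^\diamond$, $\mathbf A^{(i)}$ is genuinely generated by $\textsf{TrapGen}$ exactly as in the real scheme (the rejection-resampling step that enforces the lexicographic order on $\{\mathbf A^{(i)}\}_{i\in[N]}$ only reorders an i.i.d. family and does not change the marginal distribution of the accepted $\mathbf A^{(i)}$, since the honest $\textsf{KeyGen}$ already produces independent near-uniform matrices and the lexicographically-first index is a deterministic function of them), so the distribution is identical up to $\mathrm{negl}(n)$.

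Next I would handle the ``reconstructed'' matrices $\mathbf A_d^{(i)}$, $\mathbf C_d^{(i)}$, and $\mathbf B_j^{(i)}$. In the real scheme these are sampled uniformly and independently from $R_q^{1\times m}$; in the simulation they are set to $\mathbf A^{(i)}\mathbf R + c\,\mathbf G$ for a fresh low-norm $\mathbf R\xleftarrow{_\$}\{1,-1\}^{m\times m}$ and a fixed coset representative $c$ (either the bit $d$, or the PRF-key bit $k_j^{(i)}$). The key observation is the standard leftover-hash / regularity argument for the ring setting: since $\mathbf A^{(i)}$ is statistically close to uniform and $w\geq 2(\lceil\log q\rceil+1)$, the map $\mathbf R\mapsto \mathbf A^{(i)}\mathbf R$ applied to a uniform $\{1,-1\}^{m\times m}$ matrix produces an output statistically close to uniform over $R_q^{1\times m}$, and adding the fixed $c\,\mathbf G$ is a bijective shift, so each reconstructed matrix is statistically close to uniform and (conditioned on $\mathbf A^{(i)}$) mutually independent because fresh independent $\mathbf R$'s are used for each one. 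This matches the real distribution. The PRF keys $\mathbf k^{(i)}$ are sampled identically in both worlds, so nothing is needed there.

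Finally I would treat the syndromes $\mathbf u^{(i)}$. For $i\neq i^\diamond$ they are sampled uniformly from $R_q$ exactly as in the real scheme. For $i=i^\diamond$, the simulator instead sets $\mathbf u^{(i^\diamond)}=\mathbf A^{(i^\diamond)}\bar{\mathbf e}$ for $\bar{\mathbf e}\xleftarrow{_\$}D_{R^m,\sigma_{\textsc{tg}}}$; here the argument is the GPV-style ``preimage distribution'' regularity lemma — for $\sigma_{\textsc{tg}}$ above the smoothing parameter and $\mathbf A^{(i^\diamond)}$ near-uniform, the distribution of $\mathbf A^{(i^\diamond)}\bar{\mathbf e}$ is statistically close to uniform over $R_q$, so it matches the real scheme up to $\mathrm{negl}(n)$. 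Summing the polynomially-many ($O(N\kappa)$) statistical distances by a union/hybrid bound yields total statistical distance $\mathrm{negl}(n)$, proving the claim. The main obstacle is making the regularity/leftover-hash arguments for the reconstructed matrices fully rigorous over the ring $R_q$ with the binary secret distribution $\{1,-1\}^{m\times m}$ and simultaneously arguing the joint (rather than merely marginal) closeness to the real key distribution; everything else is either an exact match or a direct appeal to Lemma~\ref{TrapGen}.
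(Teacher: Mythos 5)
Your proof follows the same component-by-component route as the paper's, but supplies the actual lemmas the paper leaves implicit: you invoke a leftover-hash / regularity argument to justify that the ``reconstructed'' matrices $\mathbf A^{(i)}\mathbf R + c\,\mathbf G$ are near-uniform given fresh $\mathbf R\xleftarrow{_\$}\{1,-1\}^{m\times m}$, and the GPV-style preimage-regularity lemma for $\mathbf u^{(i^\diamond)}=\mathbf A^{(i^\diamond)}\bar{\mathbf e}$, whereas the paper simply asserts these matrices ``are constructed with the random item $\mathbf A^{(i)}$ or $\mathbf R^{(i)}$, so we can conclude which are statistically close to uniform.'' In that sense your write-up is considerably more rigorous than the paper's.

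However, your parenthetical handling of the rejection-resampling step is not correct, and this is the one point where the claim is genuinely shaky. You argue the resampling ``only reorders an i.i.d.\ family and does not change the marginal distribution.'' That would be true if the simulator sampled $N$ i.i.d.\ near-uniform matrices and \emph{then} relabelled the lexicographically-first one as $i^\diamond$; but that is not what $\mathcal{S}$ does. The simulator fixes $i^\diamond$ and sets $\mathbf A^{(i^\diamond)}$ to the externally given, \emph{unconditionally uniform} challenge, and then resamples the remaining $\mathbf A^{(i)}$ until they lie lexicographically behind it. Consequently, in the simulated tuple the lexicographic minimum is unconditionally uniform, whereas in the real tuple the minimum of $N$ i.i.d.\ near-uniform matrices is biased toward lexicographically small values (with density roughly proportional to $(M-r)^{N-1}$ over a set of size $M$). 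The resulting statistical distance between the joint distributions is governed by the distance between these two ``minimum'' marginals, which works out to about $N^{-1/(N-1)}-N^{-N/(N-1)}$ -- already $1/4$ at $N=2$ and tending to $1$ as $N$ grows -- and is therefore not negligible. The paper's own proof silently ignores the rejection loop altogether, so this gap is inherited from the source, but your attempted bridge does not actually close it: a correct treatment would either replace the resampling with an explicit abort when $\mathbf A^{(i^\diamond)}$ is not lexicographically first (accounting for the extra loss factor), or argue directly that the adversary's forging advantage is insensitive to this conditioning.
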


\vspace{-3mm}
\begin{proof}
In the real scheme, all the matrices $\{\mathbf A^{(i)}\}_{i\in[N]}$ of each ring member's $\textsf{vk}^{(i)}$ is generated by $\textsf{TrapGen}$. In the simulation, only the matrix with index $i^\diamond$, i.e., the matrix $\mathbf A^{(i^\diamond)}$ is chosen from uniform since it comes from the challenger, while the remained matrices, i.e., $\{\mathbf A^{(i)}\}_{i\in[N]\setminus i^\diamond}$ are also generated by $\textsf{TrapGen}$. Therefore, by Lemma \ref{TrapGen}, the property of $\textsf{TrapGen}$ is that the output matrix which is statistical close to the uniform distribution. For the vectors $\{\mathbf u^{(i)}\}_{i\in[N]\setminus i^\diamond}$, it is immediate that both distributions are statistically indistinguishable, since both of which were uniformly random selected. For the matrices $(\mathbf A_{0}^{(i)},\mathbf A_{1}^{(i)}), \{\mathbf B_{j}^{(i)}\}_{j\in[\kappa]}$, $(\mathbf C_{0}^{(i)},\mathbf C_{1}^{(i)})$, and the vector $\mathbf u^{(i^\diamond)}$, all of which are constructed with the random item $\mathbf A^{(i)}$ or $\mathbf R^{(i)}$, so we can conclude which are statistically close to uniform. Consequently, the distribution on the set of verification keys $\textsf{S}$ is statistically indistinguishable from those in the real attack. 
\end{proof}

\vspace{-1mm}
\begin{claim}\label{Claim8}
When set $\sigma_{\textsc{Saml}}=O(\sqrt{Nk\ell}\log m + \sqrt{Nkm})\cdot \omega({\log n})$, the replies from {\rm \textsf{OSign}$(\cdot,\cdot,\cdot)$} that simulated by $\mathcal{S}$ has the correct distribution.
\end{claim}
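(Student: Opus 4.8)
The plan is to prove that for every signing query $(\boldsymbol{\mu},\textsf{R},s)$ the reply $\Sigma=\mathbf{e}$ produced by $\mathcal{S}$ lies within $\mathrm{negl}(n)$ statistical distance of the signature the honest $\textsf{Sign}$ algorithm would output on the same inputs; a union bound over the $\mathrm{poly}(n)$ queries then keeps the accumulated distance negligible. By Claim~\ref{Claim7} the matrices $\mathbf{A}^{(i)},\mathbf{A}^{(i)}_{d},\mathbf{B}^{(i)}_{j},\mathbf{C}^{(i)}_{d}$ and the syndromes $\mathbf{u}^{(i)}$ (in particular $\mathbf{u}^{(i^{\diamond})}=\mathbf{A}^{(i^{\diamond})}\bar{\mathbf{e}}$, which is statistically close to uniform by a regularity argument) agree up to $\mathrm{negl}(n)$ with a real execution. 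Since the matrix $\mathbf{F}'_{\boldsymbol{\mu},1-d}$ and the syndrome $\mathbf{u}^{(i')}$ handed to $\textsf{GauSample}$ are deterministic functions of the keys and the query, I may condition on these being identical in the two worlds and only need to compare the conditional law of $\mathbf{e}$. If the queried signer has $s\neq i^{\diamond}$, the simulator holds the genuine $\textsf{TrapGen}$ trapdoor $\mathbf{T}^{(s)}$ and runs $\textsf{TrapDel}$ and $\textsf{GauSample}$ exactly as in $\textsf{Sign}$, so the two laws are literally identical; the only case requiring work is $s=i^{\diamond}$, where $\mathcal{S}$ has no trapdoor for $\mathbf{A}^{(i^{\diamond})}$ and must manufacture one from the public gadget trapdoor $\mathbf{T_G}$.

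For $s=i^{\diamond}$ the core of the argument is the ``output is statistically independent of the trapdoor'' guarantee of Lemma~\ref{GauSample}: whenever $\sigma_{\textsc{Saml}}\geq s_1(\mathbf{T})\cdot\omega(\sqrt{\log n})$ for the trapdoor $\mathbf{T}$ actually fed to $\textsf{GauSample}$, its output is statistically close to the discrete Gaussian of parameter $\sigma_{\textsc{Saml}}$ on the coset of preimages of $\mathbf{u}^{(i')}$ under $\mathbf{F}'_{\boldsymbol{\mu},1-d}$, a distribution depending only on $\mathbf{F}'_{\boldsymbol{\mu},1-d},\mathbf{u}^{(i')},\sigma_{\textsc{Saml}}$ and not on which valid trapdoor was used. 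Hence $\mathcal{S}$'s $\mathbf{e}$ has the same distribution as the real signature, which would be produced from $\mathbf{T}^{(i^{\diamond})}$ delegated down; the real trapdoor is smaller, so its threshold is met a fortiori. The hypothesis $s=i^{\diamond}$ is exactly what makes $\textsf{TrapExdR}$ applicable here: because the signer's PRF key is $\mathbf{k}^{(i^{\diamond})}$ and $d=\textsf{PRF}(\mathbf{k}^{(i^{\diamond})},\boldsymbol{\mu})$, the homomorphically evaluated block obeys $\mathbf{A}^{(i^{\diamond})}_{1-d}-\mathbf{A}^{(i^{\diamond})}_{\boldsymbol{\mu}}=\mathbf{A}^{(i^{\diamond})}\hat{\mathbf{R}}^{(i^{\diamond})}+(1-2d)\mathbf{G}$ with gadget coefficient $1-2d\in\{1,-1\}$ a unit in $R_q$, so (a rescaled) $\mathbf{T_G}$ extends to a trapdoor for $\mathbf{F}^{(i^{\diamond})}_{\boldsymbol{\mu},1-d}$ by Lemma~\ref{TrapExdR}, which $\textsf{TrapDel}$ then lifts to a trapdoor for the full $\mathbf{F}'_{\boldsymbol{\mu},1-d}$ (the remaining blocks are merely appended and need no trapdoor). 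As a corroboration, the ``independent of $\mathbf{T}$'' clause of Lemma~\ref{TrapDel}, together with Lemma~\ref{Lem:TrapDelProperty}, already shows the intermediate delegated trapdoor has a distribution depending only on $\mathbf{F}'_{\boldsymbol{\mu},1-d}$ and $\sigma_{\textsc{td}}$, so the two worlds coincide even before $\textsf{GauSample}$ is invoked.

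What remains is the parameter bookkeeping that fixes $\sigma_{\textsc{Saml}}$ at the stated value. By Lemma~\ref{Lemma7} the evaluation matrix satisfies $\lVert\mathbf{R}^{(i^{\diamond})}_{\boldsymbol{\mu}}\rVert\leq O(\ell\log m+\sqrt{m})$, and since $\mathbf{R}_{\mathbf{A}^{(i^{\diamond})}_{1-d}}$ is a sign matrix this gives $\lVert\hat{\mathbf{R}}^{(i^{\diamond})}\rVert\leq O(\ell\log m+\sqrt{m})$; Lemma~\ref{TrapExdR} then bounds $s_1$ of the extended trapdoor by $(\lVert\hat{\mathbf{R}}^{(i^{\diamond})}\rVert+1)\cdot s_1(\mathbf{T_G})$, and Lemma~\ref{TrapDel}, applied to the $2Nm$-column matrix $\mathbf{F}'_{\boldsymbol{\mu},1-d}$ with $\sigma_{\textsc{td}}$ sized to dominate the extended trapdoor, bounds $s_1$ of the delegated trapdoor by $\sigma_{\textsc{td}}\cdot O(\sqrt{Nm})$. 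Multiplying the resulting estimate by the $\omega(\sqrt{\log n})$ slack demanded by Lemma~\ref{GauSample} and substituting $m=O(k)$ recovers an $\sigma_{\textsc{Saml}}$ of the form $O(\sqrt{Nk\ell}\log m+\sqrt{Nkm})\cdot\omega(\log n)$. The main obstacle is precisely this chaining: the simulated trapdoor is built from a wholly different source than the real one and is genuinely larger, so equality of the two signature distributions survives only because $\sigma_{\textsc{Saml}}$ has been ``compensated'' upward to absorb the singular value of the gadget-derived trapdoor; obtaining the favorable $\sqrt{N}$ dependence (rather than linear in $N$) relies on the low-norm property of $\textsf{Eval}$ and on the fact that the singular value of a $\textsf{TrapDel}$ output grows only with the square root of the number of columns.
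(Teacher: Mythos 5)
Your proof is correct and follows essentially the same route as the paper: reduce the problem to the case $s=i^{\diamond}$, invoke the trapdoor-independence guarantee of \textsf{GauSample} to argue that the output coset Gaussian is the same regardless of which valid trapdoor produced it, and then chain the norm bounds from Lemma~\ref{Lemma7}, Lemma~\ref{TrapExdR}, Lemma~\ref{TrapDel}, and Lemma~\ref{GauSample} to set $\sigma_{\textsc{Saml}}$ large enough to dominate the gadget-derived trapdoor used in the simulation. Your write-up is more explicit than the paper's terse argument about the $s\ne i^{\diamond}$ vs.\ $s=i^{\diamond}$ case split, about why \textsf{TrapExdR} applies (the gadget coefficient $1-2d$ is a unit), and about the precise form of the \textsf{GauSample} output distribution, but the underlying reasoning is the same.
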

\vspace{-4mm}
\begin{proof}
Let $\sigma_{\textsc{Saml,r}}$ and $\sigma_{\textsc{Saml,s}}$ be the Gaussian parameter for sampling preimages in real scheme and simulation, respectively. Recall the fact given in Lemma \ref{GauSample}, when the Gaussian standard deviation $\sigma$ is set sufficiently large, the preimage $\mathbf{e}$ is distributed statistically indistinguishable with the uniform. Therefore, we should set the $\sigma_{\textsc{Saml}}$ as the larger one between $\sigma_{\textsc{Saml,r}}$ and $\sigma_{\textsc{Saml,s}}$, i.e., $\sigma_{\textsc{Saml}}=\max\{\sigma_{\textsc{Saml,r}}, \sigma_{\textsc{Saml,s}}\}$. The answer is obvious, that is $\sigma_{\textsc{Saml}}= \sigma_{\textsc{Saml,s}}$. This is because the input trapdoor $\mathbf T_{\mathbf F_{\boldsymbol{\mu}, 1-d}^{(i^\diamond)}}$ of the \textsf{TrapDel} algorithm has a larger norm than in the real scheme, since the $\mathbf T_{\mathbf F_{\boldsymbol{\mu}, 1-d}^{(i^\diamond)}}$ is derived from the evaluated matrix $\mathbf R_{\boldsymbol{\mu}}^{(i)}$ (cf. the phase of the simulating signing oracle above). Concretely, we analyse as follows. Recall in the $\textsf{Simulating Signing Oracle}$ phase, the signature $\mathbf{e}$ with respect to the index $i^\diamond$ was responsed as below. 
\[
\mathbf F_{\boldsymbol{\mu}, 1-d}^{(i^\diamond)}=\Big[\mathbf A^{(i^\diamond)}\mid \mathbf A^{(i^\diamond)} \big(\mathbf R_{\mathbf A_{1-d}^{(i^\diamond)}}-\mathbf R_{\boldsymbol{\mu}}^{(i^\diamond)}\big)+(1-2d)\mathbf G\Big]\in R_q^{1\times 2m}
\]The trapdoor $\mathbf T_{\mathbf F_{\boldsymbol{\mu}, 1-d}^{(i^\diamond)}}$ is generated by algorithm \textsf{TrapExdR}. Then the trapdoor $\mathbf T_{\mathbf F_{\boldsymbol{\mu}, 1-d}^{(i^\diamond)}}$ is taken as input in the \textsf{TrapDel} algorithm and outputs the delegated trapdoor $\mathbf T_{\mathbf F'_{\boldsymbol{\mu},1-d}}$. Finally, the signature $\mathbf{e}$ is sampled by the trapdoor $\mathbf T_{\mathbf F'_{\boldsymbol{\mu},1-d}}$. Below we will use the properties with respect to these algorithms to analyze the parameter $\sigma_{\textsc{Saml,s}}$.

\noindent For simplicity, we set $\bar{\mathbf R}^{(i)}=\mathbf R_{\mathbf A_{1-d}^{(i)}}-\mathbf R_{\boldsymbol{\mu}}^{(i)}$. 
\begin{itemize}\setlength{\itemsep}{2pt} \setlength{\parsep}{0pt} \setlength{\parskip}{1pt}
    \item By Lemma \ref{Lemma7}, $\lVert \bar{\mathbf R}^{(i)} \rVert \leq O(\ell\log m + \sqrt{m})$ for some constant $c$.
    
    \item By Lemma \ref{TrapExdR}, $s_1(\mathbf T_{\mathbf F_{\boldsymbol{\mu}, 1-d}^{(i^\diamond)}}) < \big(\lVert \bar{\mathbf R}^{(i)} +1\rVert\big) \cdot s_1(\mathbf {T_G})$. By our definition on the gadget matrix $\mathbf{G}$, the $s_1(\mathbf {T_G})$ is some constant in our setting.
    
    \item By Lemma \ref{TrapDel}, $s_1(\mathbf T_{\mathbf F'_{\boldsymbol{\mu},1-d}})\leq \sigma_{\textsc{td}} \cdot O(\sqrt{(N-1)k+w}+\sqrt{k}+\omega(\sqrt{\log n}))$ and $\sigma_{\textsc{td}}\geq s_1(\mathbf T_{\mathbf F_{\boldsymbol{\mu}, 1-d}^{(i^\diamond)}})\cdot \omega(\sqrt{\log n})$.

    \item By Lemma \ref{GauSample}, it requires to set $\sigma_{\textsc{Saml}}\geq s_1(\mathbf T_{\mathbf F'_{\boldsymbol{\mu},1-d}}) \cdot \omega(\sqrt{\log n})$.
\end{itemize}
     To satisfy these requirements, set $\sigma_{\textsc{Saml}}=O(\sqrt{Nk}\ell\log m + \sqrt{Nkm})\cdot \omega({\log n})$ is sufficient.
\end{proof}

\vspace{-1mm}
\begin{claim}\label{Claim13} When set $\beta=O(m^{3/2})\cdot \sigma_{\textsc{td}}\sigma_{\textsc{Saml}}$, $\mathcal{A}$ can generate a valid solution for {\rm RingSIS$_{q,n,m,\beta}$} with overwhelming probability.
\end{claim}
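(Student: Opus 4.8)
The plan is to verify that the vector $\hat{\mathbf e}$ extracted at the very end of the reduction is a legitimate $\mathrm{RingSIS}_{q,n,m,\beta}$ solution for the challenge $\mathbf A=\mathbf A^{(i^\diamond)}$, i.e.\ that $\hat{\mathbf e}\in R^{m}$, that $\mathbf A^{(i^\diamond)}\hat{\mathbf e}=\mathbf 0\pmod q$, that $\lVert\hat{\mathbf e}\rVert\le\beta$, and that $\hat{\mathbf e}\ne\mathbf 0$. The first two facts are essentially already in hand: the displayed chain of equalities in $\textbf{Exploiting the Forgery}$ shows $\mathbf A^{(i^\diamond)}\hat{\mathbf e}=\mathbf 0\pmod q$ whenever none of the abort conditions fired, and $\hat{\mathbf e}$ is by construction a $\pmod q$ combination of vectors in $R^{m}$, hence lies in $R^{m}$. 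So the two substantive tasks are the norm bound and nonzero-ness.

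For the norm bound I would bound each summand of $\hat{\mathbf e}$ and apply the triangle inequality over the at most $N$ members of $\textsf{R}^*$. The ingredients are: $\lVert\mathbf e^{(i^*)}\rVert\le\sigma_{\textsc{Saml}}\sqrt m$ for every $i^*$, which is forced because $\textsf{Ver}(\boldsymbol{\mu}^*,\textsf{R}^*,\Sigma^*)=1$ is required of a valid forgery and is re-checked in the reduction, and which in turn bounds the blocks $\mathbf e_0^{(i^*)},\mathbf e_1^{(i^*)}$; $\lVert\bar{\mathbf e}\rVert\le\sigma_{\textsc{tg}}\sqrt m$ by Lemma~\ref{GauSample} (applied to the sampling of $\bar{\mathbf e}$ in the Setup phase); $\lVert\bar{\mathbf R}^{(i^*)}\rVert\le\sigma_{\textsc{tg}}\sqrt m$, together with the analogous control on the factor $(\bar{\mathbf R}^{(i^*)})^{-1}$ multiplying the signature blocks, by Lemma~\ref{SampleR}; and $\lVert\hat{\mathbf R}^{(i^*)}\rVert=\lVert\mathbf R_{\mathbf A_{d}^{(i^*)}}-\mathbf R_{\boldsymbol{\mu}^*}^{(i^*)}\rVert\le O(\ell\log m+\sqrt m)$ from the $\{1,-1\}^{m\times m}$ bound combined with Lemma~\ref{Lemma7}. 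Multiplying these through and accounting for the expansion factor of multiplication in $R=\mathbb Z[X]/(X^n+1)$, the dominant summands have the shape $(\bar{\mathbf R}^{(i^*)})^{-1}\hat{\mathbf R}^{(i^*)}\mathbf e_1^{(i^*)}$; collecting terms and absorbing the $N$-dependence into the $O(\cdot)$ and into the Gaussian parameters exactly as in the derivation of $\sigma_{\textsc{Saml}}$ in Claim~\ref{Claim8} yields $\lVert\hat{\mathbf e}\rVert\le O(m^{3/2})\sigma_{\textsc{td}}\sigma_{\textsc{Saml}}=\beta$. All of these bounds, other than the deterministically-checked one on $\lVert\mathbf e^{(i^*)}\rVert$, hold with overwhelming probability, which is the source of the ``overwhelming probability'' in the statement.

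The main obstacle is ruling out the trivial solution, i.e.\ showing $\hat{\mathbf e}\ne\mathbf 0$. Writing $\hat{\mathbf e}=\mathbf v-\bar{\mathbf e}$ with $\mathbf v$ the bracketed sum, the event $\hat{\mathbf e}=\mathbf 0$ is exactly $\mathbf v=\bar{\mathbf e}$, so it suffices to argue that $\bar{\mathbf e}$ is sufficiently unpredictable given everything that determines $\mathbf v$. For this I would use that $\bar{\mathbf e}$ is sampled freshly and independently in the Setup phase and that the only function of $\bar{\mathbf e}$ ever released to $\mathcal A$ is the syndrome $\mathbf u^{(i^\diamond)}=\mathbf A^{(i^\diamond)}\bar{\mathbf e}$: in particular the signing oracle on index $i^\diamond$ uses the public trapdoor $\mathbf{T_G}$ and, by Lemma~\ref{GauSample}, its outputs are distributed independently of the trapdoor employed, hence of $\bar{\mathbf e}$. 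Since $\sigma_{\textsc{tg}}$ exceeds the smoothing parameter of $\Lambda^{\perp}(\mathbf A^{(i^\diamond)})$, the conditional distribution of $\bar{\mathbf e}$ given $\mathbf u^{(i^\diamond)}$ is a discrete Gaussian of parameter $\sigma_{\textsc{tg}}$ on the coset $\{\mathbf x\in R^m:\mathbf A^{(i^\diamond)}\mathbf x=\mathbf u^{(i^\diamond)}\}$, which has min-entropy $\omega(\log n)$; whereas $\mathbf v$ is a deterministic function of $\mathcal A$'s forgery and of the simulator's internal matrices, all of which are independent of $\bar{\mathbf e}$ once we condition on $\mathbf u^{(i^\diamond)}$. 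Hence $\Pr[\mathbf v=\bar{\mathbf e}]\le 2^{-\omega(\log n)}=\mathrm{negl}(n)$, so $\hat{\mathbf e}\ne\mathbf 0$ with overwhelming probability. Combined with the norm bound above, $\hat{\mathbf e}$ is a valid $\mathrm{RingSIS}_{q,n,m,\beta}$ solution with overwhelming probability.
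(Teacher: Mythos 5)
Your proof is correct and follows the same overall structure as the paper's (first the norm bound via the triangle inequality over the summands of $\hat{\mathbf e}$, then the nonzero-ness), but the nonzero argument you give is genuinely different and, in fact, more rigorous than the one in the paper. The paper's argument appeals to the simulator's $\mathbf R$-matrices ($\hat{\mathbf{R}}^{(i^\diamond)}$, $\bar{\mathbf R}^{(i^*)}$, $\hat{\mathbf{R}}^{(i^*)}$) being ``hidden from the adversary,'' which is a weaker and somewhat informal statement since those matrices are not uniformly random from the adversary's viewpoint — they are correlated with the public verification-key matrices. Your argument instead isolates a single piece of fresh, unconditionally-hidden entropy: the syndrome preimage $\bar{\mathbf e}$, of which the adversary only ever sees $\mathbf u^{(i^\diamond)}=\mathbf A^{(i^\diamond)}\bar{\mathbf e}$, and you invoke the standard GPV-style fact that the conditional distribution of $\bar{\mathbf e}$ given its syndrome is a discrete Gaussian over a coset of $\Lambda^{\perp}(\mathbf A^{(i^\diamond)})$ with $\omega(\log n)$ min-entropy above the smoothing parameter. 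Writing $\hat{\mathbf e}=\mathbf v-\bar{\mathbf e}$ and observing $\mathbf v$ is determined by data independent of $\bar{\mathbf e}$ (conditioned on $\mathbf u^{(i^\diamond)}$) then cleanly gives $\Pr[\hat{\mathbf e}=\mathbf 0]=\mathrm{negl}(n)$. This is the tighter and more standard route. A further small improvement in your version: you correctly bound $\lVert\hat{\mathbf R}^{(i^*)}\rVert=\lVert\mathbf R_{\mathbf A_{d}^{(i^*)}}-\mathbf R_{\boldsymbol\mu^*}^{(i^*)}\rVert\le O(\ell\log m+\sqrt m)$ by Lemma~\ref{Lemma7}, whereas the paper's norm analysis treats $\hat{\mathbf R}^{(i^*)}$ as if it lay in $\{1,-1\}^{m\times m}$, which it does not (only $\mathbf R_{\mathbf A_d^{(i^*)}}$ does; $\mathbf R_{\boldsymbol\mu^*}^{(i^*)}$ is a homomorphically-evaluated matrix). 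Both routes land on the stated $\beta$, since that factor is already absorbed in $\sigma_{\textsc{Saml}}$, but yours is the accurate bookkeeping.
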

\vspace{-3mm}
\begin{proof}
We argue that 
\begin{small}\[
	\hat{\mathbf e}=\bigg(\mathbf e_0^{(i^\diamond)}+\hat{\mathbf{R}}^{(i^\diamond)}\mathbf e_1^{(i^\diamond)}+\sum_{i^*\in[N^*]\setminus i^\diamond}(\bar{\mathbf R}^{(i^*)})^{-1}\mathbf e_0^{(i^*)} + (\bar{\mathbf R}^{(i^*)})^{-1}\hat{\mathbf{R}}^{(i^*)} \mathbf e_1^{(i^*)} \bigg)- \bar{\mathbf e}
	\]\end{small}
that $\mathcal{S}$ finally outputs in simulation is a valid SIS$_{q,n,m,\beta}$ solution. 
\begin{itemize}
    \item We first explain it is sufficiently short i.e., has a low-norm bounded by the parameter $\beta$. Note that $\mathbf e_0^{(i^*)}$ and $\mathbf e_1^{(i^*)}$ follow the distribution $ D_{R^{m},\sigma_{\textsc{tg}}}$. 
    
    \begin{itemize}
        \item By Lemma \ref{GauSample}, $\lVert{\mathbf e_0^{(i^\diamond)}}\rVert,\lVert{\mathbf e_1^{(i^\diamond)}}\rVert$, $\lVert{\mathbf e_0^{(i^*)}}\rVert,\lVert{\mathbf e_1^{(i^*)}}\rVert$, and  $\lVert\bar{\mathbf e}\rVert\leq \sigma_{\textsc{Saml}}\sqrt{m}$.
        
        \item By Lemma \ref{SampleR}, $\lVert\bar{\mathbf R}^{(i^*)}\rVert\leq \sigma_{\textsc{tg}}\sqrt{m}$.
        
        \item Since $\hat{\mathbf{R}}^{(i^*)}, \hat{\mathbf{R}}^{(i^\diamond)}\in\{1,-1\}^{m\times m}\subset R_q^{(m/n) \times (m/n)}$, so its $\ell_2$-norm is bounded as $c\sqrt{m}$ for any constant $c$.
        
        
    \end{itemize}
       To satisfy these requirements, it requires to set $$\beta = O(m^{3/2})\cdot \sigma_{\textsc{tg}}\sigma_{\textsc{Saml}}$$

    \item Then we argue that $\hat{\mathbf e}$ is a non-zero RSIS$_{q,n,m,\beta}$ solution with overwhelming probability. Note that $\bar{\mathbf{e}}$ is selected by the simulator, $\hat{\mathbf{R}}^{(i^\diamond)}$, $\bar{\mathbf R}^{(i^*)}$ and $\hat{\mathbf{R}}^{(i^*)}$ are statistically close to $D_{R^{m\times m}, \sigma_{\textsc{td}}}$ and remain hidden from the adversary.  Therefore, for a valid forgery $\mathbf{e}^*$, $\hat{\mathbf{e}}^*\neq \mathbf{0}$ holds with overwhelming probability.  
\end{itemize}

Finally, we analyze $\mathcal{A}$'s advantage. Assume the {PRF} is secure, $\mathcal{A}$ cannot distinguish PRF from random functions, it will randomly pick either $\{\mathbf A_0^{(i^*)}\}_{i^*\in[N^*]}$ or $\{\mathbf A_1^{(i^*)}\}_{i^*\in[N^*]}$ to make a forgery. Therefore, with $\frac 1 2$ chance $\mathcal{A}$ will forge the one that $\mathcal{S}$ will be able to use to break the SIS$_{q,n,m,\beta}$ problem. Moreover, the probability $\textsf{Pr}[i^\diamond \in \textsf{R}^*]\geq \frac 1 N$. Therefore, we have \[\textsf{Adv}_{\textsf{SIS}_{q,n,m,\beta}}(\mathcal{S}^\mathcal{A})\geq \textsf{Adv}_{\textsf{Unf}}(\mathcal{A})/{(2N)}-\textsf{Adv}_{\textsf{PRF}}-\mathrm{negl}(n)\] where $\mathrm{negl}(n)$ denote the negligible statistical error in the simulation. 
\end{proof}

\vspace{2mm}

\subsection{Anonymity}\label{Sec:Anonymity}\pdfbookmark[1]{Anonymity}{Anonymity}

We prove the unconditional anonymity of our ring signature scheme by two experiments $\textsf{E}_0$ and $\textsf{E}_1$. The $\textsf{E}_0$ and $\textsf{E}_1$ correspond to the anonymity experiment as defined in Section \ref{PrivacyModel} with $b = 0$ and  $b = 1$, respectively. We will show a reduction to prove that once if the experiments of $\textsf{E}_0$ and $\textsf{E}_1$ can be distinguished by any adversary (even unbounded) then the Trapdoor Indistinguishability property of the \textsf{TrapDel} algorithm (cf. Lemma \ref{TrapDel}) is not holds.

\begin{theorem}[Anonymity]\label{Them:AnonymityProof}
The ring signature scheme presented in Section \ref{Construction} is unconditional anonymity in the standard model.
\end{theorem}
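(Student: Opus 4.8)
\noindent The plan is to prove that the two experiments $\textsf{E}_0$ and $\textsf{E}_1$ of the anonymity game (identical except that every challenge signature $\Sigma_l^*$, $l\in[Q]$, is produced with $\textsf{sk}^{(s_0^*)}$ instead of $\textsf{sk}^{(s_1^*)}$) induce the same view for \emph{every} adversary, in particular every unbounded one, by reducing any distinguishing gap to the Trapdoor Indistinguishability of \textsf{TrapDel} (Lemma \ref{Lem:TrapDelProperty}). Since that lemma is information-theoretic, this yields unconditional anonymity; and, crucially for resisting the accumulation attack of Section \ref{OurMethods}, the per-instance distributions the reduction compares either coincide exactly or are governed by \textsf{TrapDel}'s statistical guarantee rather than by an ``all-but-one'' slack pinned to the signer's coordinate, so nothing signer-dependent accumulates as $Q$ grows.

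First I would localise where the bit $b$ can surface. The Setup output $(\textsf{PP},\textsf{S},\gamma_{\textsc{st}},\{\gamma_{\textsc{kg}}^{(i)}\}_{i})$ is independent of $b$, so $b$ can enter only through the $Q$ challenge signatures, each of which is $\mathbf e\leftarrow\textsf{GauSample}(\mathbf F'_{\boldsymbol{\mu}^*,1-d},\mathbf T_{\mathbf F'_{\boldsymbol{\mu}^*,1-d}},\mathbf u^{(i')},\sigma_{\textsc{Saml}})$; here the signer's secret is used only (i) to compute $d=\textsf{PRF}(\mathbf k^{(s_b^*)},\boldsymbol{\mu}^*)$, which fixes the augmented matrix $\mathbf F'_{\boldsymbol{\mu}^*,1-d}$, and (ii) as the trapdoor $\mathbf T^{(s_b^*)}$ delegated by \textsf{TrapDel} and then handed to \textsf{GauSample}. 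The syndrome $\mathbf u^{(i')}$ is the one attached to the lexicographically-first $\mathbf A^{(i')}$ in $R_l^*$; since $\textsf{vk}^{(s_0^*)},\textsf{vk}^{(s_1^*)}\in R_l^*$ and $i'$ is chosen by a signer-independent public rule, it is identical in both experiments. Thus only (i) and (ii) can possibly differ.

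Next I would neutralise (ii). By Lemma \ref{GauSample} the output of \textsf{GauSample} is distributed independently of the input trapdoor: on a fixed target and syndrome it is the discrete Gaussian of parameter $\sigma_{\textsc{Saml}}$ on the coset $\{\mathbf e:\mathbf F'_{\boldsymbol{\mu}^*,1-d}\,\mathbf e=\mathbf u^{(i')}\}$, and the precondition $\sigma_{\textsc{Saml}}\ge s_1(\mathbf T_{\mathbf F'_{\boldsymbol{\mu}^*,1-d}})\cdot\omega(\sqrt{\log n})$ holds for both $s\in\{s_0^*,s_1^*\}$, because $s_1(\mathbf T^{(s)})$ is bounded by Lemma \ref{TrapGen}, propagated through Lemma \ref{TrapDel}, and $\sigma_{\textsc{td}},\sigma_{\textsc{Saml}}$ are set in Section \ref{Correctness} to dominate it. Hence for a \emph{fixed} target matrix the law of $\mathbf e$ does not depend on which of $\mathbf T^{(s_0^*)},\mathbf T^{(s_1^*)}$ was used; using the exactness of the sampler of \cite{DM14} this is an identity of distributions, so it contributes $0$ to $\Delta(\textsf{E}_0,\textsf{E}_1)$ regardless of $Q$.

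The remaining case — and the step I expect to be the main obstacle — is (i): $d_0:=\textsf{PRF}(\mathbf k^{(s_0^*)},\boldsymbol{\mu}^*)$ and $d_1:=\textsf{PRF}(\mathbf k^{(s_1^*)},\boldsymbol{\mu}^*)$ need not agree. If $d_0=d_1$, the previous step already gives identical transcripts. If $d_0\neq d_1$, the two experiments sign relative to the genuinely different matrices $\mathbf F'_{\boldsymbol{\mu}^*,1-d_0}$ and $\mathbf F'_{\boldsymbol{\mu}^*,1-d_1}=\mathbf F'_{\boldsymbol{\mu}^*,d_0}$; here I would invoke Lemma \ref{Lem:TrapDelProperty}, which says that a trapdoor delegated from the same source towards either target has the same law up to $\mathrm{negl}(n)$, and combine it with the trapdoor-independence of \textsf{GauSample}'s output and the uniform, signer-independent syndrome $\mathbf u^{(i')}$ to conclude that the two coset-Gaussians carry no signer-dependent information. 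Concretely the reduction embeds a \textsf{TrapDel} challenge trapdoor as $\mathbf T_{\mathbf F'_{\boldsymbol{\mu}^*,1-d}}$, simulates the $Q$ challenge signatures (re-delegating freshly for each $l$), runs $\mathcal{A}$ and echoes its guess, so advantage $\varepsilon$ for $\mathcal{A}$ becomes advantage $\approx\varepsilon$ against Lemma \ref{Lem:TrapDelProperty}. The delicate points are keeping $\sigma_{\textsc{Saml}}$ large enough that \textsf{GauSample} is simultaneously correct for both candidate delegated trapdoors, and making precise that the $\mathbf F'_{\boldsymbol{\mu}^*,0}$-versus-$\mathbf F'_{\boldsymbol{\mu}^*,1}$ choice is exactly the symmetry that the lexicographic syndrome and \textsf{TrapDel}'s indistinguishability are designed to hide. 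Assembling the two cases yields $\textsf{E}_0\approx\textsf{E}_1$ against every (even unbounded) $\mathcal{A}$, which is the claimed unconditional anonymity.
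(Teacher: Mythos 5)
Your overall strategy is the paper's: split the adversary's view into the $b$-independent Setup transcript and the $Q$ challenge signatures, then argue that the latter are statistically indistinguishable across $\textsf{E}_0$ and $\textsf{E}_1$ by combining the trapdoor-independence of \textsf{GauSample}'s output (Lemma~\ref{GauSample}) with the trapdoor-indistinguishability of \textsf{TrapDel} (Lemma~\ref{Lem:TrapDelProperty}). In fact your localisation is sharper than the paper's: you explicitly separate the two channels through which $\textsf{sk}^{(s_b^*)}$ can influence the signature — the bit $d=\textsf{PRF}(\mathbf k^{(s_b^*)},\boldsymbol{\mu}^*)$, which selects the target matrix $\mathbf F'_{\boldsymbol{\mu}^*,1-d}$, and the trapdoor $\mathbf T^{(s_b^*)}$ itself — whereas the paper silently treats the target matrix as fixed and reduces only the trapdoor channel. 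Your treatment of channel (ii) coincides with the paper's and is sound.

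The gap is precisely in channel (i), the case $d_0\neq d_1$ that you yourself flag as ``delicate''. Lemma~\ref{Lem:TrapDelProperty} bounds the distance between the \emph{distributions of delegated trapdoors for a common target matrix} $\mathbf A'$; it gives you nothing when the two experiments sign relative to \emph{different} target matrices $\mathbf F'_{\boldsymbol{\mu}^*,1-d_0}\neq\mathbf F'_{\boldsymbol{\mu}^*,1-d_1}$, and the same is true of the sampler's trapdoor-independence — both statements are ``downstream'' of the choice of coset. But the signature $\Sigma_l^*=\mathbf e$ publicly certifies which coset it was drawn from: $\mathbf e$ satisfies $\mathbf F'_{\boldsymbol{\mu}^*,1-d_b}\,\mathbf e=\mathbf u^{(i')}\pmod q$ and, generically, only that one of the two ring equations. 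Since the anonymity experiment hands $\mathcal{A}$ all of $\{\gamma_{\textsc{kg}}^{(i)}\}_{i\in[N]}$, $\mathcal{A}$ can recompute $d_0=\textsf{PRF}(\mathbf k^{(s_0^*)},\boldsymbol{\mu}^*)$ and $d_1=\textsf{PRF}(\mathbf k^{(s_1^*)},\boldsymbol{\mu}^*)$, plug $\mathbf e$ into both candidate equations, and read off $b$ deterministically whenever $d_0\neq d_1$; the sentence ``the two coset-Gaussians carry no signer-dependent information'' is exactly what fails. To close this you would need a genuinely different idea at this step (e.g.\ an argument that the $d$ in the ring equation is information-theoretically decoupled from the challenge indices), not another appeal to Lemma~\ref{Lem:TrapDelProperty}. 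For what it is worth, the paper's own write-up never cleanly separates this case — its reduction implicitly behaves as if the target matrix $\mathbf F'_{\boldsymbol{\mu}^*,1-d}$ were the same for both $s_0^*$ and $s_1^*$ — so by surfacing the $d_0\neq d_1$ branch you have identified a real soft spot rather than a defect introduced by your rewriting.
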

\vspace{-3mm}
\begin{proof}
The proof proceeds in experiments $\textsf{E}_0$, $\textsf{E}_1$ such that $\textsf{E}_0$ (resp., $\textsf{E}_1$) corresponds to the experiment of \textsf{Anonymity} in Definition \ref{RS} with $b = 0$ (resp., $b = 1$). We will show a reduction to prove that once if the experiments of $\textsf{E}_0$ and $\textsf{E}_1$ can be distinguished by any adversary (even if unbounded) then the underlying hardness can be broken.

\noindent$\textsf{E}_0$: This experiment first generate the public parameters $\textsf{PP}\leftarrow\textsf{Setup}(1^n;\gamma_{\textsc{st}})$ and key pair $\{\textsf{vk}^{(i)},\textsf{sk}^{(i)}\}_{i\in[N]}$ by repeatedly invoking $\textsf{KeyGen}(\gamma_{\textsc{kg}}^{(i)})$, and $\mathcal{A}$ is given $(\textsf{PP}, \textsf{S}=\{\textsf{vk}^{(i)}\}_{i\in[N]})$ and the randomness $(\gamma_{\textsc{st}}, \{\gamma_{\textsc{kg}}^{(i)}\}_{i\in[N]})$. Then $\mathcal{A}$ outputs a message $\mu^*$ and two indexes $(s_0^*,s_1^*)$ such that $s_0^*\neq s_1^*$ and $s_0^*,s_1^*\in[N]$. Then for $l\in[Q]$, $\mathcal{A}$ provides a ring {\rm $\textsf{R}_l^*$} such that {\rm $\textsf{vk}^{(s_0^*)},\textsf{vk}^{(s_1^*)} \in \textsf{S}\cap \textsf{R}_l^*$}, the experiment computes $\Sigma_l^*\leftarrow\textsf{Sign}\big(\textsf{sk}^{(s_0^*)},\mu^*, \textsf{R}_l^*\big)$ and sends $\Sigma_l^*$ to $\mathcal{A}$. Finally, $\mathcal{A}$ outputs a bit $b'$.

\noindent $\textsf{E}_1$: This experiment is the same as experiment $\textsf{E}_0$ except that the experiment uses $\textsf{sk}^{(s_1^*)}$ to response the challenge rather than $\textsf{sk}^{(s_0^*)}$.

\noindent{Then we show that $\textsf{E}_0$ and $\textsf{E}_1$ are statistically indistinguishable for $\mathcal{A}$, which we do by giving a reduction from the trapdoor indistinguishability property of \textsf{TrapDel} algorithm given in Lemma \ref{TrapDel}.}

\noindent\textbf{Reduction.} Let $\mathcal{A}$ be an adversary that has the ability to distinguish the above described experiments $\textsf{E}_0$ and $\textsf{E}_1$ with a non-negligible advantage. Then we can construct an algorithm $\mathcal{S}$ which runs $\mathcal{A}$ as a subroutine for breaking the property of \textsf{TrapDel}. 

\noindent$\textbf{Simulating Setup Phase}$. $\mathcal{S}$ generates \textsf{PP} and $\textsf{S}=(\textsf{vk}^{(1)},\dots,\textsf{vk}^{(N)})$ as above experiments. $\mathcal{S}$ sends $(\textsf{PP}, \textsf{S}, \gamma_{\textsc{st}}, \{\gamma_{\textsc{kg}}^{(i)}\}_{i\in[N]})$ to $\mathcal{A}$.

\noindent$\textbf{Challenge}$. The challenge phase has two sub-phases:
\begin{itemize}
    \item $\mathcal{A}$ provides a message $\mu^*$ and two indexes $(s_0^*,s_1^*)$ such that $s_0^*\neq s_1^*$ and $s_0^*,s_1^*\in[N]$. $\mathcal{S}$ chooses $b\xleftarrow{_\$}\{0,1\}$.

    \item For $l\in[Q]$, $\mathcal{A}$ provides a ring {\rm $\textsf{R}_l^*$} such that {\rm $\textsf{vk}^{(s_0^*)},\textsf{vk}^{(s_1^*)} \in \textsf{S}\cap \textsf{R}_l^*$}, $\mathcal{S}$ response it as same as the \textsf{Sign} algorithm, except the delegation of trapdoor $\mathbf{T}_{\mathbf F'_{\boldsymbol{\mu},1-d}}^{(s_b^*)}$. Concretely, $\mathcal{S}$ sends the trapdoors $(\mathbf{T}^{(s_0^*)}, \mathbf{T}^{(s_1^*)})$ to its challenger, then the challenger pick a random bit $b$ and compute the $\mathbf{T}_{\mathbf F'_{\boldsymbol{\mu},1-d}}^{(s_b^*)}$ as response. Then $\mathcal{S}$ uses $\mathbf{T}_{\mathbf F'_{\boldsymbol{\mu},1-d}}^{(s_b^*)}$ to issue signature $\Sigma_{l}^*$ for the challenge. 

\end{itemize}


\noindent $\textbf{Exploiting}$. When $\mathcal{A}$ outputs $b'$, $\mathcal{S}$ outputs $b'$.

Note that when the bit $b$ that challenger randomly selected is $b=0$ then the view of $\mathcal{A}$ is distributed exactly according to $\textsf{E}_0$, while if $b=1$ then the view of $\mathcal{A}$ is distributed exactly according to $\textsf{E}_1$. By Lemma \ref{TrapDel}, we know the delegated trapdoor $\mathbf{T}_{\mathbf F'_{\boldsymbol{\mu},1-d}}^{(s_b^*)}$ is statistically independent with the original trapdoors $(\mathbf{T}^{(s_0^*)}, \mathbf{T}^{(s_1^*)})$. By Lemma \ref{GauSample}, we know the signature $\Sigma^*$ is statistically independent with the delegated trapdoor $\mathbf{T}_{\mathbf F'_{\boldsymbol{\mu},1-d}}^{(s_b^*)}$. Therefore, by the trapdoor indistinguishability property of \textsf{TrapDel} (cf. Lemma \ref{Lem:TrapDelProperty}), $\textsf{E}_0$ and $\textsf{E}_1$ are statistically indistinguishable. 
\end{proof}

\section{Acknowledgements}
This research was supported by the National Natural
Science Foundation of China (62072305).

\bibliographystyle{splncs04}
\bibliography{sn-bibliography}

\begin{thebibliography}{10}
\providecommand{\url}[1]{\texttt{#1}}
\providecommand{\urlprefix}{URL }
\providecommand{\doi}[1]{https://doi.org/#1}

\bibitem{ABB10a}
Agrawal, S., Boneh, D., Boyen, X.: Efficient lattice {(H)IBE} in the standard
  model. In: {EUROCRYPT} 2010. LNCS, vol.~6110, pp. 553--572 (2010)

\bibitem{ABB10b}
Agrawal, S., Boneh, D., Boyen, X.: Lattice basis delegation in fixed dimension
  and shorter-ciphertext {HIBE}. In: {CRYPTO} 2010. LNCS, vol.~6223, pp.
  98--115 (2010)

\bibitem{BDH+19}
Backes, M., D{\"{o}}ttling, N., Hanzlik, L., et~al.: Ring signatures:
  Logarithmic-size, no setup - from standard assumptions. In: {EUROCRYPT} 2019.
  LNCS, vol. 11478, pp. 281--311 (2019)

\bibitem{BPR12}
Banerjee, A., Peikert, C., Rosen, A.: Pseudorandom functions and lattices. In:
  Advances in Cryptology--EUROCRYPT 2012: 31st Annual International Conference
  on the Theory and Applications of Cryptographic Techniques, Cambridge, UK,
  April 15-19, 2012. Proceedings 31. pp. 719--737. Springer (2012)

\bibitem{BLO18}
Baum, C., Lin, H., Oechsner, S.: Towards practical lattice-based one-time
  linkable ring signatures. In: {ICICS} 2018. LNCS, vol. 11149, pp. 303--322
  (2018)

\bibitem{BKM06}
Bender, A., Katz, J., Morselli, R.: Ring signatures: Stronger definitions, and
  constructions without random oracles. In: {TCC} 2006. LNCS, vol.~3876, pp.
  60--79 (2006)

\bibitem{BFL+18}
Bert, P., Fouque, P.A., Roux-Langlois, A., Sabt, M.: Practical implementation
  of ring-sis/lwe based signature and ibe. In: Post-Quantum Cryptography: 9th
  International Conference, PQCrypto 2018, Fort Lauderdale, FL, USA, April
  9-11, 2018, Proceedings 9. pp. 271--291. Springer (2018)

\bibitem{BDF+11}
Boneh, D., Dagdelen, {\"{O}}., Fischlin, M., et~al.: Random oracles in a
  quantum world. In: {ASIACRYPT} 2011. LNCS, vol.~7073, pp. 41--69 (2011)

\bibitem{BGG+14}
Boneh, D., Gentry, C., Gorbunov, S., et~al.: Fully key-homomorphic encryption,
  arithmetic circuit {ABE} and compact garbled circuits. In: {EUROCRYPT} 2014.
  LNCS, vol.~8441, pp. 533--556 (2014)

\bibitem{BDR15}
Bose, P., Das, D., Rangan, C.P.: Constant size ring signature without random
  oracle. In: {ACISP} 2015. LNCS, vol.~9144, pp. 230--247 (2015)

\bibitem{BK10}
Brakerski, Z., Kalai, Y.T.: A framework for efficient signatures, ring
  signatures and identity based encryption in the standard model. Cryptology
  ePrint Archive: Report 2010/086, 2010  (2010)

\bibitem{BV14}
Brakerski, Z., Vaikuntanathan, V.: Lattice-based {FHE} as secure as {PKE}. In:
  ITCS 2014. pp. 1--12 (2014)

\bibitem{BDW22}
Branco, P., D\"ottling, N., Wohnig, S.: Universal ring signatures in the
  standard model. {IACR} ePrint Arch. p.~1265 (2022),
  \url{https://eprint.iacr.org/2022/1265}

\bibitem{CGH04}
Canetti, R., Goldreich, O., Halevi, S.: The random oracle methodology,
  revisited. J. {ACM}  \textbf{51}(4),  557--594 (2004)

\bibitem{CHKP10}
Cash, D., Hofheinz, D., Kiltz, E., Peikert, C.: Bonsai trees, or how to
  delegate a lattice basis. In: {EUROCRYPT} 2010. LNCS, vol.~6110, pp. 523--552
  (2010)

\bibitem{CCL+22}
Chatterjee, R., Chung, K., Liang, X., et~al.: A note on the post-quantum
  security of (ring) signatures. In: {PKC} 2022. LNCS, vol. 13178, pp. 407--436
  (2022)

\bibitem{CGH+21}
Chatterjee, R., Garg, S., Hajiabadi, M., et~al.: Compact ring signatures from
  learning with errors. In: {CRYPTO} 2021. LNCS, vol. 12825, pp. 282--312
  (2021)

\bibitem{CWL+06}
Chow, S.S.M., Wei, V.K., Liu, J.K., et~al.: Ring signatures without random
  oracles. In: {ASIACCS} 2006. pp. 297--302 (2006)

\bibitem{DRS18}
Derler, D., Ramacher, S., Slamanig, D.: Post-quantum zero-knowledge proofs for
  accumulators with applications to ring signatures from symmetric-key
  primitives. In: PQCrypto 2018. LNCS, vol. 10786, pp. 419--440 (2018)

\bibitem{DKN+04}
Dodis, Y., Kiayias, A., Nicolosi, A., et~al.: Anonymous identification in ad
  hoc groups. In: {EUROCRYPT} 2004. LNCS, vol.~3027, pp. 609--626 (2004)

\bibitem{DOP05}
Dodis, Y., Oliveira, R., Pietrzak, K.: On the generic insecurity of the full
  domain hash. In: {CRYPTO} 2005. LNCS, vol.~3621, pp. 449--466 (2005)

\bibitem{DM14}
Ducas, L., Micciancio, D.: Improved short lattice signatures in the standard
  model. In: Advances in Cryptology--CRYPTO 2014. pp. 335--352. Springer (2014)

\bibitem{ES20}
Eaton, E., Song, F.: A note on the instantiability of the quantum random
  oracle. In: PQCrypto 2020. LNCS, vol. 12100, pp. 503--523 (2020)

\bibitem{EZS+19}
Esgin, M.F., Zhao, R.K., Steinfeld, R., et~al.: Matrict: Efficient, scalable
  and post-quantum blockchain confidential transactions protocol. In: {CCS}
  2019. pp. 567--584 (2019)

\bibitem{GPV08}
Gentry, C., Peikert, C., Vaikuntanathan, V.: Trapdoors for hard lattices and
  new cryptographic constructions. In: STOC 2008. pp. 197--206 (2008)

\bibitem{KKW18}
Katz, J., Kolesnikov, V., Wang, X.: Improved non-interactive zero knowledge
  with applications to post-quantum signatures. In: {CCS} 2018. pp. 525--537
  (2018)

\bibitem{LAZ19}
Lu, X., Au, M.H., Zhang, Z.: Raptor: {A} practical lattice-based (linkable)
  ring signature. In: {ACNS} 2019. LNCS, vol. 11464, pp. 110--130 (2019)

\bibitem{Lyu09}
Lyubashevsky, V.: Fiat-shamir with aborts: Applications to lattice and
  factoring-based signatures. In: Advances in Cryptology--ASIACRYPT 2009: 15th
  International Conference on the Theory and Application of Cryptology and
  Information Security, Tokyo, Japan, December 6-10, 2009. Proceedings 15. pp.
  598--616. Springer (2009)

\bibitem{Lyu12}
Lyubashevsky, V.: Lattice signatures without trapdoors. In: {EUROCRYPT} 2012.
  LNCS, vol.~7237, pp. 738--755 (2012)

\bibitem{LPR10}
Lyubashevsky, V., Peikert, C., Regev, O.: On ideal lattices and learning with
  errors over rings. In: Advances in Cryptology--EUROCRYPT. vol.~2010 (2010)

\bibitem{MS17}
Malavolta, G., Schr{\"{o}}der, D.: Efficient ring signatures in the standard
  model. In: {ASIACRYPT} 2017. LNCS, vol. 10625, pp. 128--157 (2017)

\bibitem{MBB+13}
Melchor, C.A., Bettaieb, S., Boyen, X., et~al.: Adapting lyubashevsky's
  signature schemes to the ring signature setting. In: {AFRICACRYPT} 2013.
  LNCS, vol.~7918, pp. 1--25 (2013)

\bibitem{MN17}
Mennink, B., Neves, S.: Optimal prfs from blockcipher designs. IACR
  Transactions on Symmetric Cryptology pp. 228--252 (2017)

\bibitem{MP12}
Micciancio, D., Peikert, C.: Trapdoors for lattices: Simpler, tighter, faster,
  smaller. In: {EUROCRYPT} 2012. LNCS, vol.~7237, pp. 700--718 (2012)

\bibitem{PS19}
Park, S., Sealfon, A.: It wasn't me! - repudiability and claimability of ring
  signatures. In: {CRYPTO} 2019. LNCS, vol. 11694, pp. 159--190 (2019)

\bibitem{RST01}
Rivest, R.L., Shamir, A., Tauman, Y.: How to leak a secret. In: {ASIACRYPT}
  2001. LNCS, vol.~2248, pp. 552--565 (2001)

\bibitem{NISTSHA3}
SHA, N.: Standard: Permutation-based hash and extendable-output functions
  (draft fips pub 202) (2014)

\bibitem{TKS+19}
Torres, W.A.A., Kuchta, V., Steinfeld, R., et~al.: Lattice ringct {V2.0} with
  multiple input and multiple output wallets. In: {ACISP} 2019. LNCS, vol.
  11547, pp. 156--175 (2019)

\bibitem{TSS+18}
Torres, W.A.A., Steinfeld, R., Sakzad, A., et~al.: Post-quantum one-time
  linkable ring signature and application to ring confidential transactions in
  blockchain (lattice ringct v1.0). In: {ACISP} 2018. LNCS, vol. 10946, pp.
  558--576 (2018)

\bibitem{TSS+20}
Torres, W.A.A., Steinfeld, R., Sakzad, A., et~al.: Post-quantum linkable ring
  signature enabling distributed authorised ring confidential transactions in
  blockchain. {IACR} Cryptol. ePrint Arch. p.~1121 (2020),
  \url{https://eprint.iacr.org/2020/1121}

\bibitem{WZZ18}
Wang, S., Zhao, R., Zhang, Y.: Lattice-based ring signature scheme under the
  random oracle model. Int. J. High Perform. Comput. Netw.  \textbf{11}(4),
  332--341 (2018)

\bibitem{ZHX+16}
Zhang, Y., Hu, Y., Xie, J., et~al.: Efficient ring signature schemes over
  {NTRU} lattices. Secur. Commun. Networks  \textbf{9}(18),  5252--5261 (2016)

\end{thebibliography}

\end{document}